\documentclass[11pt]{article}
\usepackage[utf8]{inputenc}
\usepackage[margin=1in]{geometry}
\usepackage{amsmath,amssymb,amsthm,mathtools}
\usepackage{booktabs,longtable,array}
\usepackage{enumitem}
\usepackage{microtype}
\usepackage{xcolor}
\usepackage{comment}
\usepackage{hyperref}
\usepackage{bbm}
\usepackage{tikz-cd}

\newcommand\mc[1]{\mathcal{#1}}

\newcommand{\cC}
{{\mc{C}}}

\newcommand{\cH}{{\mc{H}}}

\newcommand{\cP}{{\mc{P}}}

\newcommand{\cV}{{\mc{V}}}

\newcommand{\cX}{{\mc{X}}}

\newcommand{\cT}{{\mc{T}}}

\newcommand{\cU}{{\mc{U}}}

\newcommand{\PHone}{{\cP_1(\cH)}}
\newcommand{\UH}{{\cU(\cH)}}
\newcommand{\CH}{{\cC(\cH)}}

\newcommand{\F}{{\mathbb F}}
\newcommand{\R}{{\mathbb R}}
\newcommand{\C}{{\mathbb C}}

\newcommand{\RP}{{\mathbb{RP}}}

\newcommand{\one}{{\mathbbm{1}}}
\newcommand{\supp}{{\mathrm{supp}}}

\newcommand{\zz}{{\mathbb{Z}}}
\newcommand{\ra}{{\rightarrow}}

\newcommand{\Sp}{{\mathrm{Sp}}}
\newcommand{\Stab}{\mathrm{Stab}}

\newcommand{\Fp}{\mathbb{F}_p}
\newcommand{\Fq}{\mathbb F_q}

\newcommand{\U}{\mathrm U}
\newcommand{\coker}{\mathrm{coker}}
\newcommand{\im}{\mathrm{im}}
\newcommand{\rank}{\mathrm{rank}}
\newcommand{\tr}{\mathrm{tr}}

\newtheorem{definition}{Definition}
\newtheorem{lemma}{Lemma}

\newtheorem{proposition}{Proposition}
\newtheorem{theorem}{Theorem}

\title{A complete classification of the existence of finite-dimensional quantum solutions to inconsistent linear constraint systems}
\author{Markus Frembs\thanks{\href{mailto:markus.frembs@itp.uni-hannover.de}{markus.frembs@itp.uni-hannover.de}}\\
\footnotesize \textit{Institut f\"ur Theoretische Physik, Leibniz Universit\"at Hannover}\\
\footnotesize \textit{Appelstra\ss e 2, 30167 Hannover, Germany}}
\date{}

\begin{document}
\maketitle

\vspace{-.5cm}

\begin{abstract}
    Linear constraint systems (LCS) provide a compact algebraic language for nonlocal games and state-independent contextuality. In the binary case the Mermin--Peres square shows that inconsistent LCS can be solved by promoting variables to local Pauli operators, whereas a series of no-go results excludes analogous Pauli-, Clifford- and monomial unitary-based constructions at odd prime modulus. We first isolate the reason for this difference: for tensor-product quantum solutions, the obstruction to a classical solution of a LCS over $\mathbb{Z}_n$ decomposes into obstruction classes carried by the individual factors. As a consequence, a genuinely global obstruction cannot arise solely by tensoring locally unobstructed systems. The argument is specific to odd modulus; at even modulus the reordering phase need not vanish, as in Pauli-based examples.
    
    Given this distinction, we instead associate LCS to finite arrangements of rank-one projectors and resolutions of the identity in finite-dimensional vector spaces. Quantum solvability for such LCS is automatic and the search for a quantum-classical gap rests entirely on proving classical unsatisfiability of the underlying modular incidence problem of the arrangement. By relating such LCS with group-valued frame functions, we identify a family of examples in the work of [Harding, Jager, and Smith, Int. J. Theor. Phys. \textbf{44}, 539 (2005)]. To establish the (non)existence of classically unsatisfiable LCS over $\mathbb{Z}_n$ with quantum solutions in all dimensions $d$ and for all $n\in\mathbb{N}$, we further construct explicit examples for the cases $d=n$ prime. Our main result thus positively resolves the existence problem of such LCS beyond the binary case: such systems exist if and only if $d\geq 3$ and $\gcd(n,d)>1$. Moreover, we establish the existence of LCS with a quantum-classical gap over finite fields if and only if $d\geq 3$ and $p\mid d$. Finally, we return to qubits and compare our (symmetric) incidence-based constructions with the (group-)cohomological witnesses of contextuality.
\end{abstract}

\section{Introduction}\label{sec:intro}

Nonlocal games provide an operational paradigm for comparing classical and quantum correlations. Two or more spatially separated players receive questions from a referee and attempt to satisfy a prescribed winning relation without communicating after the questions are received. Entanglement can increase the winning probability, and in a pseudo-telepathy game it can make an otherwise impossible task perfectly winnable. Linear constraint system games represent a class of algebraically rigid examples: a system of linear equations is converted into a two-player game where one player is asked to determine the value of a variable chosen by a referee, and the other player the value of all variables in a given constraint, also given by a referee. Perfect quantum strategies are controlled by operator solutions to such linear constraint systems (LCS) \cite{CleveMittal,CleveLiuSlofstra}. LCS have played a prominent role in the operator-algebraic study of nonlocal correlations. In particular, Slofstra used linear constraint system games to separate tensor-product and commuting-operator models in variants of Tsirelson's problem \cite{SlofstraTsirelson,SlofstraNonclosed,ColadangeloStark2019}; the broader relation between Tsirelson's problem and Connes' embedding problem was developed in \cite{JungeEtAl}, and the latter was ultimately refuted as a consequence of the identity $\mathrm{MIP}^*=\mathrm{RE}$ \cite{MIPstarRE}.

From the single-system point of view, the same operator relations are paradigmatic examples of contextuality: mutually compatible sets of measurements define overlapping classical contexts which do not admit a global value assignment \cite{KochenSpecker,AbramskyBrandenburger}. The binary case is both the best understood and the source of the basic intuition. Cleve and Mittal formalised the corresponding binary constraint system games, and Cleve, Liu and Slofstra identified perfect commuting-operator strategies with representations of the solution group associated with a LCS \cite{CleveMittal,CleveLiuSlofstra}. The Mermin--Peres square and Mermin's pentagram define classically inconsistent LCS over $\mathbb F_2$ with quantum solutions in the form of $2$- and $3$-qubit Pauli operators \cite{Mermin,Peres}. Pauli solutions admit an efficient algebraic description \cite{Arkhipov2012,OkayRaussendorf2020,TrandafirLisonekCabello2022,MullerGiorgetti2025,AbramskyCercelescuConstantin}, but do not exhaust binary operator solutions. Recent work of Zhang, Pan and Liu gives an explicit family of binary constraint systems admitting perfect finite-dimensional strategies outside the Pauli/Clifford setting; their eight-dimensional realisation contains genuinely multi-qubit operators, including a measurement unitarily equivalent to a Toffoli gate, and is therefore not a tensor product of single-qubit Pauli measurements \cite{ZhangPanLiu}. The existence problem for operator solutions is thus genuinely broader than stabiliser theory even for qubits.

At odd prime modulus the familiar local Pauli mechanism breaks down. Qassim and Wallman proved that generalised Pauli solutions cannot produce a quantum-classical satisfiability gap for LCS over $\zz_n$ with $n$ odd and established a phase-commutation obstruction ruling out natural odd-modulus analogues of the Mermin-Peres square and Mermin's pentagram \cite{QassimWallman}. Frembs, Okay and Chung extended these results beyond the Pauli group, first towards diagonal Clifford operators and subsequently to local monomial unitaries, motivated by measurement-based quantum computation \cite{FOC2025,FOC2026}. Slofstra and Zhang in parallel established possibly infinite-dimensional operator solutions for large classes of graph-incidence systems over arbitrary moduli \cite{SlofstraZhang}, yet LCS over $\zz_n$ with $n$ odd and finite-dimensional quantum-classical gaps were not previously identified in the LCS literature, and Chung, Okay and Sikora even conjectured no such examples to exist \cite{ChungOkaySikora2024}. Here, we refute this conjecture.\footnote{Van Dobben de Bruyn (in joint work with Roberson) has recently announced a graph-theoretical argument for the existence of odd-prime LCS examples with a quantum-classical gap, although we remain unaware of any published version of their work \cite{vdB2025,vdB2026}. Here, we solve the complete $n,d$-classification for the existence of LCS with quantum-classical gaps and provide explicit constructions in all cases, including the minimal dimension $d=p$.\label{fn: prior work}} We first observe that the 2005 construction of Harding, Jager and Smith \cite{HardingJagerSmith2005} already implicitly supplies such examples for every $n$ in dimension $2n$, and extend it to a complete and constructive solution in Thm.~\ref{thm: qsols general}. In summary, we prove the following:
\begin{center}
    \textit{LCS with a finite-dimensional quantum-classical gap exist if and only if $d\geq 3$ and $\gcd(n,d)>1$.}
\end{center}
What is more, we also establish the existence of LCS with a quantum-classical gap in finite dimension over arbitrary finite fields $\Fq$ with $q=p^f$ if and only if $d\geq 3$ and $p\mid d$.\\

\textbf{Outline.} Sec.~\ref{sec: LCS} recalls the definition of linear constraint systems (LCS).

In Sec.~\ref{sec: locality}, we isolate a structural even--odd distinction which holds for all local unitaries, independent of previous restrictions to Pauli, Clifford or monomial gates: for $n$ odd an obstruction cannot arise only after tensoring locally unobstructed factors. Characteristic two is exceptional because a product of phase-commuting involutions can acquire fourth-root phases. This is precisely the local mechanism used by the Mermin--Peres square and Mermin's pentagram.

Given this distinction, in Sec.~\ref{sec: incidence reduction} we instead consider LCS associated with finite ray-context arrangements (see Def.~\ref{def: canonical LCS}). The operators $\zeta^P_n$ with $\zeta_n=e^{\frac{2\pi i}{n}}$ and $P$ a projector in the arrangement then automatically solve the constraints of the LCS over $\zz_n$ defined by the relations $\sum_{P\in C}x_P=1\mod n$ for every context $C$. This shifts the above existence problem to proving the corresponding incidence relations of an arrangement to be inconsistent. We show that the inconsistency of such a LCS is characterised by a finite incidence cycle, which further simplifies for symmetric arrangements.

Next, we relate the existence of quantum solutions to LCS associated with ray-context arrangements to group-valued frame functions. This allows us to identify a family of LCS with a quantum-classical gap in the work of Harding, Jager and Smith \cite{HardingJagerSmith2005} in Sec.~\ref{sec: qsols}. To fully solve the existence problem, we perform a computational search for symmetric ray-context arrangements in $\C^p$. From this we extract a generic construction for $p\geq 5$ and an explicit $43$ ray-$31$ context arrangement in dimension $3$, which turns out to be a close relative of Peres' and Penrose's Kochen-Specker sets \cite{Peres1991,Penrose1994}, as explicated in App.~\ref{app: d=p=3}; App.~\ref{app: conference LCS} contains additional constructions of LCS with quantum-classical gaps; App.~\ref{app: Pauli comparison} compares and embeds the $n$-qubit Pauli-based constructions of quantum solutions to LCS in the binary case within our approach; Sec.~\ref{sec: conclusion} summarises.

\section{Linear constraint systems and quantum solutions}\label{sec: LCS}

Throughout, $\zz_n:=\zz/n\zz$ for $n\ge2$. Relations between integer representatives are written $a\equiv b\pmod n$, whereas equations stated intrinsically over $\zz_n$ may simply be written as equalities in $\zz_n$. For prime $p$, we identify $\zz_p$ with the finite field $\Fp$, and use $\Fp$ when the field structure is relevant.

Let $2\leq n\in\mathbb{N}$,\footnote{The case $n=1$ corresponds with the trivial ring $\zz_1=\zz/\zz$, which trivialises any (classical solution to an) LCS defined over it; similarly, the $n$-torsion constraint $X^n=\one$ for $n=1$ trivialises quantum solutions to such LCS.} let $A\in\mathrm{Mat}_{M\times N}(\zz_n)$ and $b\in\zz^M_n$. This defines a system of linear equations or a linear constraint system (LCS). A classical solution to an LCS is a vector $x\in\zz^N_n$ satisfying $Ax\equiv b\pmod n$. If no such vector exists, the LCS is called \emph{inconsistent} or \emph{(classically) unsatisfiable}.

Let $\zeta_n=e^{2\pi i/n}$. A \emph{$d$-dimensional quantum solution to a LCS} is a family of unitary operators $X_1,\ldots,X_N\in\UH$ on a Hilbert space $\cH$ of dimension $d=\mathrm{dim}(\cH)$ with unit $\one$, such that:
\begin{align}\label{eq: qsol}
\begin{split}
    X^n_i&=\one\; ,\\
    X_iX_jX^{-1}_iX^{-1}_j&=\one\qquad\text{whenever $A_{ri}\neq 0\neq A_{rj}$ for some $r\in[M]$}\; ,\\
    \prod_iX_i^{A_{ri}}&=\zeta^{b_r}_n\one\qquad\forall r\in[M]\; .
\end{split}
\end{align}
Note that the order in the last constraint is immaterial because the operators in each row commute. For $n=2$ these are the operator-valued binary constraint systems of \cite{CleveMittal}. We say that a LCS has a \emph{(finite-dimensional) quantum-classical gap} if it admits a quantum but no classical solution. LCS with an infinite-dimensional quantum-classical gap are known to exist \cite{SlofstraZhang}, and have played a major role in the resolution of Tsirelson's problem \cite{SlofstraTsirelson,SlofstraNonclosed}. Here, we are interested in the existence of quantum solutions to classically unsatisfiable LCS on finite-dimensional Hilbert spaces. Such solutions yield a perfect strategy for the associated linear system game by sharing a maximally entangled state and using transposed observables on the second subsystem \cite{CleveMittal,CleveLiuSlofstra,ColadangeloStark2019}. LCS with a quantum-classical gap therefore provide purely algebraic and maximal winning strategies in nonlocal tasks. In the binary case, such examples are readily obtained in terms of $n$-qubit Pauli operators \cite{Mermin,Peres,Arkhipov2012,OkayRaussendorf2020,TrandafirLisonekCabello2022,MullerGiorgetti2025,AbramskyCercelescuConstantin}. Yet, a complete solution to the existence problem of finite-dimensional quantum solutions to LCS has remained open up to now.$^{\ref{fn: prior work}}$

We begin with the following easy observation.

\begin{proposition}\label{prop: determinant}
    Let $2\leq n\in\mathbb{N}$ and $d\in\mathbb{N}$. If a classically unsatisfiable LCS over $\zz_n$ has a finite-dimensional quantum solution in dimension $d$, then $\gcd(n,d)>1$.
\end{proposition}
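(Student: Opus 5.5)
We argue by contraposition: assuming $\gcd(n,d)=1$ and a $d$-dimensional quantum solution $X_1,\ldots,X_N$ to the LCS $Ax=b$, we produce a classical solution. The natural tool is the determinant, which turns operator identities into scalar identities in the group $\mu_n$ of $n$-th roots of unity. Since $X_i^n=\one$, taking determinants gives $\det(X_i)^n=1$, so $\det(X_i)=\zeta_n^{y_i}$ for a unique $y_i\in\zz_n$. The determinant is multiplicative, and the product over each row is well defined because the row operators commute. Taking determinants of the last relation in \eqref{eq: qsol} therefore gives, for every $r\in[M]$,
\begin{equation*}
    \zeta_n^{\sum_i A_{ri}y_i}=\prod_i\det(X_i)^{A_{ri}}=\det\bigl(\zeta_n^{b_r}\one\bigr)=\zeta_n^{d\,b_r}.
\end{equation*}
Equivalently, $Ay\equiv d\,b\pmod n$.

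It remains to divide by $d$. Because $\gcd(n,d)=1$, $d$ is a unit in $\zz_n$, so there is an integer $e$ with $de\equiv1\pmod n$. Setting $x:=e\,y\in\zz_n^N$ gives $Ax\equiv e\,d\,b\equiv b\pmod n$, which is a classical solution. This contradicts classical unsatisfiability, so $\gcd(n,d)>1$.

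There is little real obstacle here. The points needing care are mechanical:
\begin{itemize}[leftmargin=*]
    \item Negative exponents $A_{ri}$: we take integer representatives of the entries of $A$. This is harmless because $\det(X_i)^{-1}=\zeta_n^{-y_i}$ and everything depends only on residues modulo $n$.
    \item The exponent $y_i$ is well defined in $\zz_n$: any representative of $A_{ri}$ modulo $n$ gives the same operator power, since $X_i^n=\one$.
    \item Finite dimensionality is essential, because it is what makes the determinant available and produces the factor $d$.
\end{itemize}
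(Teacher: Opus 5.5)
Your proposal is correct and follows the paper's own proof: write $\det(X_i)=\zeta_n^{s_i}$, take determinants of the row relations to obtain $As\equiv d\,b\pmod n$, and when $\gcd(n,d)=1$ take $d^{-1}s$ as a classical solution. Your remarks on negative exponents and on the well-definedness of the exponents modulo $n$ are correct details that the paper leaves implicit.
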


\begin{proof}
    Let $\zeta^{s_i}_n:=\det(X_i)$ with $s_i\in\zz_n$ by $n$-torsion. Taking determinants in the constraint equation of Eq.~(\ref{eq: qsol}) gives $As=db$. Hence, if $\gcd(n,d)=1$, $d$ is invertible and $d^{-1}s$ a classical solution.
\end{proof}

Prop.~\ref{prop: determinant} thus constrains our search for LCS with a quantum-classical gap over $\zz_n$. In our main result, Thm.~\ref{thm: qsols general}, will prove a converse to Prop.~\ref{prop: determinant}: for all pairs $n,d$ with $\gcd(n,d)>1$ and $d\geq 3$ there exists a classically unsatisfiable LCS over $\zz_n$ which admits a quantum solution.

For LCS associated to ray-context arrangements (see Def.~\ref{def: canonical LCS}) this condition has an elementary interpretation: since every context contains $d$ rays, the constant assignment $d^{-1}$ solves every constraint equation. Before turning to this construction, we first rule out an alternative construction in terms of tensor products of local unitaries, as in the case of $n$-qubit Pauli operators.

\section{Quantum solutions from tensor product operators}\label{sec: locality}

The no-go theorems of Ref.~\cite{QassimWallman,FOC2025,FOC2026} rule out a quantum-classical gap for LCS over $\zz_n$ with $n$ odd and quantum solutions consisting of special classes of local unitary operators. Our first result rules out this approach for arbitrary tensor-product unitaries.

Fix a tensor decomposition $\cH=\bigotimes_{k=1}^K\cH_k$ with $d_k=\dim(\cH_k)$ and suppose that
\begin{equation}\label{eq: tensor product operators}
    X_i=\bigotimes_{k=1}^KU_{i,k}\; ,
\end{equation}
for every $i\in[N]$ and $U_{i,k}\in\cU(\cH_k)$ define a quantum solution to some LCS.

Note that the factors are defined up to scalars only. We first choose a convenient rescaling.

\begin{lemma}\label{lm: p-torsion gauge}
    Let $X=\bigotimes_{k=1}^KU_k$ with $X^n=\one$, then there exist unitaries $U'_k\sim U_k$ such that $X=\bigotimes_{k=1}^KU'_k$ and $U'^n_k=\one$ for every $k\in[K]$.
\end{lemma}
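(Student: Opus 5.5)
We use the elementary fact that a tensor product of operators determines its factors up to scalars. From $X^n=\one$ we get $\bigotimes_k U_k^n=\one=\bigotimes_k\one_{\cH_k}$. If $A=\bigotimes_k A_k$ with every $A_k$ nonzero equals $\bigotimes_k B_k$, then $A_k=\lambda_kB_k$ for scalars $\lambda_k$ with $\prod_k\lambda_k=1$. This follows by induction on $K$. For $K=2$, apply a functional $\phi$ on $\mathrm{End}(\cH_2)$ with $\phi(A_2)\neq0$ in the second slot; this gives $\phi(A_2)A_1=\phi(B_2)B_1$, so $A_1$ and $B_1$ are proportional, and the same argument applies to the second factor. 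The general case follows by grouping the factors $2,\dots,K$ together. Hence $U_k^n=c_k\one$ for some $c_k\in\C$, with $|c_k|=1$ because $U_k$ is unitary, and $\prod_k c_k=1$.

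Next we rescale the factors. For each $k$ choose an $n$-th root $\mu_k$ with $\mu_k^n=c_k^{-1}$, and set $U'_k:=\mu_kU_k$. Then $U_k'^n=\one$ and $U'_k\sim U_k$. However, $\bigotimes_kU'_k=\big(\prod_k\mu_k\big)X$. Since $\big(\prod_k\mu_k\big)^n=\prod_kc_k^{-1}=1$, the product $\omega:=\prod_k\mu_k$ is an $n$-th root of unity, though not necessarily $1$.

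To remove $\omega$, replace $\mu_1$ by $\omega^{-1}\mu_1$, which is allowed because $\omega^{-1}$ is itself an $n$-th root of unity. After this change we still have $U_1'^n=\one$, and now $\bigotimes_kU'_k=X$ exactly. This step is where the choices of $n$-th roots must be coordinated so the total phase is $1$, and it is the only delicate point. It works because the ambiguity in each root lies in the same group $\mu_n$ that contains the discrepancy $\omega$, so no divisibility condition on $n$ or $d_k$ is needed.

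The main technical step is therefore the uniqueness-up-to-scalars of tensor factorisations. Everything else is bookkeeping of phases in $\mu_n$.
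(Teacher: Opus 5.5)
Your proof is correct and follows the same route as the paper: deduce $U_k^n=c_k\one$ with $\prod_k c_k=1$, rescale by $n$-th roots $\mu_k$ of $c_k^{-1}$, and absorb the residual $n$-th root of unity $\prod_k\mu_k$ into one factor. Beyond the paper, you justify that tensor factors are determined up to scalars, and you note $|c_k|=1$ so that the rescaled factors remain unitary; the paper takes both points for granted.
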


\begin{proof}
    From $\bigotimes_{k=1}^KU^n_k=\one$ it follows that $U^n_k=\lambda_k\one$ for every $k$ with $\prod_{k=1}^K\lambda_k=1$. Now, choose $n$-th roots $\mu^n_k=\lambda_k^{-1}$. Then the product of the $\mu_k$ is an $n$-th root of unity, and multiplying any $\mu_k$ by the inverse of their product preserves all $n$-th powers and enforces $\prod_k\mu_k=1$.
\end{proof}

Next, we note that commutation of tensor product operators as in Eq.~(\ref{eq: tensor product operators}) implies commutation up to phase for the respective local unitaries in every tensor factor.

\begin{lemma}\label{lm: local phase commutation}
    With Lm.~\ref{lm: p-torsion gauge}, if $X_iX_jX^{-1}_iX^{-1}_j=\one$, then there exist $c_{ij}^{(k)}\in\zz_n$ for all $k\in[K]$ such that
    \begin{align*}
        U_{i,k}U_{j,k}
        =\zeta^{c_{ij}^{(k)}}_nU_{j,k}U_{i,k}\; ,\qquad
        \sum_kc_{ij}^{(k)}
        =0\; .
    \end{align*}
\end{lemma}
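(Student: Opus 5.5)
We work in the gauge of Lm.~\ref{lm: p-torsion gauge}, so that $U_{i,k}^n=\one$ and $U_{j,k}^n=\one$ for all $k$. Writing $X_iX_jX_i^{-1}X_j^{-1}=\bigotimes_k U_{i,k}U_{j,k}U_{i,k}^{-1}U_{j,k}^{-1}=\one$, the tensor product of operators equals the identity only if each factor is a scalar. Hence $U_{i,k}U_{j,k}U_{i,k}^{-1}U_{j,k}^{-1}=\lambda_k\one$ for scalars $\lambda_k\in\C$ with $\prod_k\lambda_k=1$. This last condition comes from the standard fact that if $\bigotimes_k A_k=\one$ then $A_k=\lambda_k\one$ with $\prod_k\lambda_k=1$, as already used in the proof of Lm.~\ref{lm: p-torsion gauge}.

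Next we show that each $\lambda_k$ is an $n$-th root of unity. From $U_{i,k}U_{j,k}=\lambda_kU_{j,k}U_{i,k}$ we get by induction $U_{i,k}U_{j,k}^m=\lambda_k^mU_{j,k}^mU_{i,k}$. Taking $m=n$ and using $U_{j,k}^n=\one$ yields $U_{i,k}=\lambda_k^nU_{i,k}$. Since $U_{i,k}$ is invertible, $\lambda_k^n=1$. So $\lambda_k=\zeta_n^{c_{ij}^{(k)}}$ for a unique $c_{ij}^{(k)}\in\zz_n$. This is the one place where the gauge fixing matters. Without it, $U_{j,k}^n$ would only be a scalar, and that scalar commutes through anyway. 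So in fact $\lambda_k^n=1$ holds for any rescaling, and the gauge is merely convenient; note that rescaling does not change the group commutator at all.

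Finally, $\prod_k\lambda_k=1$ translates into $\zeta_n^{\sum_kc_{ij}^{(k)}}=1$, i.e.\ $\sum_kc_{ij}^{(k)}=0$ in $\zz_n$. The only mildly delicate step is justifying the scalar decomposition of a tensor product equal to $\one$. One can do this by noting that $A_1\otimes B=\one$ forces $A_1\otimes B$ to commute with $E\otimes\one$ for every matrix unit $E$, so $A_1$ is central, hence scalar, and then by inducting on $K$.
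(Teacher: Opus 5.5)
Your proof is correct and follows the paper's argument. Each local commutator must be a scalar $\lambda_k$ with $\prod_k\lambda_k=1$, and iterating the phase-commutation relation against the $n$-torsion of one factor forces $\lambda_k^n=1$. Your extra remarks are also correct: the justification that $\bigotimes_k A_k=\one$ forces scalar factors, and the observation that the gauge of Lm.~\ref{lm: p-torsion gauge} is not really needed, since $U_{j,k}^n$ is already scalar and commutators are invariant under rescaling.
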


\begin{proof}
    We have $\bigotimes_{k=1}^KU_{i,k}U_{j,k}=\bigotimes_{k=1}^KU_{j,k}U_{i,k}$ which implies $U_{i,k}U_{j,k}=\lambda_kU_{j,k}U_{i,k}$ for scalars $\lambda_k$ with $\prod_{k=1}^K\lambda_k=1$. Repeatedly using that $U_{i,k}^n=\one$ by Lm.~\ref{lm: p-torsion gauge} gives $\lambda^n_k=1$ for all $k\in[K]$.
\end{proof}

The previous two lemmata hold for any integer $n\in\mathbb{N}$, the following lemma is special to $n$ odd.

\begin{lemma}\label{lm: trivial word power}
    Let $2\leq n\in\mathbb{N}$ be odd, $U^n_j=\one$ and $U_iU_j=\zeta^{c_{ij}}_nU_jU_i$ for $c_{ij}\in\zz_n$. Then $(U_1\cdots U_l)^n=\one$.
\end{lemma}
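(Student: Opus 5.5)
We argue by reordering. Set $W=U_1\cdots U_l$ and write $W^n$ as a word of length $nl$ in the letters $U_1,\dots,U_l$, namely $n$ consecutive copies of $U_1U_2\cdots U_l$. Each $U_j$ appears exactly $n$ times. We sort this word into the normal form $U_1^nU_2^n\cdots U_l^n$ by adjacent transpositions. By hypothesis, moving $U_i$ past $U_j$ (turning $\cdots U_iU_j\cdots$ into $\cdots U_jU_i\cdots$) only multiplies by the scalar $\zeta_n^{c_{ij}}$. Since scalars are central, the final phase depends only on which pairs of letters get swapped. So
$$W^n=\zeta_n^{\,e}\,U_1^n\cdots U_l^n=\zeta_n^{\,e}\one,$$
using $U_j^n=\one$, for some exponent $e\in\zz_n$ that we can compute by counting inversions.

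The key step is to determine $e$. In the word $(U_1\cdots U_l)^n$, fix indices $i<j$. An occurrence of $U_j$ in copy $a$ precedes an occurrence of $U_i$ in copy $b$ exactly when $a<b$. There are $\binom{n}{2}=n(n-1)/2$ such pairs. Each of these inverted pairs must be swapped exactly once during sorting, contributing the phase for moving $U_j$ past $U_i$, namely $\zeta_n^{c_{ji}}$ with $c_{ji}=-c_{ij}$. No other pairs are swapped, and letters with equal index are never swapped. Hence
$$e=\binom{n}{2}\sum_{i<j}c_{ji}.$$
For $n$ odd, $(n-1)/2$ is an integer, so $\binom{n}{2}=n\cdot\frac{n-1}{2}\equiv 0\pmod n$. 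Therefore $e\equiv 0$ and $W^n=\one$.

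I expect the main obstacle to be bookkeeping: we must check that the phase for each swapped pair is well defined irrespective of the sorting order. This follows because each inverted pair of occurrences is transposed exactly once, through a sequence of adjacent swaps, for example bubble sort. Each adjacent swap of $U_aU_b$ with $a\neq b$ contributes exactly $\zeta_n^{c_{ab}}$, and the relation $c_{ba}=-c_{ab}$ follows from the hypothesis applied in both orders. Only the parity-type vanishing of $\binom{n}{2}$ modulo $n$ uses that $n$ is odd. For even $n$, $\binom{n}{2}\equiv n/2\pmod n$, which gives the sign $\pm1$ seen in Pauli examples.
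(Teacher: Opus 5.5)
Your proof is correct and takes the same route as the paper: reorder $(U_1\cdots U_l)^n$ into $U_1^n\cdots U_l^n$, note that each pair $i<j$ is transposed exactly $\binom{n}{2}$ times, and use $\binom{n}{2}\equiv 0 \pmod n$ for odd $n$ together with $U_j^n=\one$. Your explicit inversion count, which gives $e=\binom{n}{2}\sum_{i<j}c_{ji}$, spells out the bookkeeping that the paper's one-line argument leaves implicit.
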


\begin{proof}
    Reordering the $n$ copies of $U_1\cdots U_l$ yields a phase with exponent proportional to $\binom{n}{2}$. Since $\binom{n}{2}=\frac{n(n-1)}2\equiv0\pmod n$, that phase is trivial, and the result follows with $U^n_j=\one$.
\end{proof}

For every row $r\in[M]$ in the LCS and with respect to a fixed variable order, define the row word
\begin{align*}
    W_r^{(k)}
    =\prod_{i=1}^NU_{i,k}^{A_{ri}}\; .
\end{align*}

\begin{theorem}\label{thm: trivial localisation}
    Let $2\leq n\in\mathbb{N}$ be odd and $\{X_i\}_{i=1}^N$ be a finite-dimensional quantum solution of the LCS $Ax=b\pmod n$ as in Eq.~(\ref{eq: tensor product operators}). Then there exist unique vectors $\beta^{(k)}\in\zz^M_n$ such that
    \begin{align}\label{eq: local beta}
        W_r^{(k)}
        &=\zeta^{\beta_r^{(k)}}_n\one\quad\forall r\in[M]&
        b&=\sum_k\beta^{(k)}
    \end{align}
    (in the local $n$-torsion gauge of Lm.~\ref{lm: p-torsion gauge}). Moreover, the classes $[\beta^{(k)}]\in\coker(A):=\zz^M_n/\im(A)$ are invariant under rephasing $U_{i,k}\mapsto\zeta^{s_i^{(k)}}_nU_{i,k}$ with $\sum_ks_i^{(k)}\equiv 0\pmod n$, and obey the relation
    \begin{equation}\label{eq: class sum}
        [b]=\sum_k[\beta^{(k)}]\; .
    \end{equation}
    In particular, if the LCS is classically unsatisfiable, then at least one tensor factor carrying a nontrivial obstruction class
    satisfies $\gcd(d_k,n)>1$ for $d_k=\dim(\cH_k)$.
\end{theorem}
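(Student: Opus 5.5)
The plan is to work throughout in the local $n$-torsion gauge of Lm.~\ref{lm: p-torsion gauge}. There we may assume $U_{i,k}^n=\one$ for all $i,k$; any two such gauges differ by a rephasing $U_{i,k}\mapsto\zeta_n^{s_i^{(k)}}U_{i,k}$ with $\sum_k s_i^{(k)}\equiv0$.

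\textbf{Step 1: each row word is an $n$-torsion scalar.} Fix a row $r$. All $X_i$ with $A_{ri}\neq0$ pairwise commute, so Lm.~\ref{lm: local phase commutation} shows that the local factors $U_{i,k}$ of these variables pairwise commute up to $n$-th roots of unity. Since products of tensor products factor slotwise,
\[
\bigotimes_k W_r^{(k)}=\prod_i X_i^{A_{ri}}=\zeta_n^{b_r}\one .
\]
A tensor product of operators equals a scalar only if each factor is a scalar, so $W_r^{(k)}=\lambda_{r,k}\one$ with $\prod_k\lambda_{r,k}=\zeta_n^{b_r}$.

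To see that each $\lambda_{r,k}$ is an $n$-th root of unity, write $U_{i,k}^{A_{ri}}$ as a word of $A_{ri}$ copies of $U_{i,k}$, taking representatives $A_{ri}\in\{0,\dots,n-1\}$. All letters of this word have order dividing $n$ and pairwise phase-commute. Lm.~\ref{lm: trivial word power} then gives $(W_r^{(k)})^n=\one$, which is where oddness of $n$ enters. Hence $\lambda_{r,k}=\zeta_n^{\beta_r^{(k)}}$ with $\sum_k\beta^{(k)}_r\equiv b_r$.

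Uniqueness is immediate: for a fixed gauge and a fixed variable order, $W_r^{(k)}$ is a fixed operator, so $\beta^{(k)}_r\in\zz_n$ is determined. I expect the main care to be needed here, in checking that Lm.~\ref{lm: trivial word power} genuinely applies to words with repeated letters. It does, since a letter trivially commutes with itself, with phase $c=0$.

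\textbf{Step 2: invariance and the class relation.} Under a rephasing $U_{i,k}\mapsto\zeta_n^{s_i^{(k)}}U_{i,k}$ the scalars pull out of the word, so $\beta^{(k)}\mapsto\beta^{(k)}+As^{(k)}$. The class $[\beta^{(k)}]\in\coker(A)$ is therefore unchanged. The constraint $\sum_k s^{(k)}_i\equiv0$ keeps $\sum_k\beta^{(k)}=b$ consistent, and by the remark above it also covers the residual freedom in Lm.~\ref{lm: p-torsion gauge}. Projecting $b=\sum_k\beta^{(k)}$ to $\coker(A)$ gives Eq.~(\ref{eq: class sum}).

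\textbf{Step 3: the final claim.} If the LCS is classically unsatisfiable, then $b\notin\im(A)$, so $[b]\neq0$ and by Eq.~(\ref{eq: class sum}) some $[\beta^{(k)}]\neq0$. For such a $k$, mimic Prop.~\ref{prop: determinant} locally. By $n$-torsion, $\det U_{i,k}=\zeta_n^{t_i}$ for some $t_i\in\zz_n$. Taking determinants of $W_r^{(k)}=\zeta_n^{\beta_r^{(k)}}\one$ on $\cH_k$ gives
\[
At=d_k\beta^{(k)} .
\]
If $\gcd(d_k,n)=1$, then $\beta^{(k)}=A(d_k^{-1}t)\in\im(A)$, so its class is trivial. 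By contraposition, any factor carrying a nontrivial class has $\gcd(d_k,n)>1$.
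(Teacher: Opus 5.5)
Your proposal is correct and follows essentially the same route as the paper. The row product factorises slotwise, which forces each $W_r^{(k)}$ to be scalar, and Lm.~\ref{lm: trivial word power} (where oddness of $n$ enters) makes that scalar $n$-torsion. A rephasing shifts $\beta^{(k)}$ by $As^{(k)}$, and the local determinant identity $At=d_k\beta^{(k)}$ kills the class whenever $\gcd(d_k,n)=1$. Your extra remarks, that repeated letters in a word are harmless and that the residual gauge freedom is exactly the normalised rephasings, are also correct and make explicit points the paper leaves implicit.
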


\begin{proof}
    From $\bigotimes_kW_r^{(k)}=\zeta^{b_r}_n\one$ it follows that every unitary $W_r^{(k)}$ is a scalar multiple of the identity on the $k$-th tensor factor. The unitaries in one row phase-commute by Lm.~\ref{lm: local phase commutation}, hence, $(W_r^{(k)})^n=\one$ by Lm.~\ref{lm: trivial word power}, proving the first relation in Eq.~\eqref{eq: local beta}; the second follows by tensoring the local relations.
    
    Moreover, a normalised rephasing,  $U_{i,k}\mapsto\zeta^{s_i^{(k)}}_nU_{i,k}$ changes $\beta^{(k)}$ by $As^{(k)}$, hence, leaves its cokernel class invariant, and Eq.~\eqref{eq: class sum} thus follows from Eq.~\eqref{eq: local beta}.

    Finally, since $U_{i,k}^n=\one$, we have $\det(U_{i,k})=\zeta_n^{s_i^{(k)}}$ with $s_i^{(k)}\in\zz_n$ for all $i\in[N]$ and $k\in[K]$. Taking determinants in Eq.~(\ref{eq: local beta}) gives $A s^{(k)}=d_k\beta^{(k)}$ in $\zz_n^M$ and thus $d_k[\beta^{(k)}]=0$ in $\mathrm{coker}(A)$. If $\gcd(d_k,n)=1$, multiplication by $d_k$ is invertible over $\zz_n$, hence, $[\beta^{(k)}]=0$. On the other hand, classical unsatisfiability means $[b]\neq 0$, hence, the result follows with Eq.~(\ref{eq: class sum}).
\end{proof}

Thm.~\ref{thm: trivial localisation} does not imply that odd-modulus solutions cannot be written as simple tensors: any solution may be tensored with identities, and for odd $n$ one may also distribute the right-hand side among nontrivial tensor copies. The point is more specific. A classical obstruction cannot be created purely by cancellation between tensor factors which are individually trivial in $\coker(A)$.

Lm.~\ref{lm: trivial word power} fails at $n=2$ because $\binom{2}{2}=1$. If $U^2=V^2=I$ and $UV=-VU$, then $(UV)^2=-\one$. Thus a local row word formed from involutions can carry a fourth-root phase. In the Mermin--Peres square the two local qubits contribute phases $\pm i$, whose products give the global signs $\pm1$. There is therefore no factorised $\mathbb F_2$-valued decomposition analogous to Eq.~\eqref{eq: class sum}. This is the elementary parity mechanism underlying the exceptional binary local construction (see also App.~\ref{app: Pauli comparison}).

The conclusion should not be confused with a classification of binary operator solutions. Zhang, Pan and Liu construct finite-dimensional binary LCS strategies outside the Pauli and Clifford classes, their explicit solutions use genuinely multiqubit, non-tensorial involutions \cite{ZhangPanLiu}. In the opposite direction, Abramsky, Cercelescu and Constantin prove that completed scalar commutation groups admit faithful scalar-preserving Pauli representations, but their result does not identify an arbitrary LCS solution group with such a completed commutation group \cite{AbramskyCercelescuConstantin}. Thus these works delimit, rather than close, the binary reduction problem. Thm.~\ref{thm: trivial localisation} instead isolates what changes for factorised odd-modulus solutions, independent of any restriction on local operators.

\section{Quantum solutions from ray-context arrangements}\label{sec: incidence reduction}

Thm.~\ref{thm: trivial localisation} suggests to construct quantum solutions in a different way. We take inspiration from state-independent contextual arrangements, specifically arrangements of projectors on $\C^d$.

\begin{definition}\label{def: canonical LCS}
    Let $\{v\}_{i=1}^N$ be a set of vectors in $\C^d$, let $P_v=[v]$ denote the corresponding rank-one projector and let $\cC$ be a subset of all (maximal) contexts, that is, sets $C=\{P_{v_1},\ldots,P_{v_d}\}$ such that $P_{v_i}P_{v_j}=\delta_{ij}P_{v_i}$ and $\sum_{v\in C}P_v=\one$. $(\cP,\cC)$ is called a \emph{ray-context arrangement (in dimension $d$)}.
    
    Moreover, the \emph{LCS over $\zz_n$ associated with $(\cP,\cC)$} is defined by
    \begin{equation}\label{eq: canonical LCS}
        \sum_{v\in C}x_v\equiv 1\pmod n\; ,
        \qquad\forall C\in\cC\; .
    \end{equation}
\end{definition}

Throughout, all ray-context arrangements are finite unless stated otherwise.

\begin{proposition}\label{prop: canonical qsol}
    For every integer $n\in\mathbb{N}$ and ray-context arrangements $(\cP,\cC)$, the unitary operators
    \begin{equation}\label{eq: quantum solution to canonical LCS}
        X_v=\zeta^{P_v}_n
        =(\one-P_v)+\zeta_nP_v
        =\one+(\zeta_n-1)P_v
    \end{equation}
    form a $d$-dimensional quantum solution of the LCS associated with $(\cP,\cC)$ in Eq.~\eqref{eq: canonical LCS}.
\end{proposition}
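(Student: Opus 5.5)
We verify the three conditions of Eq.~\eqref{eq: qsol} directly for $X_v=\one+(\zeta_n-1)P_v$, using only that each $P_v$ is an orthogonal projector and that the projectors in a context are pairwise orthogonal and sum to $\one$. There is no real obstacle; the argument is a short functional-calculus computation.

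\emph{Unitarity and $n$-torsion.} Since $P_v$ is an orthogonal projector, $\C^d=\im(P_v)\oplus\ker(P_v)$ orthogonally. On this decomposition $X_v$ acts as $\zeta_n$ on $\im(P_v)$ and as $1$ on $\ker(P_v)$, so $X_v=(\one-P_v)+\zeta_nP_v$ is the spectral form appearing in Eq.~\eqref{eq: quantum solution to canonical LCS}. Hence $X_v$ is unitary, because its eigenvalues have modulus one and its eigenspaces are orthogonal. Moreover $X_v^n=(\one-P_v)+\zeta_n^nP_v=\one$, giving the first relation in Eq.~\eqref{eq: qsol}.

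\emph{Commutation.} Suppose $v,w$ lie in a common context $C\in\cC$ with $v\neq w$. Then $P_vP_w=0=P_wP_v$, so
\[
X_vX_w=\one+(\zeta_n-1)(P_v+P_w)=X_wX_v .
\]
If $v=w$ there is nothing to prove. So the operators in each row of the LCS commute, which is the second relation.

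\emph{Constraint.} The row of context $C$ has coefficient $1$ exactly on the variables $v\in C$ and right-hand side $b_C=1$. Expanding $\prod_{v\in C}X_v=\prod_{v\in C}\bigl(\one+(\zeta_n-1)P_v\bigr)$, every cross term contains a product $P_vP_w$ with $v\neq w$ in $C$, and these vanish by pairwise orthogonality. The product therefore equals $\one+(\zeta_n-1)\sum_{v\in C}P_v$. Since $\sum_{v\in C}P_v=\one$, this is $\zeta_n\one=\zeta_n^{b_C}\one$, which is the last relation of Eq.~\eqref{eq: qsol}.

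Finally, the $X_v$ act on $\C^d$, so this is a $d$-dimensional quantum solution, and nothing in the argument depends on $n$.
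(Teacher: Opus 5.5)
Your proof is correct and follows the paper's argument: the spectral form gives $X_v^n=\one$, and pairwise orthogonality of the projectors within a context collapses the product to $\one+(\zeta_n-1)\sum_{v\in C}P_v=\zeta_n\one$. You additionally spell out unitarity and the commutation relation, which the paper leaves implicit.
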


\begin{proof}
    The spectral decomposition in Eq.~\eqref{eq: quantum solution to canonical LCS} gives $X^n_v=(\one-P_v)+\zeta^n_nP_v=\one$, and since within a context the projectors are mutually orthogonal,
    \begin{equation*}
        \prod_{v\in C}X_v
        =\prod_{v\in C}\zeta^{P_v}_n
        =\zeta^{\sum_{v\in C}P_v}_n
        =\zeta_n\one\; .\qedhere
    \end{equation*}
\end{proof}

Notably, there are various possible generalisations of Def.~\ref{def: canonical LCS} and Prop.~\ref{prop: canonical qsol}. In particular, the construction extends verbatim to projectors of arbitrary rank. (We will use this more general projector formulation only in the binary case in App.~\ref{app: Pauli comparison}.) More generally, one may consider non-maximal contexts and arbitrary unitaries. To settle the general existence problem of LCS with a quantum-classical gap, it will be sufficient to restrict to ray-context arrangements.

By construction, quantum solvability for the LCS associated to a set of rays and (a subset of) their contexts is automatic. This shifts the focus entirely to the problem of finding ray-context arrangements for which the associated LCS is classically unsatisfiable.

\subsection{General incidence obstructions}

In this and the next section we derive the condition for a LCS not to admit a classical solution. Our arguments apply to arbitrary LCS, independent of whether they admit a  quantum solution (specifically in the form of a ray-context arrangement). Still, for concreteness we will at times specialise to LCS associated with (and labelled by) ray-context arrangements.

A LCS is determined by its incidence data. We encode it via an incidence complex with
\begin{align*}
    \cC_1=\zz_n[\cC]\; ,\qquad
    \cC_0=\zz_n[\cX]\; ,
\end{align*}
with elementary chains denoted $[C]$ and $[X]$, respectively. In the case of ray-context arrangement, $\cX=\cP\subset\cP_1(\C^d)$ and thus $\cC_0=\zz_n[\cP]$. If $A$ is the operator-context incidence matrix, define
\begin{align*}
    \partial=A^T:\cC_1\to\cC_0\; .
\end{align*}
The algebraic dual map is $\delta=\partial^*=A:C_0^*\to C_1^*$. We regard $b$ as the $1$-cochain $\beta_b(\lambda)=\langle b,\lambda\rangle=b^T\lambda$ for $\lambda\in\cC_1$ and $\langle\cdot,\cdot\rangle$ the natural pairing between $C^*_1$ and $C_1$ (and similarly for $C^*_0$ and $C_0$).

\begin{proposition}\label{prop: cycle criterion}
    For an LCS $Ax\equiv b\pmod n$, the following are equivalent:
    \begin{enumerate}[label=(\roman*)]
        \item $Ax\equiv b\pmod n$ has no classical solution, equivalently $0\neq [b]\in\mathrm{coker}(A)=\zz^M_n/\mathrm{im}(A)$;
        \item there is $\lambda\in\ker(A^T)$ with $\beta_b(\lambda)=b^T\lambda\not\equiv0\pmod n$, equivalently $[\beta_b]\neq0$ in $\cC_1^*/\im(\delta)$.
    \end{enumerate}
\end{proposition}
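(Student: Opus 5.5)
This is Pontryagin-type duality for finite $\zz_n$-modules: a vector lies in $\im(A)$ exactly when it is annihilated by every $\lambda\in\ker(A^T)$. The implication (ii)$\Rightarrow$(i) is immediate. If $Ax=b$, then for every $\lambda\in\ker(A^T)$ we get $b^T\lambda=x^TA^T\lambda=0$. So a $\lambda$ with $b^T\lambda\neq0$ rules out a classical solution. The reformulation $[\beta_b]\neq0$ in $\cC_1^*/\im(\delta)$ is a restatement of the same condition, since $\im(\delta)=\im(A)$ under the identification $\cC_1^*\cong\zz_n^M$.

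The substance lies in (i)$\Rightarrow$(ii). We must show that $\im(A)$ equals the annihilator of $\ker(A^T)$ in $\zz_n^M$ with respect to the standard pairing. Since $\zz_n$ is not a field for composite $n$, plain dimension counting is unavailable. I would use the Smith normal form over $\zz$.

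Lift $A$ to an integer matrix and write $A=SDT$ with $S\in GL_M(\zz)$, $T\in GL_N(\zz)$, and $D$ diagonal with entries $d_1,\dots,d_r$ (padded by zeros). These unimodular matrices remain invertible mod $n$. Substituting $b'=S^{-1}b$ and $\lambda'=S^T\lambda$ preserves the pairing, $b^T\lambda=b'^T\lambda'$, and identifies $\ker(A^T)$ with $\ker(D^T)$ and $\im(A)$ with $S\,\im(D)$. It therefore suffices to treat diagonal $D$, where everything splits coordinatewise.

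In the coordinate with diagonal entry $d_i$ (with $d_i=0$ beyond the rank), put $g_i=\gcd(d_i,n)$. Then $\im(d_i\cdot)=g_i\zz_n$, and $\ker(d_i\cdot)=\{\mu:d_i\mu\equiv0\}=(n/g_i)\zz_n$. An element $c\in\zz_n$ satisfies $c\mu\equiv0$ for all $\mu\in(n/g_i)\zz_n$ if and only if $c\,n/g_i\equiv0\pmod n$, that is, if and only if $g_i\mid c$. This is exactly membership in $g_i\zz_n$.

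Hence, if $b'\notin\im(D)$, some coordinate $b'_i\notin g_i\zz_n$. Taking $\lambda'=(n/g_i)e_i\in\ker(D^T)$ gives $b'^T\lambda'=b'_i\,n/g_i\not\equiv0$. Transforming back, $\lambda=(S^T)^{-1}\lambda'$ is the required witness.

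The only delicate point is the passage to Smith form over $\zz$ and the check that the pairing and kernels transform correctly mod $n$; the rest is elementary. Alternatively, one can invoke the general fact that $\zz_n$ is an injective $\zz_n$-module (quasi-Frobenius), so the annihilator of the annihilator of a submodule is the submodule itself. I would mention this as a remark but keep the Smith normal form argument as the proof.
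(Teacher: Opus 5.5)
Your proof is correct, but it takes a more hands-on route than the paper.

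The paper settles the key step in one line by appealing to Pontryagin duality. It notes that the standard pairing on $\zz_n^M$ is nondegenerate and then asserts $\im(\delta)=(\ker\partial)^\perp$. This is the double-annihilator (exactness) property of the character functor on finite abelian groups, and the paper uses it without proof.

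Your Smith normal form argument proves exactly this identity $\im(A)=\ker(A^T)^\perp$ from scratch. You correctly check that the unimodular change of coordinates preserves the pairing and carries $\ker(A^T)$ to $\ker(D^T)$ and $\im(A)$ to $S\,\im(D)$. The problem then reduces to the elementary fact that in $\zz_n$ the annihilator of $(n/g)\zz_n$ is $g\zz_n$ for $g=\gcd(d_i,n)$. The convention $d_i=0$, $g_i=n$ on the padded coordinates handles the case $M>N$ correctly.

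What your route buys:
\begin{itemize}
\item It is self-contained.
\item It produces an explicit witness $\lambda=(S^T)^{-1}(n/g_i)e_i$ computable from the Smith form.
\item It makes visible that only the gcds of the elementary divisors with $n$ matter. For prime $n$ this collapses to the usual rank criterion.
\end{itemize}

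What the paper's route buys:
\begin{itemize}
\item It is brief.
\item It is more general: the same duality statement holds for any homomorphism of finite abelian groups, with no integer lifts or normal forms.
\end{itemize}

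Your closing remark about $\zz_n$ being self-injective is essentially the abstract, module-theoretic version of the paper's duality argument.
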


\begin{proof}
    The statement follows by duality. More precisely, Pontryagin duality identifies $\zz^M_n$ with its character group $\widehat{\zz^M_n}$ via $y\mapsto\chi_y$ with $\chi_y(\lambda):=\zeta^{y^T\lambda}_n$, and this pairing is nondegenerate. For a subgroup $L\subset\zz^M_n$, let $L^\perp:=\{y\in\zz^M_n\mid y^T\lambda=0\; ,\forall\lambda\in L\}$. Then $\im(\delta)=(\ker(\partial))^\perp$.

    Now, under the standard basis pairing $y\mapsto\beta_y$, the matrix of the map $\delta=\partial^*$ is $A$, hence, $b\in\im(A)$ if and only if $\beta_b\in\im(\delta)$. Moreover, we have $\im(\delta)=(\ker(\partial))^\perp$ from above, hence, $\beta_b\notin\im(\delta)$ if and only if it pairs nontrivially with some $\lambda\in\ker(\partial)$.
\end{proof}

We call $\lambda\in\cC_1$ satisfying the condition in Prop.~\ref{prop: cycle criterion} an \emph{incidence cycle}. It certifies the obstruction,
\begin{align}\label{eq: general obstruction}
    0
    =x^T\left(A^T\lambda\right)
    =\left(Ax\right)^T\lambda
    =b^T\lambda
    \neq 0\; ,
\end{align}
more formally, viewing a classical solution $x$ as a $0$-cochain with coboundary $b=\delta x$, Eq.~(\ref{eq: general obstruction}) reads
\begin{align*}
    0
    =\langle x,\partial\lambda\rangle
    =\langle\delta x,\lambda\rangle
    =\langle b,\lambda\rangle
    \neq 0\; .
\end{align*}
For the ray-context LCS in Def.~\ref{def: canonical LCS}, we have $b=\mathbf{1}_\cC=(1,\cdots,1)^T$, and the pairing is the \emph{augmentation}
\begin{align}\label{eq: augmentation}
    \varepsilon(\lambda)
    \equiv\sum_{C\in\cC}\lambda_C\pmod n\; .
\end{align}
Moreover, for $\lambda=\mathbf{1}_\cC$ the condition in Prop.~\ref{prop: cycle criterion} becomes $A^T\mathbf{1}_\cC=0$ and $b^T\mathbf{1}_\cC\neq 0$. If every variable occurs in the same number $\kappa$ of contexts, then $A^T\mathbf{1}_\cC=\kappa\mathbf{1}_\cX$, and the first equation reduces to $n\mid \kappa$; if also $b=\mathbf{1}_\cC$, then the second reduces to $n\nmid|\cC|$. The obstructions of the $18$ ray-$9$ context refinement of the Mermin-Peres square as well as Mermin's pentagram are of this form. We will revisit those and compare our analysis with Pauli-based constructions to LCS (associated to the incidence structure of their symplectic polar space) more generally in App.~\ref{app: Pauli comparison}. Importantly, Prop.~\ref{prop: cycle criterion} and Eq.~(\ref{eq: general obstruction}) obtain much more generally than these examples make it seem. In fact, as we demonstrate next, even in the presence of symmetry the special cases above remain exceptional.

\subsection{Symmetric arrangements and their classical obstructions}\label{sec: symmetric arrangements}

In order to construct quantum solutions to inconsistent LCS, we will heavily rely on symmetry, in particular, we will construct ray-context arrangements that come with a finite group acting on its rays and contexts. More generally, the analysis in this section applies to any quantum solution $(\cX,\cC)$ with LCS associated to its underlying incidence relations (defined in the obvious way).

An \emph{incidence symmetry} is any permutation preserving the relation $X\in C$, that is, any symmetry of $A$. A \emph{geometric symmetry} is a projective unitary transformation preserving a given quantum solution $(\cX,\cC)$; every geometric symmetry induces an incidence symmetry, but the converse is generally not true. For ray-context arrangements with $X_v=\zeta^{P_v}_n$, geometric symmetries act non-projectively, $UP_vU^\dagger=P_{Uv}$ and thus $UX_vU^\dagger=X_{Uv}$, hence, no additional phase data is required. This is not true for general quantum solutions $(\cX,\cC)$, e.g. for $n$-qubit Pauli operators (see App.~\ref{app: Pauli comparison}).

Let $(\cX,\cC)$ be a quantum solution to some LCS. It is useful to distinguish several different notions of symmetry. First, $(\cX,\cC)$ is called \emph{operator-regular} if every unitary belongs to the same number $\kappa$ of contexts. Similarly, $(\cX,\cC)$ is called \emph{context-regular} if every context is incident in the same number of unitaries.\footnote{One may also consider flag-regular arrangements, where a flag is an incident pair $(v,C)$ with $v\in C$ (see Sec.~\ref{sec: HJS}).} Note that ray-context arrangements are always context-regular. In the above operator- and context-chain notation, regularity says $\partial\mathbf{1}_\cC=\kappa\mathbf{1}_\cX$. In particular, for ray-context arrangements, if $n\mid\kappa$ and $n\nmid |\cC|$, then $\mathbf{1}_{\cC}$ is an incidence cycle of nonzero augmentation, and the associated LCS over $\zz_n$ is classically unsatisfiable by Prop.~\ref{prop: cycle criterion}.

While various examples of contextuality proofs are based on regular group actions on an underlying incidence scenario, e.g. Pauli labels within the corresponding symplectic polar space or the ray construction in Sec.~\ref{sec: HJS} below, regularity is not necessary for classical unsatisfiability.

Below we will consider ray-context arrangements with non-transitive symmetry actions. In this case, operators are projectors $X_v=\zeta^{P_v}_n$ and contexts decompose into $G$-orbits,
\begin{align}\label{eq: orbit decomposition}
    \cP=\bigsqcup_{i=1}^s\cP_i\; ,\qquad
    \cC=\bigsqcup_{j=1}^t\cC_j\; .
\end{align}
Symmetry implies an equitable incidence partition: for fixed $i,j$, the number of contexts in $\cC_j$ containing a given ray of $\cP_i$ is independent of the chosen unitary.

For representatives $P_i\in\cP_i$ and $C_j\in\cC_j$, one thus defines the orbit-reduced incidence matrices
\begin{align}\label{eq: N}
    N_{ij}
    &=|\{C\in\cC_j\mid P_i\in C\}|\; ,&
    M_{ji}
    &=|\{X\in\cP_i\mid X\in C_j\}|=|C_j\cap\cP_i|\; .
\end{align}
Moreover, let $H_i=\Stab_G(P_i)$ and $K_j=\Stab_G(C_j)$ denote the respective stabiliser subgroups. Summing over all incident pairs $(P_i,C_j)\in\cP_i\times \cC_j$, we find that for every $i\in[s]$ and $j\in[t]$,
\begin{equation}\label{eq:doublecount}
    |\cP_i|N_{ij}=|\cC_j|M_{ji}
    \qquad\Longleftrightarrow\qquad
    N_{ij}=\frac{|H_i|}{|K_j|}M_{ji}\; ,
\end{equation}
where the latter follows since $|\cP_i|=\frac{|G|}{|H_i|}$ and $|\cC_j|=\frac{|G|}{|K_j|}$ by the orbit-stabiliser theorem.

With the above, we thus arrive at the following criterion for classical unsatisfiability.

\begin{proposition}\label{prop: orbit cycle}
    Let $(\cP,\cC)$ be a ray-context arrangement in $\C^d$, invariant under the action of (the unitary representation of) a finite group $G$. If there exists $\alpha\in\zz^t_n$ with $t$ as in Eq.~(\ref{eq: orbit decomposition}), such that
    \begin{align}\label{eq:orbitcriterion}
        N\alpha&\equiv0\pmod n&
        \sum_j|\cC_j|\alpha_j&\not\equiv0\pmod n\; ,
    \end{align}
    for $N$ as in Eq.~(\ref{eq: N}), then the associated LCS in Def.~\ref{def: canonical LCS} is classically unsatisfiable.
\end{proposition}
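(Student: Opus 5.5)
The plan is to lift the orbit vector $\alpha$ to an orbit-constant incidence cycle and then invoke Prop.~\ref{prop: cycle criterion}. Concretely, define $\lambda\in\cC_1=\zz_n[\cC]$ by $\lambda_C:=\alpha_j$ for every $C\in\cC_j$. Because the orbits $\cC_j$ partition $\cC$ by Eq.~\eqref{eq: orbit decomposition}, $\lambda$ is well defined. It then suffices to check that $\lambda\in\ker(A^T)$ and that $b^T\lambda=\varepsilon(\lambda)\not\equiv0\pmod n$, since $b=\mathbf{1}_\cC$ for the LCS of Def.~\ref{def: canonical LCS}.

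\emph{Cycle condition.} Fix a ray $P\in\cP$ and let $i$ be the index of its orbit, $P\in\cP_i$. Then
\begin{align*}
    (A^T\lambda)_P=\sum_{C\ni P}\lambda_C=\sum_{j=1}^t\alpha_j\,|\{C\in\cC_j\mid P\in C\}|\; .
\end{align*}
The key point is that the count $|\{C\in\cC_j\mid P\in C\}|$ does not depend on the choice of $P$ within $\cP_i$. Indeed, if $P=gP_i$ for some $g\in G$, then $C\mapsto gC$ is a bijection between the contexts in $\cC_j$ containing $P_i$ and those containing $P$. This uses that $G$ acts on the arrangement, preserving incidence and mapping each orbit $\cC_j$ to itself. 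Hence the count equals $N_{ij}$ as defined in Eq.~\eqref{eq: N}, and so $(A^T\lambda)_P=(N\alpha)_i\equiv0\pmod n$ by hypothesis. Thus $\lambda\in\ker(A^T)$.

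\emph{Nonzero augmentation.} Summing orbit by orbit,
\begin{align*}
    \varepsilon(\lambda)=\sum_{C\in\cC}\lambda_C=\sum_{j}|\cC_j|\alpha_j\not\equiv0\pmod n\; .
\end{align*}
So $\lambda$ is an incidence cycle of nonzero augmentation, and Prop.~\ref{prop: cycle criterion} gives classical unsatisfiability. Equivalently, Eq.~\eqref{eq: general obstruction} gives the direct contradiction $0=x^TA^T\lambda=\mathbf{1}_\cC^T\lambda\neq0$ for any putative solution $x$.

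The only step requiring care is the well-definedness of $N_{ij}$, namely its independence from the representative ray. This is exactly the equitable-partition property stated before Eq.~\eqref{eq: N}, and I would justify it through the transport argument above. Nothing beyond incidence preservation by $G$ is needed: the double-counting relation Eq.~\eqref{eq:doublecount} is not required for this direction.
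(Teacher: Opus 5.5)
Your proposal is correct and follows the paper's proof. The paper likewise sets $\lambda=\sum_j\alpha_j\sum_{C\in\cC_j}[C]$, reads off $A^T\lambda=N\alpha$ and $\varepsilon(\lambda)=\sum_j|\cC_j|\alpha_j$, and concludes by Prop.~\ref{prop: cycle criterion}; your transport argument showing that $N_{ij}$ does not depend on the representative ray makes explicit the equitability that the paper simply asserts.
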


\begin{proof}
    For each context orbit $\cC_j$, define $u_j=\sum_{C\in \cC_j}[C]\in\zz_n[\cC]$ and $\lambda=\sum_j\alpha_j u_j$. Then $A^T\lambda=N\alpha\equiv0\pmod n$ as well as $\epsilon(\lambda)=\sum_j|\cC_j|\alpha_j\neq 0\mod n$. The result thus follows by Prop.~\ref{prop: cycle criterion}.
\end{proof}

In fact, Prop.~\ref{prop: orbit cycle} holds for any \emph{equitable} ray-context arrangement, that is, one with a partition of rays and contexts for which $N_{ij}$ and $M_{ji}$ are independent of the choice of $P_i$ and $C_j$, respectively. A group-orbit decomposition as in Eq.~(\ref{eq: orbit decomposition}) is always equitable, but an equitable partition need not arise from a group action.

Note that unlike Prop.~\ref{prop: cycle criterion}, Prop.~\ref{prop: orbit cycle} provides only a sufficient criterion for unsatisfiability in general.

\begin{proposition}\label{prop: symmetric solutions}
    Let $K=\ker(A)$ for the incidence map $A:\zz_n[\cP]\to\zz_n[\cC]$. If $H^1(G,K)=0$,\footnote{Recall that for a $G$-module $K$, a $1$-cocycle is a map $\eta:G\ra K$ with $\eta(gh)=\eta(g)+g\eta(h)$, while a $1$-coboundary has the form $\eta(g)=gk-k$. The first cohomology group is the quotient of the group of all cocycles by all coboundaries, $H^1(G,K)=Z^1(G,K)/B^1(G,K)$. It measures the obstruction to modifying a solution by an element of the homogeneous solution space $K=\ker(A)$ so as to make it $G$-invariant.} then the canonical LCS $Ax=\mathbf{1}\pmod n$ is classically satisfiable if and only if the orbit system $Mc=\mathbf{1}\pmod n$ is classically satisfiable. In particular, this holds whenever $\gcd(n,|G|)=1$.
\end{proposition}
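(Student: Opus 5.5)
The strategy is to treat the canonical LCS as a $G$-equivariant linear map and apply the long exact sequence in group cohomology. First the setup. $G$ permutes $\cP$ and $\cC$ while preserving incidence. Hence $\zz_n[\cP]$ and $\zz_n[\cC]$ are permutation $G$-modules, and $A:\zz_n[\cP]\to\zz_n[\cC]$ is $G$-equivariant. The right-hand side $\mathbf{1}=\mathbf{1}_\cC$ is $G$-invariant. The $G$-invariant vectors in $\zz_n[\cP]$ are exactly the orbit-constant vectors $x=\sum_i c_i\mathbf{1}_{\cP_i}$. For such $x$ and a context $C\in\cC_j$, one has $(Ax)_C=\sum_i |C\cap\cP_i|c_i=(Mc)_j$, using Eq.~\eqref{eq: N}. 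Consequently $Ax=\mathbf{1}$ with $x$ invariant holds if and only if $Mc=\mathbf{1}\pmod n$. This gives the easy direction at once: any orbit solution $c$ lifts to a classical solution of the canonical LCS.

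For the converse, suppose $Ax=\mathbf{1}$ for some $x\in\zz_n[\cP]$; the goal is to produce an invariant solution. For each $g\in G$, the equivariance of $A$ gives $A(gx)=g\mathbf{1}=\mathbf{1}$, so $\eta(g):=gx-x\in K$. This $\eta$ is a $1$-cocycle, because $\eta(gh)=ghx-x=g(hx-x)+(gx-x)=g\eta(h)+\eta(g)$. Since $H^1(G,K)=0$, there is $k\in K$ with $\eta(g)=gk-k$ for all $g$. Then $x':=x-k$ satisfies $gx'=x'$ for every $g$, and $Ax'=Ax=\mathbf{1}$. So $x'$ is an invariant solution, and by the first paragraph its orbit values $c$ solve $Mc=\mathbf{1}$. (Equivalently, this is the exactness of $0\to K\to\zz_n[\cP]\to\im A\to0$ in degree $0$, up to $H^1(G,K)$.)

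It remains to show that $\gcd(n,|G|)=1$ implies $H^1(G,K)=0$. The group $K$ is a $\zz_n$-module, so $H^1(G,K)$ is annihilated by $n$. It is also annihilated by $|G|$, by the standard restriction–corestriction argument, which gives $\mathrm{cor}\circ\mathrm{res}=|G|$ through the trivial subgroup. One can also argue directly: if $\eta$ is a cocycle, set $k=-\frac{1}{|G|}\sum_h\eta(h)$, which makes sense because $|G|$ is invertible in $\zz_n$. Summing the cocycle identity over $h$ gives $g\sum_h\eta(h)=\sum_h\eta(gh)-|G|\eta(g)$, and hence $\eta(g)=gk-k$. Coprimality of $n$ and $|G|$ then forces $H^1(G,K)=0$.

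The only point requiring care is the identification of invariant solutions with solutions of the orbit system. One must check that $(Ax)_C$ for an orbit-constant $x$ depends only on the orbit of $C$, and that the coefficient is $M_{ji}$ rather than $N_{ij}$. This rests on $|C\cap\cP_i|$ being constant along $\cC_j$, which holds because $G$ acts by incidence-preserving permutations. The cohomological steps are routine.
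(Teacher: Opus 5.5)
Your proposal is correct and follows essentially the paper's proof: the same cocycle $\eta(g)=gx-x\in K$ is trivialised via $H^1(G,K)=0$ to give the invariant solution $x-k$, whose orbit values solve $Mc=\mathbf{1}$, and the converse is immediate. The only minor difference is the coprime case, where you show $H^1(G,K)=0$ by averaging the cocycle (or via restriction--corestriction), whereas the paper skips cohomology and averages the solution directly, $\bar{x}=|G|^{-1}\sum_{g\in G}gx$; the two arguments are equivalent.
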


\begin{proof}
    If $x_0$ is a classical solution of $Ax=\mathbf{1}\pmod n$, then $\eta(g)=gx_0-x_0\in K$ since $A(gx_0-x_0)=gAx_0-Ax_0=g\mathbf{1}-\mathbf{1}=0$, and $\eta(g)$ is a group $1$-cocycle, as $\eta(gh)=ghx_0-x_0=g(hx_0-x_0)+(gx_0-x_0)=g\eta(h)+\eta(g)$. Since $H^1(G,K)=0$, $\eta(g)=gk-k$ for some $k\in K$, hence,
    \begin{align*}
        g(x_0-k)
        =gx_0-gk
        =x_0+\eta(g)-gk
        =x_0+gk-k-gk
        =x_0-k\; ,
    \end{align*}
    that is, $x_0-k$ is $G$-invariant and therefore constant on ray orbits. Consequently, its orbit values define a solution to $Mc=\mathbf{1}\pmod n$. The converse is immediate.
    
    If $\gcd(n,|G|)=1$, then $|G|$ is invertible in $\zz_n$, hence, any solution $x$ can be averaged to obtain a $G$-invariant solution $\bar{x}=\frac{1}{|G|}\sum_{g\in G}gx$. As $\bar{x}$ is constant on every ray orbit $\cP_i$, let $c_i$ denote its value on $\cP_i$.  Evaluating $A\bar{x}=\mathbf{1}\pmod n$ on a context representative $C_j$ gives $Mc=\mathbf{1}\pmod n$.
\end{proof}

\subsection{$\zz_n$-valued frame functions}

We will also exploit the following connection with frame functions.

\begin{definition}\label{def: frame function}
    A $\zz_n$-valued frame function of weight $w\in\zz_n$ on $\C^d$ is a map $f:\cP_1(\C^d)\to\zz_n$ such that $\sum_{P\in C}f(P)=w$ for every maximal orthonormal set of rank-one projectors.

    $f$ is called \emph{constant} if $f(P)=c$ for every rank-one projector $P\in\cP_1(\C^d)$ and $c\in\zz_n$.
\end{definition}

A $\zz_n$-valued frame function is constant if and only if its additive extension to finite-rank projectors is proportional to rank, that is, $f(P)=c\cdot\mathrm{rank}(P)$. In particular, its weight is $w=dc$.

\begin{lemma}\label{lm: ff + invertible weight -> csol}
    Let $f$ be a $\zz_n$-valued frame function of weight $w$ on $\C^d$, and let $Ax\equiv b\pmod n$ be a LCS with a $d$-dimensional quantum solution. Then there is $x_f\in\zz^N_n$ such that $Ax_f=wb$. In particular, every frame function with invertible weight yields a classical solution to every such LCS.
\end{lemma}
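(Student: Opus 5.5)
The idea is to pull the quantum solution back along $f$. Each unitary $X_i$ is replaced by the $f$-weighted average of its eigenvalue exponents. Commutation within a row then lets us evaluate a whole constraint in a single joint eigenbasis, where the frame-function property collapses the sum to $w$.

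\textbf{Step 1 (extension of $f$).} First I will extend $f$ additively to all projectors $Q$ on $\C^d$ by setting $f(Q):=\sum_{m}f([e_m])$ for an orthonormal basis $\{e_m\}$ of $\im(Q)$. For this I must check independence of the chosen basis. Fix an orthonormal basis $\{e'_l\}$ of $\im(Q)^\perp$ and complete with any orthonormal basis $\{e_m\}$ of $\im(Q)$. Definition~\ref{def: frame function} gives $\sum_m f([e_m])=w-\sum_l f([e'_l])$, and the right-hand side does not depend on $\{e_m\}$. The cases $Q=0$ and $Q=\one$ are trivial, with values $0$ and $w$. This is the only point where some care is needed, and it works in every dimension $d$.

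\textbf{Step 2 (definition of $x_f$).} Let $X_1,\ldots,X_N$ be the $d$-dimensional quantum solution. Since $X_i^n=\one$, the spectral decomposition reads $X_i=\sum_{k\in\zz_n}\zeta_n^kQ_{i,k}$ with orthogonal projectors $Q_{i,k}$ summing to $\one$. I will define
\begin{align*}
    (x_f)_i:=\sum_{k\in\zz_n}k\,f(Q_{i,k})\in\zz_n\; .
\end{align*}
This is well defined because $k$ only matters modulo $n$ and $f$ takes values in $\zz_n$. Equivalently, for any orthonormal eigenbasis $\{e_m\}$ of $X_i$ with $X_ie_m=\zeta_n^{a_{i,m}}e_m$, Step 1 gives $(x_f)_i=\sum_m a_{i,m}f([e_m])$.

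\textbf{Step 3 (checking the constraints).} Fix a row $r$. By Eq.~\eqref{eq: qsol}, all $X_i$ with $A_{ri}\neq0$ pairwise commute, so they admit a common orthonormal eigenbasis $\{e_m\}_{m=1}^d$ with $X_ie_m=\zeta_n^{a_{i,m}}e_m$. Applying the row constraint $\prod_iX_i^{A_{ri}}=\zeta_n^{b_r}\one$ to each $e_m$ gives $\sum_iA_{ri}a_{i,m}\equiv b_r\pmod n$ for every $m$. Using the basis-independent formula from Step 2 in this common basis, I obtain
\begin{align*}
    (Ax_f)_r=\sum_iA_{ri}\sum_m a_{i,m}f([e_m])=\sum_m f([e_m])\sum_iA_{ri}a_{i,m}=b_r\sum_m f([e_m])=w\,b_r\; .
\end{align*}
The last equality holds because $\{[e_m]\}$ is a maximal orthonormal set. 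Hence $Ax_f=wb$.

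\textbf{Step 4 (invertible weight).} If $w$ is invertible in $\zz_n$, then $w^{-1}x_f$ satisfies $A(w^{-1}x_f)=b$, which is a classical solution.
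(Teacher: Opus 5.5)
Your proof is correct and follows the paper's argument essentially verbatim: you define $x_f$ by applying the additive extension of $f$ to the spectral projections, evaluate each row in a common eigenbasis, and use the frame-function property to obtain $w b_r$. Your Step 1, which checks that the additive extension is well defined by fixing an orthonormal basis of $\im(Q)^\perp$, supplies a detail the paper takes for granted.
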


\begin{proof}
    We consider the additive extension of $f$ to finite-rank projectors. For an $n$-torsion unitary $X=\sum_{a\in\zz_n}\zeta^a_nQ_a$ (of a quantum solution $\{X_i\}_i$) define $x_f(X)=\sum_{a\in\zz_n}af(Q_a)$. Moreover, for every constraint $\prod_{i=1}^NX_i^{A_{ri}}=\zeta^{b_r}_n\one$, choose a common rank-one spectral resolution $\{P_k\}$. If $a_{ik}$ is the eigenvalue exponent of $X_i$ on $P_k$, then $\sum_{i=1}^NA_{ri}a_{ik}\equiv b_r\pmod n$ for every $k$, hence,
    \begin{align*}
        \sum_{i=1}^NA_{ri}x_f(X_i)
        =\sum_{k=1}^df(P_k)\sum_{i=1}^NA_{ri}a_{ik}
        =b_r\sum_{k=1}^df(P_k)\equiv wb_r\pmod n\; .
    \end{align*}
    Clearly, if $w$ is invertible, $w^{-1}x_f$ defines a classical solution to the LCS.
\end{proof}

In particular, the existence of a $\zz_n$-valued frame function on $\cP_1(\C^d)$ of weight $1$ provides a classical solution to the LCS associated to any ray-context arrangement in $\C^d$. Conversely:

\begin{lemma}\label{lm: qsol from frame function}
    Let $2\leq n\in\mathbb{N}$ and $d\in\mathbb{N}$. If there exists no $\zz_n$-valued frame function of weight one on $\C^d$, then there is a finite family of contexts $\cC$ (complete orthogonal rank-one resolutions of $\C^d$) such that the LCS $\sum_{P\in C}x_P\equiv 1\pmod n$ for all $C\in\cC$ (see Def.~\ref{def: canonical LCS}) is classically unsatisfiable.
\end{lemma}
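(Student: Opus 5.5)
The plan is a compactness argument. The set $\cP_1(\C^d)$ is infinite, but $\zz_n$ is finite, so I will use a Tychonoff/König-type argument in contrapositive form. Suppose that for every finite family of contexts $\cC$ the associated LCS in Eq.~\eqref{eq: canonical LCS} is classically satisfiable. The goal is then to build a $\zz_n$-valued frame function of weight one on $\C^d$, contradicting the hypothesis.

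\emph{Setting up the compact space.} Let $\Omega=\zz_n^{\cP_1(\C^d)}$ with the product of discrete topologies; it is compact by Tychonoff's theorem. For each context $C$ (a complete orthonormal rank-one resolution of $\C^d$), put
\[
F_C=\Big\{f\in\Omega \;\Big|\; \sum_{P\in C}f(P)=1\Big\}.
\]
The constraint involves only the $d$ coordinates indexed by $C$, so $F_C$ is a cylinder set and hence closed (indeed clopen).

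\emph{Finite intersection property.} Take finitely many contexts $C_1,\ldots,C_m$. They form a finite ray-context arrangement $(\cP,\cC)$, where $\cP=\bigcup_l C_l$. By assumption its LCS has a classical solution $x\in\zz_n^{\cP}$. Extend $x$ arbitrarily, say by $0$, to all of $\cP_1(\C^d)$. The extension lies in $F_{C_1}\cap\cdots\cap F_{C_m}$, so the family $\{F_C\}$ has the finite intersection property.

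\emph{Conclusion.} By compactness, $\bigcap_C F_C\neq\emptyset$. Any $f$ in this intersection is a $\zz_n$-valued frame function of weight one in the sense of Def.~\ref{def: frame function}, which contradicts the hypothesis. Therefore some finite $\cC$ yields a classically unsatisfiable LCS. Its contexts are genuine complete orthogonal resolutions, as the statement requires.

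\emph{Remarks.} I expect no serious obstacle. The one point needing care is whether the axiom of choice is acceptable here. Tychonoff's theorem for a product of finite discrete spaces needs only the ultrafilter lemma, which is weaker than full choice. Alternatively, one can phrase the argument as compactness of propositional logic: introduce variables $x_{P,a}$ for $P\in\cP_1(\C^d)$ and $a\in\zz_n$, and encode the axioms "each $P$ takes exactly one value" together with the context sum constraints. This is the same argument. The proof is not constructive, so it gives no bound on $|\cC|$. That is presumably why the paper later supplies explicit arrangements.
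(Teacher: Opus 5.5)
Your proposal is correct and follows essentially the same route as the paper. The paper's proof likewise uses Tychonoff compactness of $\zz_n^{\cP_1(\C^d)}$, clopen constraint sets $\Omega_C$, and the finite intersection property from satisfiable finite subsystems, and then notes that any element of the full intersection is a weight-one frame function. Your explicit extension of a finite classical solution by zero to all rays fills in a small step the paper leaves implicit.
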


\begin{proof}
    Let $\PHone$ denote a set of rank-one projectors on $\cH=\C^d$ and let $\CH$ denote the set of all complete orthogonal contexts. Consider the product space $\Omega=\zz^\PHone_n$, where $\zz_n$ is given the discrete topology. Since $\zz_n$ is finite, it is compact, hence, $\Omega$ is compact by Tychonoff's theorem.
    
    For each $C\in\CH$, let $\Omega_C=\left\{x\in\Omega\mid\sum_{P\in C}x_P=1\right\}$. The set $\Omega_C$ is clopen: it is the inverse image of $\{1\}$ under a continuous map depending only on the finitely many coordinates indexed by the rays in $C$. Now, suppose for contradiction that every finite collection of the canonical context equations were satisfiable. Then for every finite $\cC\subseteq\CH$ one has $\bigcap_{C\in\cC}\Omega_C\neq\emptyset$, that is, the family $\{\Omega_C\mid C\in\CH\}$ has the finite intersection property, and compactness of $\Omega$ therefore implies
    \begin{align*}
        \bigcap_{C\in\CH}\Omega_C\neq\emptyset\; .
    \end{align*}
    Yet, an element of this intersection is a $\zz_n$-valued frame function of weight one, contrary to our assumption. Hence, there exists a finite subfamily $\cC\subseteq\CH$ whose associated LCS is inconsistent.
\end{proof}

\begin{theorem}\label{thm: frame functions vs qsols}
    Let $2\leq n\in\mathbb{N}$ and $d\in\mathbb{N}$, then the following are equivalent:
    \begin{enumerate}
        \item there exists a classically unsatisfiable LCS over $\zz_n$ with $d$-dimensional quantum solution;
        \item there exists no $\zz_n$-valued frame function on $\C^d$ of weight one;
        \item there exists a finite classically unsatisfiable canonical rank-one projector LCS in dimension $d$.
    \end{enumerate}
\end{theorem}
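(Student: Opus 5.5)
I will prove the cycle of implications $1\Rightarrow 2\Rightarrow 3\Rightarrow 1$. Each step reduces to a result already established above, so the proof is mainly a matter of checking that the hypotheses match.

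For $1\Rightarrow 2$ I argue by contraposition. Suppose a $\zz_n$-valued frame function $f$ of weight one exists on $\C^d$, and let $Ax\equiv b\pmod n$ be any LCS with a $d$-dimensional quantum solution. Lm.~\ref{lm: ff + invertible weight -> csol} then gives $x_f$ with $Ax_f=wb=b$, which is a classical solution. Hence no classically unsatisfiable LCS over $\zz_n$ can have a $d$-dimensional quantum solution, contradicting 1. The only point to check is that Lm.~\ref{lm: ff + invertible weight -> csol} applies to arbitrary finite-dimensional quantum solutions in the sense of Eq.~\eqref{eq: qsol}, not only ray-context ones. This holds: its proof only uses the common spectral resolution of each row, and that exists because the operators in a row commute and are $n$-torsion, hence simultaneously diagonalisable.

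For $2\Rightarrow 3$ I invoke Lm.~\ref{lm: qsol from frame function} directly. It produces a finite family $\cC$ of complete orthogonal rank-one contexts whose canonical LCS is classically unsatisfiable. Taking $\cP$ to be the finitely many rays occurring in these contexts gives a finite ray-context arrangement $(\cP,\cC)$ in the sense of Def.~\ref{def: canonical LCS}, whose associated LCS is exactly the one in the lemma.

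For $3\Rightarrow 1$, Prop.~\ref{prop: canonical qsol} shows that $X_v=\zeta_n^{P_v}$ is a $d$-dimensional quantum solution of any canonical rank-one projector LCS. A classically unsatisfiable such LCS therefore witnesses 1.

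No step is a serious obstacle, since the substantive inputs (the Tychonoff compactness argument and the frame-function-to-solution map) are already proven in Lm.~\ref{lm: qsol from frame function} and Lm.~\ref{lm: ff + invertible weight -> csol}. The one point deserving care is the generality used in $1\Rightarrow 2$: statement 1 allows an arbitrary LCS, not a canonical one, and I will explicitly note why Lm.~\ref{lm: ff + invertible weight -> csol} covers this case.
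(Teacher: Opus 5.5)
Your proof is correct and follows the paper's own argument: the cycle $1\Rightarrow2\Rightarrow3\Rightarrow1$ via Lm.~\ref{lm: ff + invertible weight -> csol}, Lm.~\ref{lm: qsol from frame function} and Prop.~\ref{prop: canonical qsol}. You also check explicitly that Lm.~\ref{lm: ff + invertible weight -> csol} applies to arbitrary, non-canonical quantum solutions, because the commuting unitaries in each row share an eigenbasis; the paper leaves this point implicit, and the check is worth making.
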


\begin{proof}
    $(1)\Rightarrow(2)$ follows from Lm.~\ref{lm: ff + invertible weight -> csol}, the implication $(2)\Rightarrow(3)$ is Lm.~\ref{lm: qsol from frame function}, and $(3)\Rightarrow(1)$ is obvious.
\end{proof}

Lm.~2 in Ref.~\cite{HardingJagerSmith2005}, (using the explicit ray--context arrangement in Lm.~\ref{lm: HJS construction} below), proves that $\zz_n$-valued frame functions on $\cP_1(\C^d)$ with $d=2n$ have weight zero. Thm.~\ref{thm: frame functions vs qsols} thus asserts the existence of classically unsatisfiable LCS over $\zz_n$ with quantum solutions in dimension $d=2n$. In fact, the construction in Ref.~\cite{HardingJagerSmith2005} provides an explicit such LCS. In the next section, we review and complement their construction to fully resolve the existence problem of such LCS for all $n,d$.

\section{Existence of quantum solutions to classically unsatisfiable LCS}\label{sec: qsols}

In this section, we provide explicit constructions for LCS over $\zz_n$ for $2\leq n\in\mathbb{N}$ with no classical solution but a $d$-dimensional quantum whenever $\gcd(n,d)>1$; we thereby solve the open existence problem of LCS with a quantum-classical gap completely and constructively. We first leverage the construction from Ref.~\cite{HardingJagerSmith2005} in Sec.~\ref{sec: HJS}, which yields a family of LCS over $\zz_n$ with quantum-classical gap in dimension $d=kn$ for $k\geq 2$. The remaining open case $d=p$ prime is covered in Sec.~\ref{sec: d=p prime}, and extended to arbitrary integers $2\leq n\in\mathbb{N}$ in Sec.~\ref{sec: full solution}. We also establish the existence of quantum solutions to classically unsatisfiable LCS over arbitrary finite fields in Sec.~\ref{sec: finite fields}.

\subsection{Harding--Jager--Smith construction}\label{sec: HJS}

Thm.~\ref{thm: frame functions vs qsols} together with the partial characterisation of finite group-valued measures on Hilbert lattices in Ref.~\cite{HardingJagerSmith2005} establishes the existence of LCS with a quantum-classical gap for all $n\geq 2$ and $d=kn$ with $k\geq 2$. To prove their result they construct an explicit family of ray-context arrangements.

\begin{lemma}[Lm.~1 in \cite{HardingJagerSmith2005}]\label{lm: HJS construction}
    For any $2\leq n\in\mathbb{N}$, there is a set $Z$ of vectors in $\R^{2n}$ such that
    \begin{enumerate}
        \item $Z$ has $2(n+1)^2$ vectors.
        \item $Z$ can be covered by a family of $(n+1)^2$ blocks $B_{ij}$ for $0\leq i,j\leq n$.
        \item Each vector in $Z$ occurs in exactly $n$ of the blocks $B_{ij}$ where $0\leq i,j\leq n$.
    \end{enumerate}
\end{lemma}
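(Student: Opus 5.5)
We will give an explicit construction in $\R^{2n}=\R^n\oplus\R^n$, built from a regular simplex, and read each block $B_{ij}$ as an orthonormal basis of $\R^{2n}$, i.e.\ a context in the sense of Def.~\ref{def: canonical LCS}. The counts in the statement fix the incidence structure. There are $(n+1)^2$ blocks of size $2n$, hence $2n(n+1)^2$ incidences, which equals $2(n+1)^2\cdot n$. So we will take two families of vectors $a_{kl},b_{kl}$ with $0\le k,l\le n$, and declare
\begin{align*}
    a_{kl}\in B_{ij}&\iff i=k,\ j\neq l\; ,&
    b_{kl}\in B_{ij}&\iff j=l,\ i\neq k\; .
\end{align*}
Then $B_{ij}=\{a_{il}\mid l\neq j\}\cup\{b_{kj}\mid k\neq i\}$ has exactly $2n$ elements, and every vector lies in exactly $n$ blocks. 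This gives items 1--3 combinatorially. It remains to realise the labels by unit vectors such that every $B_{ij}$ is orthonormal.

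For the realisation, let $s_0,\dots,s_n\in\R^n$ be the vertices of a regular simplex centred at the origin, normalised so that $\|s_l\|=1$ and $\langle s_l,s_m\rangle=-\tfrac1n$ for $l\neq m$. Make the ansatz
\begin{align*}
    a_{il}&=(\alpha s_l,\ \beta s_i)\; ,&
    b_{kj}&=(\gamma s_j,\ \delta s_k)\; .
\end{align*}
Three orthogonality conditions inside a block must hold.
\begin{itemize}[label=--]
    \item For $a_{il},a_{il'}$ with $l\neq l'$ we need $-\alpha^2/n+\beta^2=0$.
    \item For $b_{kj},b_{k'j}$ with $k\neq k'$ we need $\gamma^2-\delta^2/n=0$.
    \item For $a_{il},b_{kj}$ with $l\neq j$ and $k\neq i$ we need $-(\alpha\gamma+\beta\delta)/n=0$.
\end{itemize}
Normalisation requires $\alpha^2+\beta^2=\gamma^2+\delta^2=1$. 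The choice
\begin{align*}
    \alpha&=\sqrt{\tfrac{n}{n+1}}\; ,&
    \beta&=\tfrac{1}{\sqrt{n+1}}\; ,&
    \gamma&=\tfrac{1}{\sqrt{n+1}}\; ,&
    \delta&=-\sqrt{\tfrac{n}{n+1}}\; ,
\end{align*}
satisfies all of these, since $\alpha\gamma=\sqrt n/(n+1)=-\beta\delta$. Hence each $B_{ij}$ is $2n$ pairwise orthogonal unit vectors, i.e.\ an orthonormal basis.

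Finally, we will check that the $2(n+1)^2$ vectors are pairwise distinct, indeed pairwise non-parallel, so that item 1 counts genuine rays. The simplex vertices are pairwise distinct and none is a multiple of another. Therefore $a_{il}=\pm a_{i'l'}$ forces $(i,l)=(i',l')$ with sign $+$, and the same holds for the $b$'s. An $a$ cannot be proportional to a $b$, because that would force $\alpha s_l=\pm\gamma s_j$ with $\alpha\neq\gamma$, contradicting $\|s_l\|=\|s_j\|=1$.

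The main obstacle is finding the incidence pattern and the mixing ansatz. A naive choice placing the $a$'s and $b$'s in separate summands fails, because $n+1$ unit vectors in $\R^n$ cannot have every $n$-subset orthonormal. Once the coupled ansatz is chosen, the verification reduces to the three scalar equations above.
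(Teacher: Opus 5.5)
Your construction is correct and is essentially the Harding--Jager--Smith arrangement the paper recalls in its symmetry paragraph. It uses the same incidence pattern $B_{ij}=\{u_{ik}\mid k\neq j\}\sqcup\{v_{kj}\mid k\neq i\}$, and your $a_{il}$ and $b_{kj}$ are, up to normalisation and sign, the paper's $u_{il}=\rho s_i+t_l$ and $v_{kj}=s_k-\rho t_j$ with $\rho=\beta/\alpha=-\gamma/\delta=n^{-1/2}$. The only difference is that you place the two orthogonal simplices in $\R^n\oplus\R^n$ rather than in $W=\{e,f\}^\perp\subset\R^{2n+2}$, and you add the non-parallelism check (using $n\geq 2$) that the paper leaves implicit.
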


Here, `block' is synonymous with basis and our `context' in Def.~\ref{def: canonical LCS}.

\begin{theorem}\label{thm: HJS LCS}
    For every $2\leq n\in\mathbb{N}$ and every $d=kn$ with $k\geq 2$, there exists a LCS over $\zz_n$ with a $d$-dimensional quantum solution, yet no classical solution.
\end{theorem}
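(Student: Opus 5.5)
The plan is to go through Thm.~\ref{thm: frame functions vs qsols}, splitting the argument into the base case $d=2n$ and an extension step from $2n$ to $kn$.

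\textbf{Base case $d=2n$.} Take the arrangement $Z\subset\R^{2n}\subset\C^{2n}$ of Lm.~\ref{lm: HJS construction}, with the $(n+1)^2$ blocks $B_{ij}$ as contexts $\cC$. Prop.~\ref{prop: canonical qsol} gives a $2n$-dimensional quantum solution via $X_v=\zeta_n^{P_v}$.

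For classical unsatisfiability, use the incidence cycle $\lambda=\mathbf{1}_\cC$. By item 3 of Lm.~\ref{lm: HJS construction}, the arrangement is operator-regular with $\kappa=n$. Hence $A^T\mathbf{1}_\cC=n\mathbf{1}_\cP\equiv0\pmod n$. The augmentation is $\varepsilon(\mathbf{1}_\cC)=(n+1)^2\equiv1\not\equiv0\pmod n$. Prop.~\ref{prop: cycle criterion} then shows the associated LCS has no classical solution. Equivalently, by Thm.~\ref{thm: frame functions vs qsols}, every $\zz_n$-valued frame function on $\C^{2n}$ has weight $w$ satisfying $(n+1)^2w=n\sum_{v\in Z}f(v)\equiv0$, i.e.\ $w=0$.

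\textbf{Extension to $d=kn$.} Here I would show that no weight-one frame function exists on $\C^{kn}$ and then invoke Thm.~\ref{thm: frame functions vs qsols}. The first tool is additivity over subspaces. For a frame function $f$ on $\C^d$ and a subspace $S$, define $f(S)=\sum_{P}f(P)$ over any orthonormal rank-one resolution of $S$. This is independent of the chosen basis: complete two bases of $S$ by a common basis of $S^\perp$ and compare the two frame sums. The restriction of $f$ to $S$ is then a frame function on $S$ of weight $f(S)$. By the base case, $f(S)=0$ for every $2n$-dimensional subspace $S$.

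For $k$ even, decompose $\C^{kn}$ into $k/2$ orthogonal $2n$-dimensional summands. This gives $w=\sum f(S_i)=0\neq1$, and we are done.

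For $k$ odd, write $\C^{kn}=A\oplus B\oplus W\oplus(\text{$2n$-blocks})$ with $\dim A=\dim B=\dim W=n$. The three $2n$-dimensional subspaces $A\oplus B$, $A\oplus W$, $B\oplus W$ give $a+b=a+w'=b+w'=0$. Summing yields $2(a+b+w')=0$, while the total weight gives $a+b+w'=1$. This is a contradiction whenever $n$ is odd, or more generally whenever $2$ is invertible mod $n$.

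\textbf{Main obstacle: $k$ odd and $n$ even.} The simple additivity argument breaks down here; for example, $n=2$, $d=6$ admits the consistent assignment in which every $2$-dimensional subspace has weight $1$. For this case I would first try to generalise the HJS block construction directly to $\R^{kn}$. Concretely, I would look for a vector set covered by blocks such that each vector lies in a multiple of $n$ blocks, while the number of blocks is coprime to $n$ (or at least not divisible by $n$), so that $\mathbf{1}_\cC$ is again an incidence cycle of nonzero augmentation. A natural attempt is to replace the $2n$-dimensional gadget by a $3n$-dimensional analogue. Failing that, I would use the orbit criterion Prop.~\ref{prop: orbit cycle} for a non-regular symmetric arrangement containing HJS copies inside varying $2n$-dimensional subspaces, so as to force a constraint on $n$-dimensional subspace weights that is incompatible with total weight one.
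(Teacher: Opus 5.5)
Your base case $d=2n$ is correct and matches the paper. The reduction to frame functions via Thm.~\ref{thm: frame functions vs qsols}, with subspace additivity, is also sound. The gap is the case you flag as the main obstacle. For odd $k\geq 3$ and even $n$ you give no argument, only a list of things you would try: a hypothetical $3n$-dimensional HJS analogue, or an unspecified non-regular orbit arrangement. So the theorem is not proved, for instance, for $(n,d)=(2,6)$.

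You also undersell your three-subspace argument. The relations $a+b=a+w'=b+w'=0$ and $a+b+w'=1$ force $a=b=w'=1$, hence $2\equiv 0\pmod n$. That is a contradiction for every $n\geq 3$, even or odd, and it is essentially the paper's explicit $L$-block construction. The only case your proposal genuinely leaves open is $n=2$ with $k$ odd.

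The missing idea is this. You only use $2n$-dimensional subspaces that come from orthogonal decompositions into $n$-dimensional blocks, which discards most of the information. Use instead two $2n$-dimensional subspaces that overlap in $2n-1$ dimensions.

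For $k\geq 3$ we have $d=kn\geq 2n+1$. So for any two rays $P,Q$, the space $\{P,Q\}^\perp$ has dimension at least $d-2\geq 2n-1$, and you can choose orthonormal rays $R_1,\dots,R_{2n-1}$ orthogonal to both. Apply your base case to $P\oplus R_1\oplus\cdots\oplus R_{2n-1}$ and to $Q\oplus R_1\oplus\cdots\oplus R_{2n-1}$. This gives $f(P)+\sum_i f(R_i)=0=f(Q)+\sum_i f(R_i)$, hence $f(P)=f(Q)$.

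Therefore every $\zz_n$-valued frame function on $\C^{kn}$ is constant, $f\equiv c$. Its weight is $knc\equiv 0\not\equiv 1\pmod n$. Thm.~\ref{thm: frame functions vs qsols} then gives the required LCS for all $k\geq 3$ at once, regardless of the parities of $n$ and $k$.

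This is exactly the content of Thm.~3 of Harding--Jager--Smith invoked in the paper's proof for $k>2$ (cf.\ Lm.~\ref{lm: dimension lifting}). It also shows why your consistent assignment for $(n,d)=(2,6)$, with every $2$-dimensional subspace of weight $1$, cannot come from an actual frame function. Constancy forces those weights to equal $2c\equiv 0\pmod 2$.
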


\begin{proof}
    For $k=2$, consider the LCS associated to the arrangement in Lm.~\ref{lm: HJS construction}, consisting of a variable $x_z$ for every ray $[z]$ with $z\in Z$. Assume that these variables satisfy the constraints
    \begin{align}\label{eq: HJS LCS}
        \sum_{z\in B_{ij}}x_z
        \equiv 1\pmod n\; 
    \end{align}
    for every $0\leq i,j\leq n$. Then summing over all constraints yields the contradiction
    \begin{align*}
        0\equiv
        n\sum_{z\in Z}x_z
        =\sum_{i,j=0}^n\sum_{z\in B_{ij}}x_z
        \equiv\sum_{i,j=0}^n 1
        =(n+1)^2
        \equiv 1\pmod n\; 
    \end{align*}
    where we used that every variable appears in $n$ bases by Lm.~\ref{lm: HJS construction}.~3, Eq.~(\ref{eq: HJS LCS}) and Lm.~\ref{lm: HJS construction}.~2.
    
    For $d=kn$ with $k>2$, Thm.~3 in Ref.~\cite{HardingJagerSmith2005} implies that every frame function is of the form $f(P)=c\cdot\rank(P)$ for $c\in\zz_n$ with weight $dc\equiv 0\pmod n$. The result thus follows with Thm.~\ref{thm: frame functions vs qsols}.
\end{proof}

Lm.~\ref{lm: HJS construction} yields an explicit construction of a LCS over $\zz_n$ with quantum-classical gap for all $d=kn$ with $k=2$, and Thm.~3 in Ref.~\cite{HardingJagerSmith2005} (non-constructively) implies the existence for $k>2$ by Thm.~\ref{thm: frame functions vs qsols}. For an explicit construction, we extend the above arrangement to all dimensions $d=kn$ with $n\geq 3$ and $k\geq 2$. (For an alternative construction of such LCS, see also App.~\ref{app: conference LCS}.)\\

\textbf{Quantum solutions to Harding--Jager--Smith (HJS) LCS.} For all $n\geq 2$ and $L\geq 3$,\footnote{The case $L=2$ is treated in Lm.~\ref{lm: HJS construction}.} let
\begin{align*}
    \cH
    =\bigoplus_{l=1}^L\cH_l\; ,
    \qquad \dim(\cH_l)=n\; ,
\end{align*}
and fix an orthonormal basis $B_l$ for each $\cH_l$. We include the global coordinate basis $B_0=B_1\cup\cdots\cup B_L$. Moreover, for every pair $l,m\in[L]$, $l<m$, place a copy of the configuration in Lm.~\ref{lm: HJS construction} in $\cH_l\oplus\cH_m$, consisting of $(n+1)^2$ orthogonal bases $B^{lm}_{ij}$, and $2(n+1)^2$ rays, each of which occurs in exactly $n$ of these bases. We complete every such $\cH_l\oplus\cH_m$-basis to a basis of $\cH$ by setting
\begin{align*}
    \widetilde B^{lm}_{ij}
    =B^{lm}_{ij}\cup\bigcup_{c\neq l,m}B_c\; .
\end{align*}
The resulting arrangement thus consists of $Ln+\binom{L}{2}2(n+1)^2$ rays and $1+\binom{L}{2}(n+1)^2$ bases in total.

We show that the associated LCS $\sum_{v\in B}x_v\equiv 1\pmod n$ for every basis $B$ is inconsistent. Let
\begin{align*}
    s_l=\sum_{v\in B_l}x_v\in\mathbb Z_n\; .
\end{align*}
Then $\sum_{l=1}^L s_l\equiv 1\pmod n$ for the coordinate context. Moreover, for a fixed pair $l<m$, sum the $(n+1)^2$ equations corresponding to the $\widetilde B^{lm}_{ij}$. Every HJS ray occurs exactly $n$ times, hence, cancels modulo $n$, whereas every coordinate ray in $B_c$, $c\neq l,m$, occurs $(n+1)^2\equiv1\pmod n$ times, hence,
\begin{align*}
    \sum_{c\neq l,m}s_c
    \equiv 1\pmod n\; .
\end{align*}
Comparing the two expressions, we find that $s_l+s_m\equiv 0\pmod n$ for all $l\neq m$.

For $n\geq 3$, combining these relations for three distinct indices, $s_{l_1}+s_{l_2}=s_{l_1}+s_{l_3}=s_{l_2}+s_{l_3}\equiv 0\pmod n$, yields $2s=2s_l\equiv 0\pmod n$ for all $l\in[L]$, $Ls\equiv 1\pmod n$ and thus $0\equiv 2\pmod n$. Consequently, for $n\geq 3$ and $L\geq 3$ the LCS is classically unsatisfiable: $2s\equiv 0\pmod n$ implies $s=0$ for $n$ odd, and $s=0$ or $s=\frac{n}{2}\neq 1$ for $n>2$ even, hence, $0=2Ls\equiv 2\pmod n$.

For $n=2$, $2s\equiv 0\pmod 2$ is trivially satisfied, hence, the LCS has a solution only if $Ls\equiv 1\pmod 2$ which is satisfied for $s=1$ if and only if $L$ is odd. In fact, a solution to this LCS is obtained by setting $x_v=0$ for all HJS rays and on coordinate rays such that $\sum_{v\in B_l}x_v\equiv s\pmod n$.\\

\textbf{Symmetry of the HJS ray-context arrangement.} The ray-context arrangement in Lm.~\ref{lm: HJS construction} is highly symmetric. For $n\geq 2$, Ref.~\cite{HardingJagerSmith2005} consider the $2n$-dimensional subspace
\begin{align*}
    W=\{e,f\}^{\perp}\subset\R^{2n+2},\qquad
    e=\sum_{i=0}^{n}e_i,\qquad
    f=\sum_{i=0}^{n}f_i,
\end{align*}
where $e_0,\ldots,e_n,f_0,\ldots,f_n$ is the standard orthonormal basis of $\R^{2n+2}$, and define
\begin{align*}
    s_i=e-(n+1)e_i\; ,\qquad
    t_i=f-(n+1)f_i\; ,
\end{align*}
for every $0\leq i\leq n$. The sets $\{s_i\}_{i=0}^{n}$ and $\{t_i\}_{i=0}^{n}$ thus constitute two mutually orthogonal regular $n$-simplices in $W$. The set of rays in Lm.~\ref{lm: HJS construction} is given by
\begin{align*}
    Z=\{u_{ij},v_{ij}\mid 0\leq i,j\leq n\}\; ,
\end{align*}
where $u_{ij}=\rho s_i+t_j$ and $v_{ij}=s_i-\rho t_j$ with $\rho=n^{-1/2}$, and for $0\leq i,j\leq n$ the set of contexts by
\begin{align*}
    B_{ij}
    =\{u_{ik}\mid k\neq j\}\,\sqcup\,\{v_{kj}\mid k\neq i\}\; .
\end{align*}
It follows that the group $S_{n+1}\times S_{n+1}$ acts by
\begin{align*}
    (\sigma,\tau)[u_{ij}]
    =[u_{\sigma(i),\tau(j)}]\; ,\qquad
    (\sigma,\tau)[v_{ij}]
    =[v_{\sigma(i),\tau(j)}]\; ,\qquad
    (\sigma,\tau)B_{ij}
    =B_{\sigma(i),\tau(j)}\; ,
\end{align*}
hence, the projective group $G_{\mathrm{HJS}}\cong(S_{n+1}\times S_{n+1})\rtimes C_2$ is transitive on rays and on contexts. In fact, it is also flag-transitive: $S_{n+1}\times S_{n+1}$ is transitive on flags of a fixed $u$- or $v$-type, while the orthogonal map $J(s+t)=K^{-1}t-Ks$ with $Ks_i=t_i$ linear exchanges the two types.\footnote{Notably, $J$ defines a complex structure, in particular, $J^2=-\one$.}

It is interesting to ask whether a similar construction exists for $d=p$, that is, whether one can construct a ray-context arrangement in $\C^d$ which is ray- and context-transitive (or even flag-transitive), whose associated LCS over $\zz_p$ has a quantum-classical gap. We will not answer this question here. Yet, as outlined in Sec.~\ref{sec: symmetric arrangements}, this is not necessary to solve our present problem of deciding whether LCS with a quantum-classical gap exist. Indeed, relaxing the transitivity conditions will allow us to cover the remaining case $d=p$.

\subsection{The case $d=p$ prime}\label{sec: d=p prime}

From the perspective of solving the existence problem for LCS with quantum but no classical solution, the construction in Sec.~\ref{sec: HJS} essentially only leaves open the case $d=p$. We cover this case in this section and give the resulting full solution for rings $\zz_n$ in Sec.~\ref{sec: full solution} (and for finite fields in Sec.~\ref{sec: finite fields}). Throughout, we identify the cyclic ring $\zz_p$ with the finite field $\Fp$ for prime $p$, and use the latter notation when emphasising linear-algebraic structure.

\subsubsection{Quantum solutions in prime dimension $p\geq 5$}\label{sec: p prime}

We first give a construction for every odd prime $p\geq 5$. The construction is real, although the associated canonical quantum solution in Prop.~\ref{prop: canonical qsol} is naturally complex.\\

\textbf{Notation.} Denote by
\begin{align}\label{eq: Gp}
    G_p
    =\overline{W(B_p)}
    =W(B_p)/\{\pm 1\}
    \cong (\mathbb Z_2)^{p-1}\rtimes S_p
\end{align}
the projective hyperoctahedral group acting on $\RP^{p-1}$ (by coordinate permutations and relative inversions), of cardinality $|G_p|=2^{p-1}p!$. For $\cC_1=\Fp[\cC]$ and $\cC_0=\Fp[\cP]$, we define
\begin{align*}
    \partial[C]
    :=\sum_{v\in C}[v]\; ,\qquad
    \epsilon\left(\sum_{C\in\cC}\lambda_C[C]\right)
    :=\sum_{C\in\cC}\lambda_C\; ,
\end{align*}
as before. For a ray $[x]\in\RP^{p-1}$ define its orbit sum, or unnormalised $G_p$-twirl, by
\begin{align*}
    \cT_{G_p}(x)
    =\sum_{g\in G_p}[gx]
    =|\Stab_{G_p}([x])|\sum_{y\in G_p[x]}[y]\in\cC_0\; ;
\end{align*}
similarly, the unnormalised $G_p$-twirl on context chains $\sigma\in\cC_1$ is defined by
\begin{align*}
    \cT_{G_p}(\sigma)
    =\sum_{g\in G_p}g\sigma\in\cC_1\; .
\end{align*}

\textbf{Transport lemma}. The key geometric observation is that, in odd dimension, every real ray with at least one zero coordinate (equivalently, with proper support) can be transported to the coordinate orbit by finite context chains of zero augmentation.

\begin{lemma}\label{lm: support transport}
    Let $p$ be odd, $G_p$ the projective hyperoctahedral group in Eq.~(\ref{eq: Gp}), and let $0\neq x\in\R^p$ have at least one zero coordinate. Then there exists a finite context chain $\eta_x\in\cC_1$ such that
    \begin{align}\label{eq: support transport}
        \partial\eta_x=\cT_{G_p}(x)-\cT_{G_p}(e_1)\; ,
        \qquad
        \epsilon(\eta_x)=0\; .
    \end{align}
\end{lemma}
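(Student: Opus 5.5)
The approach is to work entirely at the level of twirled chains. Since $\cT_{G_p}$ commutes with $\partial$ and preserves $\epsilon$ up to the factor $|G_p|$, every relation we build can be averaged over $G_p$ without losing information. The basic move uses two contexts that share all but a few rays. Their difference is a chain of augmentation zero whose boundary is the signed sum of the rays that are not shared. After twirling, such a difference gives a relation between orbit sums $\cT_{G_p}(\cdot)$ with zero augmentation. The lemma then says that $\cT_{G_p}(x)-\cT_{G_p}(e_1)$ lies in $\partial(\ker\epsilon)$. Since $\partial(\ker\epsilon)$ is a subgroup, it suffices to produce enough such relations and combine them, dividing by integers prime to $p$ where needed.

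\emph{Step 1 (the basic relation).} Without loss of generality $x_p=0$, so $x\in V:=e_p^\perp$, which has even dimension $p-1$. Let $J$ be the block-diagonal signed permutation acting as $(a,b)\mapsto(b,-a)$ on the coordinate pairs $(1,2),(3,4),\dots,(p-2,p-1)$. Then $J\in G_p$ (projectively), $Jx\perp x$, and $Jx\in V$. Complete $\{x,Jx\}$ to an orthonormal basis $\{x,Jx,y_3,\dots,y_{p-1}\}$ of $V$, and set $C_x=\{x,Jx,y_3,\dots,y_{p-1},e_p\}$. With $C_0=\{e_1,\dots,e_p\}$, the chain $C_x-C_0$ has augmentation zero. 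Its twirl has boundary
\[
2\,\cT_{G_p}(x)+\sum_{k\geq3}\cT_{G_p}(y_k)-(p-1)\,\cT_{G_p}(e_1).
\]
\emph{Step 2 (removing the leftover vectors).} I would choose the $y_k$ so that they also come in $J'$-pairs for suitable signed-permutation involutions $J'$. In the cleanest case the whole basis of $V$ consists of $G_p$-images of $x$. The boundary then collapses to $(p-1)\bigl(\cT_{G_p}(x)-\cT_{G_p}(e_1)\bigr)$. Because $p$ is odd, $p-1$ is invertible mod $p$, and multiplying by $(p-1)^{-1}$ gives $\eta_x$. This is exactly where the hypotheses enter: the zero coordinate guarantees that $e_p$ can be added to a basis of $V$, and oddness of $p$ gives both the even-dimensional $V$, on which a fixed-point-free $J$ exists, and the invertibility of the count.

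\emph{Step 3 (general case, by induction on support).} In general an orthonormal basis of $V$ made of signed-permutation images of $x$ need not exist, since orthogonal designs exist only in dimensions $1,2,4,8$. So I would instead induct on the support size $k=|\supp(x)|\leq p-1$. The base case $k=1$ is trivial with $\eta_x=0$. For the inductive step, work inside the two-dimensional coordinate planes. Writing $x=x'+\alpha e_i+\beta e_j$, rotate the pair $(\alpha,\beta)$ to $(r,0)$ using two contexts that differ only in the plane $\mathrm{span}(e_i,e_j)$ and the orthogonal complement of $x'$. In each such step the rays not shared between the two contexts are $J$-partners of one another, hence $G_p$-equivalent, so their twirls cancel in pairs. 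What remains expresses $\cT_{G_p}(x)$ in terms of $\cT_{G_p}$ of a vector of support $k-1$, modulo $\partial(\ker\epsilon)$ and up to a nonzero even-versus-odd multiplicity. That multiplicity must be inverted mod $p$.

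The hardest part is Step 3: ensuring that the complementary rays introduced when completing $\{x,Jx\}$, or the rotated vectors, to full contexts either cancel after twirling or have strictly smaller support, so that the induction closes. I would first try to keep every auxiliary ray of the form $y\oplus 0$ with a zero coordinate, so that the induction hypothesis applies to it. I would then verify that the net coefficient of $\cT_{G_p}(x)$ is a nonzero integer less than $p$, such as $2$ or $p-1$, and hence invertible mod $p$.
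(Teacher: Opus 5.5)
There is a genuine gap. Steps 1--2 do not prove the lemma. As you note yourself, the complement vectors $y_k$ are uncontrolled: they typically have full support in $e_p^\perp$ and are not $G_p$-equivalent to $x$. So everything rests on Step 3, and Step 3 never specifies the two contexts.

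Worse, the move you describe cannot work in its natural reading. That reading is a single pair of contexts with a common complement that rotates only the $(e_i,e_j)$-component of $x$. Suppose $C$ and $C'$ differ only in a $2$-plane $\Pi$ containing $x$ and $x^\ast=R_{ij}(\phi)x$. Then $\Pi=\mathrm{span}\{x,d\}$ with $d=x-x^\ast\in\mathrm{span}(e_i,e_j)$. The partner of $x$ in $\Pi$ is $w=d-cx$, where $c=\langle d,x\rangle/\|x\|^2\neq 0$ whenever $\phi\neq0$. Off $\{i,j\}$, $w$ equals $-cx'$. So in general $w$ is neither a signed permutation of (a multiple of) $x$ nor of smaller support, and the same holds for the partner of $x^\ast$. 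The twirls therefore do not cancel, and the induction on support does not close.

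The missing idea is to rotate inside the $J$-invariant plane $\mathrm{span}\{x,Jx\}$ rather than inside a coordinate plane.

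\begin{itemize}
\item For every $\theta$, the vector $u=\cos\theta\,x+\sin\theta\,Jx$ still has $u_l=0$. Hence $Ju=-\sin\theta\,x+\cos\theta\,Jx$ is orthogonal to $u$, and $\mathrm{span}\{u,Ju\}=\mathrm{span}\{x,Jx\}$.
\item Take $C=\{[x],[Jx]\}\cup B$ and $C'=\{[u],[Ju]\}\cup B$ with any common complement $B$. Since $\cT_{G_p}(Jx)=\cT_{G_p}(x)$ and $\cT_{G_p}(Ju)=\cT_{G_p}(u)$, you get $\partial\cT_{G_p}(C-C')=2\cT_{G_p}(x)-2\cT_{G_p}(u)$ with zero augmentation.
\item The complement $B$ cancels identically, whatever rays it contains. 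This removes your difficulty with the $y_k$: you never need to compare against $C_0$.
\item Dividing by $2$, which is where oddness of $p$ enters, gives a zero-augmentation relation between $\cT_{G_p}(x)$ and $\cT_{G_p}(R_J(\theta)x)$.
\end{itemize}

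To isolate a single coordinate-plane rotation, take two such $J_\pm$ that agree on the pair $\{i,j\}$ and are opposite on every other pair. Then $R_{J_-}(\theta)R_{J_+}(\theta)=R_{ij}(2\theta)$ on $e_l^\perp$. Chain the two basic relations; the intermediate vector still has vanishing $l$-th coordinate. At most $p-2$ such Givens rotations then move $x$ to $\|x\|e_r$, and $\cT_{G_p}(e_r)=\cT_{G_p}(e_1)$.

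No induction on support is needed. The intermediate vectors may well have full support in $e_l^\perp$. All that matters is that the zero coordinate $l$ is preserved, so that the $J$-move stays available at every step.
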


\begin{proof}
    Suppose that $x_l=0$. Since $p-1$ is even, pair the coordinates different from $l$. On each pair choose the signed transformation,
    \begin{align*}
        J_\varepsilon=
        \begin{pmatrix}
          0&\varepsilon\\
          -\varepsilon&0
        \end{pmatrix},
        \qquad
        \varepsilon\in\{\pm1\},
    \end{align*}
    and let $J$ be their direct sum, fixing the $l$-th coordinate. $J$ is a signed permutation, in particular, $J\in G_p$, and $J^2=-\one$ on the hyperplane $e_l^\perp$, hence, $x\perp Jx$.
    
    For any $\theta\in[0,2\pi)$ set $u=\cos(\theta)x+\sin(\theta)Jx$. Since $u_l=0$, $Ju=-\sin(\theta)x+\cos(\theta)Jx$, hence,
    \begin{align*}
        u\perp Ju\; ,\qquad
        \mathrm{span}\{x,Jx\}
        =\mathrm{span}\{u,Ju\}\; .
    \end{align*}
    Choose a common orthogonal complement $B$ and form the two orthogonal contexts
    \begin{align*}
        C=\{[x],[Jx]\}\cup B\; ,\qquad
        C'=\{[u],[Ju]\}\cup B\; .
    \end{align*}
    Since $J\in G_p$, we have
    \begin{align*}
        \partial\cT_{G_p}(C-C')
        =\cT_{G_p}(x)+\cT_{G_p}(Jx)-\cT_{G_p}(u)-\cT_{G_p}(Ju)
        =2\cT_{G_p}(x)-2\cT_{G_p}(u)\; ,
    \end{align*}
    Moreover, since $p$ is odd, $\eta(x,J,\theta):=\frac{1}{2}\cT_{G_p}(C-C')$ is well-defined over $\Fp$ and satisfies
    \begin{align}\label{eq: basic support transport}
        \partial\eta(x,J,\theta)=\cT_{G_p}(x)-\cT_{G_p}(u)\; ,\qquad
        \epsilon(\eta(x,J,\theta))=0\; .
    \end{align}
    
    It remains to show that the above rotations generate arbitrary coordinate-plane rotations while preserving the zero coordinate $x_l$. Fix $i,j\neq l$, and choose a pairing of the coordinates different from $l$ containing the pair $\{i,j\}$. Let $J_+$ and $J_-$ have the same sign $\epsilon$ on $\{i,j\}$ but opposite signs on every other pair. Writing $R_J(\theta)=\cos(\theta)\one_{e^\perp_l}+\sin(\theta)J$, one obtains (on $e_l^\perp$)
    \begin{equation*}
        R_{J_-}(\theta)R_{J_+}(\theta)
        =R_{ij}(2\theta)\; ,
    \end{equation*}
    where $R_{ij}(2\theta)$ is the ordinary elementary rotation in the $(i,j)$-plane and is the identity on all other coordinates. Indeed, on the distinguished pair the two quarter-turns $J_\pm$ coincide, whereas on every other pair they are negatives of one another, hence, the latter rotations cancel and the resulting rotation in the $(i,j)$-plane has the sum of the angles. Applying Eq.~(\ref{eq: basic support transport}) successively therefore produces, for every rotation $R_{ij}(\phi)$, a finite zero-augmentation context chain with boundary
    \begin{align*}
        \cT_{G_p}(x)-\cT_{G_p}(R_{ij}(\phi)x)\; .
    \end{align*}
    After at most $p-2$ elementary rotations in $e_l^\perp$, $x=x^{(0)}$ is rotated into a scalar multiple of a coordinate vector $x^{(m)}=\|x\|e_r$. Let $\eta_s$ be the context chain satisfying $\partial\eta_s=\cT_{G_p}(x^{(s-1)})-\cT_{G_p}(x^{(s)})$ and $\epsilon(\eta_s)=0$, then the context chain $\eta=\sum_{s=1}^m\eta_s$ has boundary 
    \begin{align*}
        \partial\eta
        =\sum_{s=1}^m \cT_{G_p}(x^{(s-1)})-\cT_{G_p}(x^{(s)})
        =\cT_{G_p}(x)-\cT_{G_p}(x^{(m)})
        =\cT_{G_p}(x)-\cT_{G_p}(e_r)\; ,
    \end{align*}
    and since all coordinate vectors lie in the same $G_p$-orbit, this proves Eq.~(\ref{eq: support transport}).
\end{proof}

It follows from Lm.~\ref{lm: support transport} that unnormalised $G_p$-twirled incidence relations collapse to the coordinate orbit in the quotient $\cC_0/\partial(\mathrm{ker}(\epsilon))$. To obtain a nontrivial obstruction, we need to connect this collapsed proper-support class to an orbit with different stabiliser group.\\

\textbf{Diagonal orbit.} Let $e=(1,\ldots,1)^T$ and for $p\geq5$ define
\begin{align}\label{eq: crossing}
    t=\sqrt{\frac{p-2}{2}},
    \qquad
    y=(0,1,-1,t,-t,0,\ldots,0)^T\; .
\end{align}
It follows that $e\cdot y=0$, $\|y\|^2=2+2t^2=p=\|e\|^2$. Next, define
\begin{align*}
    u=e+y\; ,
    \qquad
    v=e-y\; .
\end{align*}
Clearly, $u\perp v$ and $\mathrm{span}\{e,y\}=\mathrm{span}\{u,v\}$. Moreover, $y_1=0$, $u_3=0$, $v_2=0$, hence, $y,u,v$ all have proper support. Choose a common orthogonal complement $B$ and form the contexts
\begin{align}\label{eq: diagonal crossing}
    C_*=\{[e],[y]\}\cup B\; ,
    \qquad
    C_*'=\{[u],[v]\}\cup B\; .
\end{align}
Setting $\delta:=\cT_{G_p}(C_*-C_*')$ we have $\partial\delta=\cT_{G_p}(e)+\cT_{G_p}(y)-\cT_{G_p}(u)-\cT_{G_p}(v)$ and $\epsilon(\delta)=0$. Now, note the projective stabiliser group of the diagonal ray $e$ has order $|\Stab_{G_p}([e])|=p!$, and thus
\begin{align*}
    \cT_{G_p}(e)
    =p!\sum_{w\in G_p[e]}[w]
    \equiv 0\pmod p\; .
\end{align*}
Moreover, by Lm.~\ref{lm: support transport}, there exist chains $\eta_y,\eta_u,\eta_v$ satisfying $\partial\eta_y=\cT_{G_p}(y)-\cT_{G_p}(e_1)$, $\partial\eta_u=\cT_{G_p}(u)-\cT_{G_p}(e_1)$, $\partial\eta_v=\cT_{G_p}(v)-\cT_{G_p}(e_1)$ and $\epsilon(\eta_y)=\epsilon(\eta_u)=\epsilon(\eta_v)=0$, hence,
\begin{align*}
    \partial\bigl(\delta-\eta_y+\eta_u+\eta_v\bigr)
    =-\cT_{G_p}(e_1)\; .
\end{align*}
Next, let $C_0=\{[e_1],\ldots,[e_p]\}$ be the coordinate context. Since the coordinate-ray orbit has size $p$, by the orbit-stabiliser theorem, we have $|\Stab_{G_p}([e_1])|=\frac{|G_p|}{p}=2^{p-1}(p-1)!$, and thus
\begin{align*}
    \cT_{G_p}(e_1)
    =2^{p-1}(p-1)!\sum_{i=1}^p[e_i]
    =2^{p-1}(p-1)!\,\partial C_0\; .
\end{align*}
Consequently, we obtain a finite incidence cycle (that is, $\partial\lambda=0$) of the form:
\begin{align}\label{eq: incidence cycle}
    \lambda
    =\delta-\eta_y+\eta_u+\eta_v+2^{p-1}(p-1)!\,C_0\; .
\end{align}
On the other hand, since $\epsilon(\delta)=\epsilon(\eta_y)=\epsilon(\eta_u)=\epsilon(\eta_v)=0$, Fermat's and Wilson's theorems give
\begin{align*}
    \epsilon(\lambda)
    =\epsilon(2^{p-1}(p-1)!\,C_0)
    =2^{p-1}(p-1)!
    \equiv-1\not\equiv0\pmod p\; .
\end{align*}

In summary, this proves the following.

\begin{theorem}\label{thm: d=p>3 prime}
    For every odd prime $p\geq5$, there exists a ray-context arrangement $(\cP,\cC)$ in $\R^p$ for which the associated LCS in Def.~\ref{def: canonical LCS} of the form $\sum_{P\in C}x_P\equiv 1\pmod p$ for all $C\in\cC$ is inconsistent, yet admits a $p$-dimensional quantum solution of the form in Prop.~\ref{prop: canonical qsol}.
\end{theorem}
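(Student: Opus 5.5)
The plan is to assemble the arrangement $(\cP,\cC)$ from the finitely many contexts that appear in the incidence cycle $\lambda$ of Eq.~(\ref{eq: incidence cycle}), and then apply Prop.~\ref{prop: cycle criterion} and Prop.~\ref{prop: canonical qsol}.

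\textbf{Collecting the contexts.} I take $\cC$ to consist of the following:
\begin{itemize}[label={}]
    \item the coordinate context $C_0$;
    \item all $G_p$-translates $gC_*$ and $gC_*'$ of the diagonal crossing contexts in Eq.~(\ref{eq: diagonal crossing});
    \item the $G_p$-translates of every pair of contexts $C,C'$ used in Lm.~\ref{lm: support transport}, for each elementary rotation step in the transports $\eta_y,\eta_u,\eta_v$.
\end{itemize}
Each transport requires at most $p-2$ elementary rotations. Each rotation $R_{ij}(\phi)$ is realised by finitely many applications of Eq.~(\ref{eq: basic support transport}): write $R_{ij}(2\theta)=R_{J_-}(\theta)R_{J_+}(\theta)$ and chain the two basic moves through the intermediate ray $R_{J_+}(\theta)x$. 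That intermediate ray still has a zero in coordinate $l$, so Lm.~\ref{lm: support transport} applies again.

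Since $G_p$ is finite, $\cC$ is finite. I let $\cP$ be the union of all rays in these contexts. Every element of $\cC$ is an orthonormal basis of $\R^p\subset\C^p$, so $(\cP,\cC)$ is a ray-context arrangement in the sense of Def.~\ref{def: canonical LCS}. Prop.~\ref{prop: canonical qsol} then immediately gives the $p$-dimensional quantum solution $X_v=\zeta_p^{P_v}$.

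\textbf{Unsatisfiability.} I read $\lambda$ of Eq.~(\ref{eq: incidence cycle}) as an element of $\Fp[\cC]$; it is supported on $\cC$ by construction. The computation preceding the theorem gives $\partial\lambda=A^T\lambda=0$. The argument is:
\begin{itemize}[label={}]
    \item $\partial\delta$ contributes $\cT_{G_p}(e)\equiv0$, using $p!\equiv0\pmod p$;
    \item the transport chains cancel $\cT_{G_p}(y)$, $\cT_{G_p}(u)$, $\cT_{G_p}(v)$ down to $-\cT_{G_p}(e_1)$;
    \item the term $2^{p-1}(p-1)!\,C_0$ cancels this remainder.
\end{itemize}
The augmentation is $\epsilon(\lambda)=2^{p-1}(p-1)!\equiv-1$ by Fermat and Wilson. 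Since $b=\mathbf{1}_\cC$, we have $b^T\lambda=\epsilon(\lambda)\neq0$, and Prop.~\ref{prop: cycle criterion} yields inconsistency.

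\textbf{Checks I must make.} First, the factor $\tfrac12$ in Lm.~\ref{lm: support transport} is legitimate because $p$ is odd. Second, $y$, $u$ and $v$ each have a zero coordinate: $y_1=0$ (from $y$'s definition), and since $e$ is all ones and $y_2=1$, $y_3=-1$, we get $u_3=1+y_3=0$ and $v_2=1-y_2=0$. Third, $t=\sqrt{(p-2)/2}$ is real, which makes $\|y\|=\|e\|$ and hence $u\perp v$; this needs $p\geq5$ so that $y$ has five nonzero slots available. Fourth, the stabiliser orders are right. I will double-check $|\Stab_{G_p}([e])|=p!$ projectively: a signed permutation fixing $[e]$ must send $e$ to $\pm e$, and modulo $\pm1$ this leaves exactly the permutations.

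\textbf{Main obstacle.} The hardest step is confirming that Lm.~\ref{lm: support transport} really yields finitely many contexts reaching a coordinate ray, and that each intermediate ray keeps a zero coordinate. I would handle this with Givens-type rotations inside $e_l^\perp$ that zero out coordinates one at a time, with angles determined explicitly by the current vector. Each step stays inside $e_l^\perp$, so the zero in coordinate $l$ persists, and each step is realised by two basic moves.
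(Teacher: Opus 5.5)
Your proposal is correct and follows the same route as the paper. You take the arrangement to be the finite support of the chains $\delta$, $\eta_y,\eta_u,\eta_v$ and $C_0$ that make up the incidence cycle $\lambda$ of Eq.~(\ref{eq: incidence cycle}), conclude from $\partial\lambda=0$ and $\epsilon(\lambda)=2^{p-1}(p-1)!\equiv-1\pmod p$ via Prop.~\ref{prop: cycle criterion}, and get the quantum solution from Prop.~\ref{prop: canonical qsol}; your extra bookkeeping (finiteness of $\cC$, and the intermediate ray $R_{J_+}(\theta)x$ keeping its zero coordinate so the basic move can be chained) only makes explicit what the paper leaves implicit.
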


\begin{proof}
    The chain $\lambda$ in Eq.~(\ref{eq: incidence cycle}) satisfies $\partial\lambda=0$ and $\epsilon(\lambda)\neq 0$, hence, the result follows from Prop.~\ref{prop: cycle criterion}. On the other hand, the unitary operators $X_P$ solve every context equation by Prop.~\ref{prop: canonical qsol}.
\end{proof}

In other words, the obstruction no longer follows from the cardinalities of single ray and context orbits alone, but from the orbit-incidence relations between several such orbits, with the relevant multiplicities controlled by their stabiliser orders.\\

\textbf{Remarks.} By Prop.~\ref{prop: determinant} the dimension $d=p$ is minimal for quantum solutions to LCS over $\zz_p$ that have no classical solution.

Note also that the case $p=3$ is excluded from the above construction, since Eq.~(\ref{eq: crossing}) from the diagonal ray $e=(1,\ldots,1)$ to proper-support rays uses five coordinates.

Before we consider the case $d=p=3$, and extend the result to arbitrary integers, we provide a bound on the size of the arrangement. Every elementary rotation requires two contexts and their orbits, hence, up to $2|G_p|$ contexts in total. The full rotation to a coordinate ray requires at most $2(p-2)$ elementary rotations, hence, up to $12(p-2)|G_p|$ contexts in total since we use rotations for $y,u,v$. Finally, we need to add another $2|G_p|$ contexts of the form in Eq.~(\ref{eq: diagonal crossing}), which together with the coordinate contexts results in the upper bound
\begin{align*}
    |\cC_p|\leq
    (12p-22)|G_p|+1
    =(12p-22)2^{p-1}p!+1\; .
\end{align*}
Since every context contains $p$ elements, this also yields a bound on the rays of the arrangement, $|\cP|\leq p(12p-22)2^{p-1}p!+p$. Of course, these bounds do not take into account stabilisers and overlaps, hence, the actual number of contexts may be considerably smaller.\\

\textbf{Symmetry.} The construction in Thm.~\ref{thm: d=p>3 prime} has a different symmetry structure to that of the HJS arrangement in Lm.~\ref{lm: HJS construction}: the ray-context arrangement in Thm.~\ref{thm: d=p>3 prime} decomposes not into a single, but several $G_p$-orbits. In particular, the coordinate ray $[e_1]$ and the diagonal ray $[e]=[(1,\ldots,1)^T]$ belong to different orbits, and have stabiliser subgroups of cardinality
\begin{align}\label{eq: stabiliser cardinalities}
    |(G_p)_{e_1}|
    &=2^{p-1}(p-1)!\not\equiv0\pmod p &
    |(G_p)_e|
    &=p!\equiv0\pmod p\; .
\end{align}
The final obstruction is therefore a weighted combination of several context orbits rather than a single all-context sum. More explicitly, the proof uses the different $p$-divisibility of the two stabilisers in Eq.~(\ref{eq: stabiliser cardinalities}). Recall that the unnormalised $G$-twirl is given by $\cT_G[v]=|G_v|\sum_{w\in G[v]}[w]$, hence, $\cT_{G_p}[e]=0$ over $\Fp$, whereas $\cT_{G_p}[e_1]\neq 0$. The construction transports the former orbit to the latter by the context chains in Lm.~\ref{lm: support transport}, and closes the remaining boundary with the coordinate context.

We point out one more subtlety. Since $p\mid |G_p|$, the \emph{normalised} $G_p$-twirl, $|G_p|^{-1}\sum_{g\in G_p}g$, is not defined over $\Fp$, that is, one cannot average a classical solution to obtain a $G_p$-invariant one. Only unnormalised orbit sums are well-defined, for which $\epsilon(\cT_{G_p}\sigma)=|G_p|\,\epsilon(\sigma)\equiv 0\pmod p$. This is why the orbit reduction criterion in Prop.~\ref{prop: orbit cycle} is automatically complete when $p\nmid|G|$, but need not be complete in the modular case. More generally, the obstruction to symmetrizing a classical solution is measured by the corresponding group-cohomology class in $H^1(G,\ker A)$ (see Prop.~\ref{prop: symmetric solutions}).

\subsubsection{LCS over $\mathbb{F}_p$ with quantum solution in $d=3$}\label{sec: d=p=3}

The construction in the last section cannot be applied to the case $p=3$. To close this gap, we give a small and highly symmetric ray--context arrangement in $\C^{3}$ whose associated LCS over $\zz_3$ is classically unsatisfiable, while admitting the canonical qutrit quantum solution of Prop.~\ref{prop: canonical qsol}. The construction is closely related to the Peres--Penrose family of Kochen--Specker configurations with $33$ rays and $16$ contexts \cite{Peres1991,Penrose1994} (see App.~\ref{app: d=p=3} for details).

Let $\alpha=\sqrt{-2}=i\sqrt{2}$ such that $\alpha^{2}=-2$, $\overline{\alpha}=-\alpha$, $\alpha\overline{\alpha}=2$, and consider vectors in $\C^{3}$ with coordinates in the imaginary quadratic ring $\zz[\sqrt{-2}]=\zz+i\sqrt{2}\zz$, that is, we consider vectors in the lattice $\zz[\sqrt{-2}]^3\subset\C^3$; we write $[v]$ for the ray generated by a nonzero vector $v\in\C^{3}$.

Moreover, let $W=W(B_{3})=\zz^3_2\rtimes S_3$ be the signed permutation group acting on $\C^{3}$ by permuting the three coordinates and changing their signs independently. The central element $-\one$ acts trivially on rays, so the action on projective rays factors through $W/{\pm\one}$. Consider the four $W$-orbits of rays
\begin{align}\label{eq: ray orbits}
\begin{split}
    \cP_1
    &=W\cdot[1,0,0]\; ,\hspace{1.55cm} |\cP_1|=3\; ,\\
    \cP_2&=W\cdot[1,1,1]\; ,\hspace{1.55cm} |\cP_2|=4\; ,\\
    \cP_3&=W\cdot[1,\alpha,0]\; ,\hspace{1.5cm} |\cP_3|=12\; ,\\
    \cP_4&=W\cdot[1,1-\alpha,\alpha]\; ,\qquad |\cP_4|=24\; .
\end{split}
\end{align}
Thus $\cP:=\cP_1\sqcup\cP_2\sqcup\cP_3\sqcup\cP_4$ contains $|\cP|=3+4+12+24=43$ rays.

The orbit cardinalities in \eqref{eq: ray orbits} follow directly from the signed-permutation action. For instance, $[1,1,1]$ is determined by its sign pattern modulo an overall sign and therefore has $2^{3}/2=4$ projective images. For $[1,1-\alpha,\alpha]$, the absolute values of the three coordinates are respectively $1,\sqrt3,\sqrt2$, and hence are distinct; there are therefore six coordinate permutations and four projectively distinct relative sign patterns, giving $6\cdot4=24$ rays.

Next, we take four $W$-orbits of contexts, equivalently orthogonal bases, represented by
\begin{align}\label{eq:CD}
\begin{split}
    C_1
    &=\bigl\{[1,0,0],[0,1,0],[0,0,1]\bigr\}\; ,\\
    C_2
    &=\bigl\{[1,0,0],[0,1,\alpha],[0,\alpha,1]\bigr\}\; ,\\
    C_3
    &=\bigl\{[1,1,1],[1,-\alpha,\alpha-1],[1-\alpha,\alpha,-1]\bigr\}\; ,\\
    C_4
    &=\bigl\{[1,\alpha,0],[\alpha,1,1-\alpha],[\alpha,1,\alpha-1]\bigr\}\; ,\
\end{split}
\end{align}
Since $\overline{\alpha}=-\alpha$ and $\alpha^{2}=-2$, all four representatives are orthogonal bases. Their $W$-orbits will be denoted $\cC_1=W\cdot C_1$, $\cC_2=W\cdot C_2$, $\cC_3=W\cdot C_3$ and $\cC_4=W\cdot C_4$, and have cardinalities
\begin{equation*}
    |\cC_1|=1\; ,\qquad
    |\cC_2|=6\; ,\qquad
    |\cC_3|=12\; ,\qquad
    |\cC_4|=12\; .
\end{equation*}
Consequently the arrangement contains $|\cC|=1+6+12+12=31$ contexts.

\begin{theorem}\label{thm: qutrit LCS}
    The LCS over $\zz_3$ associated with the $43$ ray-$31$ context arrangement $(\cP,\cC)$ above is classically unsatisfiable.
\end{theorem}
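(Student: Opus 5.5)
The plan is to apply the orbit criterion of Prop.~\ref{prop: orbit cycle} to the $W$-action. Since $|\cC_1|=1$ and $|\cC_2|=6$, $|\cC_3|=|\cC_4|=12$ are divisible by $3$, the augmentation condition $\sum_j|\cC_j|\alpha_j\not\equiv 0\pmod 3$ reduces to $\alpha_1\not\equiv 0\pmod 3$. So it suffices to compute the $4\times 4$ orbit-reduced incidence matrix $N$ of Eq.~\eqref{eq: N} and exhibit $\alpha\in\zz_3^4$ with $N\alpha\equiv 0\pmod 3$ and $\alpha_1=1$.

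\textbf{Step 1: orbit membership.} First I determine $M_{ji}=|C_j\cap\cP_i|$ by locating each ray of the representatives \eqref{eq:CD} in the orbits \eqref{eq: ray orbits}, using signed permutations and projective rescaling.
\begin{itemize}
\item $C_1$ consists of the three rays of $\cP_1$.
\item In $C_2$, the ray $[1,0,0]$ lies in $\cP_1$, while $[0,1,\alpha]$ and $[0,\alpha,1]$ are coordinate permutations of $[1,\alpha,0]$ and so lie in $\cP_3$.
\item In $C_3$, the ray $[1,1,1]$ lies in $\cP_2$. Permuting $[1,1-\alpha,\alpha]$ to $(1,\alpha,1-\alpha)$ and flipping the last two signs gives $[1,-\alpha,\alpha-1]$. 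The ray $[1-\alpha,\alpha,-1]$ is a permutation of the representative of $\cP_4$ followed by one sign flip. So both lie in $\cP_4$.
\item In $C_4$, the ray $[1,\alpha,0]$ lies in $\cP_3$, and $[\alpha,1,1-\alpha]$, $[\alpha,1,\alpha-1]$ lie in $\cP_4$ by the same kind of argument. This is consistent with their coordinate moduli being $\sqrt2,1,\sqrt3$.
\end{itemize}
Hence $M_{11}=3$, $(M_{21},M_{23})=(1,2)$, $(M_{32},M_{34})=(1,2)$ and $(M_{43},M_{44})=(1,2)$, with all other entries zero.

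\textbf{Step 2: the matrix $N$.} By the double counting \eqref{eq:doublecount}, $N_{ij}=M_{ji}|\cC_j|/|\cP_i|$. This yields the rows
\begin{align*}
N_{1\cdot}&=(1,2,0,0), & N_{2\cdot}&=(0,0,3,0),\\
N_{3\cdot}&=(0,1,0,1), & N_{4\cdot}&=(0,0,1,1).
\end{align*}
Modulo $3$ the system $N\alpha\equiv0$ reads
\begin{align*}
\alpha_1+2\alpha_2&\equiv0, & \alpha_2+\alpha_4&\equiv0, & \alpha_3+\alpha_4&\equiv0,
\end{align*}
since the second row vanishes mod $3$. The choice $\alpha=(1,1,1,2)$ solves it, and its augmentation is $|\cC_1|\alpha_1=1\not\equiv0\pmod 3$. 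Prop.~\ref{prop: orbit cycle}, and through it Prop.~\ref{prop: cycle criterion}, then gives classical unsatisfiability. Concretely, the incidence cycle is $\lambda=u_1+u_2+u_3+2u_4$, where $u_j$ is the sum of all contexts in the orbit $\cC_j$.

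\textbf{Main obstacle.} The delicate part is not the linear algebra but the geometric bookkeeping. I must confirm that each representative in \eqref{eq:CD} is genuinely an orthonormal-up-to-scaling basis under the Hermitian product, using $\overline\alpha=-\alpha$ and $\alpha\overline\alpha=2$. I must also confirm the stated orbit sizes $|\cP_i|$ and $|\cC_j|$, i.e.\ the stabiliser orders, because $N$ is computed from them. I would verify the stabilisers directly: for instance $\Stab([1,\alpha,0])$ has order $2$ in $W/\{\pm\one\}$ of order $24$, giving $|\cP_3|=12$. As a fallback, I would count $N_{ij}$ directly from a representative ray, which avoids any reliance on these stabiliser computations.
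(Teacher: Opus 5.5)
Your proposal is correct and matches the paper's proof. You obtain the same orbit-incidence matrix $N$, with rows $(1,2,0,0)$, $(0,0,3,0)$, $(0,1,0,1)$ and $(0,0,1,1)$, and your coefficients $\alpha=(1,1,1,2)$ are the paper's $(+1,+1,+1,-1)$ modulo $3$; the resulting augmentations ($43$ for you, $7$ in the paper) are both $\equiv 1\pmod 3$. The only difference is cosmetic: you conclude via Prop.~\ref{prop: orbit cycle}, while the paper adds and subtracts the constraint equations directly.
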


\begin{proof}
    The incidence pattern is determined by the four ray and context orbits. Clearly, the coordinate context consists of the three $\cP_1$-rays. A single $\cC_2$-context contains one $\cP_1$-ray and two $\cP_3$-rays, hence, every $\cP_1$-ray occurs in $\frac{|\cC_2|}{|\cP_1|}=2$, and every $\cP_3$-ray in $\frac{2|\cC_2|}{|\cP_3|}=1$ such contexts. Similarly, a $\cC_3$-context contains one $\cP_2$-ray and two $\cP_4$-rays, hence, every $\cP_2$-ray occurs in $\frac{|\cC_3|}{|\cP_2|}=3$, and every $\cP_4$-ray in $\frac{2|\cC_3|}{|\cP_4|}=1$ such contexts. Finally, every $\cC_4$-context contains one $\cP_3$-ray and two $\cP_4$-rays, hence, every $\cP_3$-ray occurs in $\frac{|\cC_4|}{|\cP_3|}=1$, and every $\cP_4$-ray in $\frac{2|\cC_4|}{|\cP_4|}=1$ such contexts. In summary:
    \begin{center}
    \begin{tabular}{c|cccc}
        \toprule
        & $\cC_1$ & $\cC_2$ & $\cC_3$ & $\cC_4$ \\
         \midrule
        $\cP_1$ & $1$ & $2$ & $0$ & $0$\\
        $\cP_2$ & $0$ & $0$ & $3$ & $0$\\
        $\cP_3$ & $0$ & $1$ & $0$ & $1$\\
        $\cP_4$ & $0$ & $0$ & $1$ & $1$\\
        \bottomrule
    \end{tabular}
    \end{center}
    Now, add all constraint equations $\sum_{v\in C}x_v\equiv 1\pmod 3$ belonging to $\cC_1$, $\cC_2$ and $\cC_3$, and subtract those belonging to $\cC_4$. For a ray in $\cP_1$ the resulting coefficient is $1+2=3\equiv0\pmod3$, for a ray in $\cP_2$ it is $3\equiv0\pmod3$, and for rays in $\cP_3$ and $\cP_4$ it is $1-1=0$. On the other hand, $|\cC_1|+|\cC_2|+|\cC_3|-|\cC_4|=7\equiv1\pmod3$, which contradicts the earlier count.
\end{proof}

Equivalently, if $A$ denotes the $31\times43$ context--ray incidence matrix, Prop.~\ref{prop: cycle criterion} supplies a vector $\lambda\in\mathbb F_3^{31}$ which is constant on the four $W$-orbits of contexts, with orbit coefficients $(+1,+1,+1,-1)$, explicitly $\lambda=\sum_{C\in\cC_1}[C]+\sum_{C\in\cC_2}[C]+\sum_{C\in\cC_3}[C]-\sum_{C\in\cC_4}[C]$ satisfies $A^{T}\lambda=0\pmod p$ and $\mathbf{1}^T\lambda=1\pmod p$. Finally, a quantum solution exists by Prop.~\ref{prop: canonical qsol}, hence, the LCS associated with the ray-context arrangement above provides a quantum--classical gap for the pair $(n=3,d=3)$.

\subsection{Quantum solutions to LCS over $\zz_n$ for arbitrary $2\leq n\in\mathbb{N}$}\label{sec: full solution}

We now consider LCS over $\zz_n$ for arbitrary integers $2\leq n\in\mathbb{N}$.

\begin{lemma}\label{lm: prime divisor lift}
    Let $A$ be the incidence matrix of a ray-context arrangement in $\C^d$ (see Def.~\ref{def: canonical LCS}), and let $p\mid n$ be prime. If $Ax\equiv\mathbf{1}\pmod p$ is classically unsatisfiable, then so is $Ax\equiv\mathbf{1}\pmod n$.
\end{lemma}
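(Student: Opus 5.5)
The plan is to prove the contrapositive by reduction modulo $p$: a classical solution of $Ax\equiv\mathbf{1}\pmod n$ yields one of $Ax\equiv\mathbf{1}\pmod p$. Since $p\mid n$, the canonical projection $\pi:\zz_n\to\zz_p$, $a\mapsto a\bmod p$, is a well-defined ring homomorphism. The incidence matrix $A$ has entries in $\{0,1\}$, so it is defined over $\zz$ and its reductions modulo $n$ and modulo $p$ are compatible with $\pi$. The right-hand side $\mathbf{1}$ is likewise the all-ones vector in both rings.

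The key steps, in order, are as follows. First, suppose $x\in\zz_n^{N}$ satisfies $Ax\equiv\mathbf{1}\pmod n$. Lift $x$ to an integer vector $\tilde x\in\zz^N$; then $A\tilde x-\mathbf{1}\in n\zz^M$. Because $n\zz\subseteq p\zz$, it follows that $A\tilde x-\mathbf{1}\in p\zz^M$. Hence $\pi(x)=\tilde x\bmod p$ satisfies $A\pi(x)\equiv\mathbf{1}\pmod p$, which contradicts the assumed unsatisfiability modulo $p$.

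Alternatively, one can phrase the argument through Prop.~\ref{prop: cycle criterion}. Take an incidence cycle $\lambda\in\ker(A^T)\subset\zz_p^M$ with $\mathbf{1}^T\lambda\not\equiv0\pmod p$, and multiply a lift of $\lambda$ by $n/p$. This gives $\lambda'\in\zz_n^M$ with $A^T\lambda'\equiv0\pmod n$ and $\mathbf{1}^T\lambda'\equiv (n/p)\,\mathbf{1}^T\lambda\not\equiv0\pmod n$. This version has the advantage of producing an explicit certificate, namely the scaled cycle, for the $\zz_n$ system.

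There is no real obstacle. The only point to check is that reduction modulo $p$ commutes with the matrix product, which holds because $A$ and $\mathbf{1}$ have integer entries. In the cycle version, one should also check that the scaling by $n/p$ preserves nonvanishing of the augmentation: $(n/p)c\equiv0\pmod n$ holds iff $p\mid c$, and $p\nmid c$ by assumption. I would present the direct reduction argument and remark on the explicit cycle as a corollary.
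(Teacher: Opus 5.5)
Your proposal is correct. Your secondary argument, lifting an $\Fp$-incidence cycle $\lambda$ and rescaling by $n/p$, is exactly the paper's proof. Your primary argument is a genuinely different and more elementary route. You prove the contrapositive by pushing a putative $\zz_n$-solution through the ring homomorphism $\zz_n\to\zz_p$, which is well defined since $p\mid n$ and commutes with $A$ and $\mathbf{1}$ because these have integer entries.

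The reduction argument needs neither Prop.~\ref{prop: cycle criterion} nor the duality behind it. It also visibly works for any integer coefficient matrix and any integer right-hand side, so the hypothesis that $A$ comes from a ray-context arrangement plays no role. The paper's route instead invokes the nontrivial direction (i)$\Rightarrow$(ii) of Prop.~\ref{prop: cycle criterion} over $\Fp$ to obtain $\lambda$. In exchange, it outputs an explicit certificate $\lambda'=\frac{n}{p}\lambda$ for the $\zz_n$-system, which matches the paper's practice of exhibiting incidence cycles as witnesses of unsatisfiability (as in Thm.~\ref{thm: d=p>3 prime} and Thm.~\ref{thm: qutrit LCS}).

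Your check that the rescaling keeps the augmentation nonzero is right: $\frac{n}{p}c\equiv 0\pmod n$ if and only if $p\mid c$. Presenting the reduction argument as the proof and the rescaled cycle as a remark is a sound choice.
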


\begin{proof}
    Recall from Prop.~\ref{prop: cycle criterion} that $Ax\equiv\mathbf{1}\pmod p$ is classically unsatisfiable if and only if there exists an incidence cycle over $\Fp$ of the form $A^T\lambda\equiv0\pmod p$ and $\mathbf1^T\lambda\not\equiv0\pmod p$. Choose integer representatives of the entries of $\lambda$ such that $A^T\lambda=pz$ for some integer vector $z$. Then $A^T\lambda'=nz\equiv 0\pmod n$ for $\lambda'=\frac{n}{p}\lambda$, and since $p\nmid\mathbf1^T\lambda$, we also have $\mathbf1^T\lambda'=\frac{n}{p}\mathbf1^T\lambda\not\equiv0\pmod n$.
\end{proof}

We thus obtain the following general existence result for LCS with a quantum-classical gap.

\begin{theorem}\label{thm: qsols general}
    Let $2\leq n\in\mathbb{N}$ and $d\in\mathbb{N}$. There exists a linear constraint system over $\zz_n$ which has no classical solution but admits a $d$-dimensional quantum solution if and only if
    \begin{align*}
        \gcd(n,d)>1
        \qquad\text{and}\qquad
        d\geq 3\; .
    \end{align*}
    In particular, the minimal dimension of a quantum solution of the form in Prop.~\ref{prop: canonical qsol} to a classically unsatisfiable LCS over $\zz_n$ is the least prime divisor of $n$ for $n$ odd, $3$ if $3\mid n$ and $4$ if $2\mid n$, $3\nmid n$.
\end{theorem}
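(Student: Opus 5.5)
The plan is to prove the two directions separately: necessity from Prop.~\ref{prop: determinant} plus a direct argument in dimension two, and sufficiency by reducing to a prime $p$ dividing both $n$ and $d$ and invoking the constructions of Sec.~\ref{sec: qsols}. The minimal-dimension statement then follows by elementary arithmetic.

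\textbf{Necessity.} If a classically unsatisfiable LCS over $\zz_n$ has a $d$-dimensional quantum solution, then $\gcd(n,d)>1$ by Prop.~\ref{prop: determinant}. In particular $d\neq 1$, so it remains to exclude $d=2$. By Thm.~\ref{thm: frame functions vs qsols}, it suffices to exhibit a $\zz_n$-valued frame function of weight one on $\C^2$.

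\begin{itemize}
\item In $\C^2$ every maximal context has the form $\{P,\one-P\}$.
\item Orthogonal complementation $P\mapsto \one-P$ is a fixed-point-free involution on $\cP_1(\C^2)$. It pairs antipodal points of the Bloch sphere.
\item Choose one representative from each pair (axiom of choice). Set $f=1$ on the chosen representative and $f=0$ on its partner.
\end{itemize}

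Then $f(P)+f(\one-P)=1$ for every context, so $f$ is a frame function of weight one. Lm.~\ref{lm: ff + invertible weight -> csol} then yields a classical solution to every LCS with a $2$-dimensional quantum solution. This settles the ``only if'' direction.

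\textbf{Sufficiency.} Assume $d\geq 3$ and fix a prime $p$ dividing $\gcd(n,d)$. I first reduce to modulus $p$: it suffices to show there is no $\zz_p$-valued frame function of weight one on $\C^d$. Indeed, a $\zz_n$-valued frame function of weight one would reduce modulo $p$ (well defined since $p\mid n$) to a $\zz_p$-valued one of weight one. Thm.~\ref{thm: frame functions vs qsols} then gives a finite canonical LCS over $\zz_n$ with a gap. (Alternatively, one can lift a finite canonical LCS over $\zz_p$ via Lm.~\ref{lm: prime divisor lift}.) By Thm.~\ref{thm: frame functions vs qsols} again, it is enough to produce, for each pair $(p,d)$, some LCS over $\zz_p$ with a $d$-dimensional quantum solution and no classical solution. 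I split into three cases.

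\begin{itemize}
\item \emph{$d=p\geq 5$.} This is Thm.~\ref{thm: d=p>3 prime}.
\item \emph{$d=p=3$.} This is Thm.~\ref{thm: qutrit LCS} together with Prop.~\ref{prop: canonical qsol}.
\item \emph{$d=kp$ with $k\geq 2$.} This is Thm.~\ref{thm: HJS LCS} applied with $n=p$. It includes $p=2$: since $d\geq 3$ and $2\mid d$, we have $d\geq 4$, hence $k\geq 2$.
\end{itemize}

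The case $d=p=2$ does not arise because $d\geq 3$. In every case the arrangement is canonical in the sense of Prop.~\ref{prop: canonical qsol}. For $k>2$, Thm.~\ref{thm: HJS LCS} is non-constructive, so here I would pass through item (3) of Thm.~\ref{thm: frame functions vs qsols}, or use the explicit HJS direct-sum extension described after Thm.~\ref{thm: HJS LCS}.

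\textbf{Minimal dimension.} By the equivalence just proved, the minimal $d$ is the least integer $d\geq 3$ with $\gcd(n,d)>1$, and it is attained by a canonical solution.

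\begin{itemize}
\item \emph{$n$ odd.} Let $q\geq 3$ be the least prime divisor of $n$. Then $d=q$ works. Any $d$ with $\gcd(n,d)>1$ has a prime factor dividing $n$, hence $d\geq q$.
\item \emph{$n$ even, $3\mid n$.} Here $d=3$ works and is the smallest allowed value.
\item \emph{$n$ even, $3\nmid n$.} Now $d=3$ is coprime to $n$, while $d=4$ satisfies $\gcd(n,4)\geq 2$.
\end{itemize}

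\textbf{Main obstacle.} I expect the delicate point to be the $d=2$ exclusion. It relies on a non-constructive frame function, which is harmless because Lm.~\ref{lm: ff + invertible weight -> csol} only needs existence. A secondary subtlety is making sure that the prime-modulus constructions transfer to composite $n$ with the quantum solution preserved. The frame-function reduction handles this cleanly, since the canonical operators $\zeta_n^{P_v}$ from Prop.~\ref{prop: canonical qsol} solve the LCS over $\zz_n$ for any arrangement.
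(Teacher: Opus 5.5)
Your proof is correct, but it handles sufficiency differently from the paper.

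\textbf{How the paper does it.} The paper lifts the explicit prime-dimensional examples (Thm.~\ref{thm: d=p>3 prime}, Thm.~\ref{thm: qutrit LCS}) to $\zz_n$ by Lm.~\ref{lm: prime divisor lift}. For odd $p$ it reaches $d=kp$ by tensoring every operator with $\one_k$. For $p=2$ it cites rank-one parity Kochen--Specker sets in every even dimension $d\geq4$.

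\textbf{How you do it.} You pass everything through frame functions: reduce a weight-one $\zz_n$-valued frame function mod $p$, then apply Thm.~\ref{thm: frame functions vs qsols}. You then cover every $d=kp$ with $k\geq2$, including $p=2$, by Thm.~\ref{thm: HJS LCS}.

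\textbf{Trade-offs.} Your route is more uniform. Via item (3) of Thm.~\ref{thm: frame functions vs qsols} it always yields rank-one canonical arrangements, whereas the paper's tensored solutions involve rank-$k$ projectors. However, for $k>2$ your route is non-constructive and rests on the external Thm.~3 of Harding--Jager--Smith. You can make it self-contained with Lm.~\ref{lm: dimension lifting}. The $d'=p$ examples (resp.\ the $18$-ray arrangement in $\C^4$ when $p=2$) force every $\zz_p$-valued frame function on $\C^d$ to be constant for $d>p$ (resp.\ $d\geq5$). Such a function has weight $dc\equiv0\pmod p$, so no weight-one frame function exists.

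\textbf{One caveat.} Your fallback to the explicit HJS direct-sum extension fails for $p=2$ and $d\equiv2\pmod4$ (e.g.\ $d=6$). That LCS is satisfiable for $n=2$ and $L$ odd, so there you must rely on Thm.~\ref{thm: HJS LCS} itself or on the dimension-lifting argument above.

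\textbf{The case $d=2$.} Your weight-one frame function, choosing one ray from each orthogonal pair, is the same idea as the paper's common-eigenbasis argument. It additionally makes explicit the consistent choice of eigenspace across rows, which the paper's brief argument leaves implicit.
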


In other words, the determinant condition $\gcd(n,d)>1$ in Prop.~\ref{prop: determinant} is sufficient in every dimension except for $d=2$, where quantum and classical solvability coincide.

\begin{proof}
    Suppose that a LCS over $\zz_n$ has a $d$-dimensional quantum solution $\{X_i\}_i$. Taking determinants with $\zeta^{s_i}_n=\det(X_i)$ gives $As\equiv db\pmod n$, hence, if $\gcd(n,d)=1$, then $d^{-1}s$ is a classical solution. This implies $\gcd(n,d)>1$. Moreover, if $d=2$, then any two commuting non-scalar $2\times2$ unitaries share the same two eigenvectors, hence, all variables connected via some relation admit a common eigenbasis, and the eigenvalues of either eigenspace satisfy the LCS. This implies $d\geq 3$.
    
    Conversely, suppose $\gcd(n,d)>1$ and $d\geq 3$. If $p$ odd prime divides $\gcd(n,d)$, a classically unsatisfiable LCS over $\Fp$ with a $p$-dimensional quantum solution exists by Thm.~\ref{thm: d=p>3 prime} for $p\geq 5$ and by Thm.~\ref{thm: qutrit LCS} for $p=3$. Moreover, Lm.~\ref{lm: prime divisor lift} yields a classically unsatisfiable LCS over $\zz_n$. After tensoring every operator with $\one_k$ for $d=kp$ thus yields a $d$-dimensional quantum solution.
    
    It remains to consider the case in which no odd prime divides $\gcd(n,d)$. In this case, $n$ and $d$ are both even, and since $d\geq 3$, this implies $d\geq 4$. There exist rank-one parity Kochen--Specker arrangements in every (even) dimension $d\geq 4$ (see e.g. Ref.~\cite{WaegellAravind2017}), their canonical LCS are classically unsatisfiable over $\mathbb F_2$, and Lm.~\ref{lm: prime divisor lift} thus yields a classically unsatisfiable LCS over $\zz_n$. Its canonical projector operators give a quantum solution directly in dimension $d$ by Prop.~\ref{prop: canonical qsol}.
\end{proof}

\textbf{$\zz_n$-valued frame functions.} Using similar reasoning as in Ref.~\cite{HardingJagerSmith2005}, Thm.~\ref{thm: frame functions vs qsols} and Thm.~\ref{thm: qsols general} yield an improvement on the characterisation of $\zz_n$-valued frame functions in Thm.~3 of Ref.~\cite{HardingJagerSmith2005}.

\begin{lemma}[\cite{HardingJagerSmith2005}]\label{lm: dimension lifting}
    If every $\zz_n$-valued frame function on $\mathbb C^{d'}$ for $d'\geq 2$ has weight zero, then every $\zz_n$-valued frame function on $\mathbb C^d$ is constant on rank-one projectors for every $d\ge d'+1$.
\end{lemma}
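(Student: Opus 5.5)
Fix a $\zz_n$-valued frame function $f$ on $\C^d$ with $d\geq d'+1$ and weight $w$. The plan is to show that $f$ takes the same value on any two rank-one projectors $P,Q$. The mechanism is to restrict $f$ to subspaces of dimension $d'$ and use the hypothesis there. For a subspace $V\subseteq\C^d$ with $\dim V=d'$, fix an orthonormal basis $B'$ of $V^\perp$. Define $f_V(P):=f(P)$ for rank-one $P\leq V$. Every orthonormal basis of $V$, together with $B'$, is an orthonormal basis of $\C^d$. Hence $\sum_{P\in C}f_V(P)=w-\sum_{Q\in B'}f(Q)$ for every context $C$ of $V$, so $f_V$ is a $\zz_n$-valued frame function on $V\cong\C^{d'}$. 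By hypothesis its weight vanishes:
\begin{align*}
    \sum_{P\in C}f(P)=0\qquad\text{for every orthonormal basis $C$ of any $d'$-dimensional subspace } V .
\end{align*}
In particular, the additive extension of $f$ to rank-$d'$ projectors, which is well defined by the frame property, is identically zero.

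Next, take two rank-one projectors $P=[x]$ and $Q=[y]$ that are \emph{orthogonal}. Since $d\geq d'+1$, and assuming $d'\geq 2$, choose an orthonormal family $R_1,\dots,R_{d'-1}$ of rays orthogonal to both $x$ and $y$; this needs $d'-1\leq d-2$, which holds. The subspaces $V_x=\mathrm{span}\{x,R_1,\dots,R_{d'-1}\}$ and $V_y=\mathrm{span}\{y,R_1,\dots,R_{d'-1}\}$ are both $d'$-dimensional. The displayed relation for each gives
\begin{align*}
    f(P)+\sum_k f(R_k)=0=f(Q)+\sum_k f(R_k),
\end{align*}
hence $f(P)=f(Q)$. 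So $f$ agrees on any pair of orthogonal rays.

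Finally, for arbitrary rays $[x],[y]$ with $d\geq 3$, pick a unit vector $z$ orthogonal to both $x$ and $y$, which is possible since $\dim\{x,y\}^\perp\geq d-2\geq 1$. Then $f([x])=f([z])=f([y])$ by the previous step. So $f$ is constant on rank-one projectors, $f(P)=c$, and consequently $f(P)=c\cdot\rank(P)$ on all finite-rank projectors.

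I expect the main obstacle to be bookkeeping rather than depth. One point is to check that the restriction $f_V$ really has constant weight, which requires fixing the complement basis $B'$; this is fine because any two complements give the same sum via the frame property in $\C^d$. The other point is the edge case $d'=1$, excluded by the hypothesis $d'\geq 2$. There, the orthogonal-pair step would use no auxiliary rays and would directly force $f=0$ on every ray that shares a $1$-dimensional $V$, which is trivially the ray itself, so the argument does not apply.
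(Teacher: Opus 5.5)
Your proof is correct and is essentially the paper's argument. Like the paper, you restrict $f$ to the two $d'$-dimensional subspaces spanned by $P$ (resp.\ $Q$) together with a common orthonormal family $R_1,\dots,R_{d'-1}$, and compare the two weight-zero relations; you also make explicit why each restriction is a frame function, which the paper leaves implicit. Your detour through orthogonal pairs and an intermediate ray $z$ is unnecessary, since your choice of the $R_k$ never uses $x\perp y$: for any two rays $\dim\{x,y\}^\perp\geq d-2\geq d'-1$, so the paper applies the comparison directly to arbitrary $P,Q$.
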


\begin{proof}
    Let $f$ be an $\zz_n$-valued frame function on $\mathbb C^d$, and let $P,Q\in\cP_1(\C^d)$ be any two rank-one projectors. Since $d\geq d'+1$, there are $d'-1$ mutually orthogonal rank-one projectors $R_1,\ldots,R_{d'-1}$ perpendicular to both $P$ and $Q$. By assumption, the restriction of $f$ to the $d'$-dimensional subspaces $P\oplus R_1\oplus\cdots\oplus R_{d'-1}$ and $Q\oplus R_1\oplus\cdots\oplus R_{d'-1}$ has weight zero, hence,
    \begin{align*}
        f(P)+\sum_{i=1}^{d'-1}f(R_i)
        =0
        =f(Q)+\sum_{i=1}^{d'-1}f(R_i)\; ,
    \end{align*}
    and therefore $f(P)=f(Q)$.
\end{proof}

\begin{theorem}\label{thm: frame function rigidity}
    Let $n\ge2$. Suppose that
    \begin{align*}
        d>p
        \quad\text{for every odd prime }p\mid n,
    \end{align*}
    and, if $2\mid n$, suppose in addition that $d\ge5$. Then every $\zz_n$-valued frame function on $\mathbb C^d$ is constant.
    
    Consequently, its possible weights are of the form $\gcd(n,d)\zz_n$.
\end{theorem}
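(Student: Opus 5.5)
The plan is to reduce to prime powers via the Chinese remainder theorem and then peel off one factor of $p$ at a time. The engine is Lm.~\ref{lm: dimension lifting}, fed by the weight-zero statements that follow from the incidence cycles already constructed.

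\textbf{Step 1 (weight zero modulo a prime).} Fix a prime $p\mid n$ and put $d'=p$ if $p$ is odd and $d'=4$ if $p=2$. Let $g$ be any $\zz_p$-valued frame function on $\C^{d'}$ of weight $w$. Take the ray-context arrangement $(\cP,\cC)$ in $\C^{d'}$ from Thm.~\ref{thm: d=p>3 prime} or Thm.~\ref{thm: qutrit LCS} for $p$ odd, or a parity Kochen--Specker arrangement in $\C^4$ for $p=2$. By Prop.~\ref{prop: cycle criterion} it carries an incidence cycle $\lambda$ with $A^T\lambda=0$ and $\varepsilon(\lambda)\neq0$ in $\Fp$. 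Let $x$ be the restriction of $g$ to $\cP$. Then $Ax=w\mathbf 1$ (this is Lm.~\ref{lm: ff + invertible weight -> csol} with $x_f=g|_\cP$), so
\[
0=x^TA^T\lambda=w\,\varepsilon(\lambda).
\]
Since $\varepsilon(\lambda)$ is a unit in $\Fp$, this forces $w=0$. Every $d'$-dimensional subspace of $\C^d$ inherits such a frame function by restriction, with the weight computed relative to complete bases of that subspace. So Lm.~\ref{lm: dimension lifting} (its proof works verbatim over any coefficient group) shows that every $\zz_p$-valued frame function on $\C^d$ is constant whenever $d\ge d'+1$. The hypotheses $d>p$ for odd $p$ and $d\ge5$ for $p=2$ are exactly what this requires.

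\textbf{Step 2 (induction on the prime power).} Let $p^k\,\|\,n$ and let $f$ be a $\zz_{p^k}$-valued frame function on $\C^d$. We induct on $k$, the case $k=1$ being Step 1.

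1. Reducing modulo $p$ gives a $\zz_p$-valued frame function, which is constant by Step 1, say equal to $\bar c$.
2. Choose a lift $c\in\zz_{p^k}$ of $\bar c$. Then $f-c$ is again a frame function, of weight $w-dc$, and it takes values in $p\zz_{p^k}\cong\zz_{p^{k-1}}$.
3. So $h=(f-c)/p$ is a well-defined $\zz_{p^{k-1}}$-valued frame function on $\C^d$. The dimension hypothesis does not depend on $k$, so the induction hypothesis makes $h$ constant.
4. Hence $f$ is constant.

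\textbf{Step 3 (CRT and weights).} A $\zz_n$-valued frame function $f$ projects to $\zz_{p^k}$-valued frame functions for each $p^k\,\|\,n$. These are constant by Step 2, and the Chinese remainder theorem glues the constants into a single $c\in\zz_n$ with $f\equiv c$. The weight is then $dc$. As $c$ ranges over $\zz_n$, the values $dc$ form exactly $d\zz_n=\gcd(n,d)\zz_n$, and every such value is attained by a constant function.

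\textbf{Main obstacle.} The delicate point is Step 1. It needs an incidence cycle over $\Fp$ whose augmentation is a \emph{unit}; the lifted cycles of Lm.~\ref{lm: prime divisor lift} only have augmentation $\frac np\varepsilon(\lambda)$, which is not a unit modulo $n$. This is precisely why the argument is run prime by prime and then bootstrapped through the $p$-adic induction of Step 2, rather than attempted directly over $\zz_n$. One must also check carefully that restricting to a $d'$-dimensional subspace really yields a frame function there, so that the Step 1 arrangement can be embedded and Lm.~\ref{lm: dimension lifting} applies.
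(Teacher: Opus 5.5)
Your proof is correct and follows the paper's route. Weight zero for $\zz_p$-valued frame functions on $\C^{p}$ (on $\C^4$ for $p=2$) comes from the incidence cycles of Thm.~\ref{thm: d=p>3 prime}, Thm.~\ref{thm: qutrit LCS} and the $18$ ray-$9$ context parity set; then Lm.~\ref{lm: dimension lifting} gives constancy, a lift to prime powers and the Chinese remainder theorem finish, and the only differences are that you spell out the $p$-adic induction (subtract a lift of the mod-$p$ constant, divide by $p$) that the paper cites from Cor.~4 of \cite{HardingJagerSmith2005}, and you pair $g|_{\cP}$ with $\lambda$ directly rather than invoking Lm.~\ref{lm: ff + invertible weight -> csol}, which is equivalent.
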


\begin{proof}
    Let $p\mid n$ be prime. For odd $p$, the LCS in Thm.~\ref{thm: d=p>3 prime} and Thm.~\ref{thm: qutrit LCS} imply that every $\zz_p$-valued frame function on $\C^p$ has weight zero, as a frame function of nonzero weight $w\in\zz_p$ could be rescaled to give a classical solution of the canonical equations by Lm.~\ref{lm: ff + invertible weight -> csol}. Lm.~\ref{lm: dimension lifting} therefore implies that every $\zz_p$-valued frame function is constant whenever $d>p$. For $p=2$, the $18$ ray-$9$ context refinement of the Mermin-Peres square in dimension $4$ similarly forces every $\zz_2$-valued frame function to have weight zero, and Lm.~\ref{lm: dimension lifting} thus implies constancy for $d\geq 5$.
    
    It remains to pass from prime coefficients to arbitrary $n$. Harding, Jager and Smith show that constancy for cyclic coefficients lifts to their powers by induction \cite[Cor.~4]{HardingJagerSmith2005}; the same statement applies here with the improved prime-dimensional input above. Thus every $\zz_{p^a}$-valued frame function is constant for every prime power $p^a\Vert n$.\footnote{Here, $p^a\Vert n$ means that $p^a\mid n$ but $p^{a+1}\nmid n$, that is, $p^a$ is the exact power of $p$ dividing $n$.} The Chinese remainder theorem then implies that every $\zz_n$-valued frame function is constant on rank-one projectors, $f(P)=c$; equivalently, its additive extension satisfies $f(Q)=c\cdot\mathrm{rank}(Q)$, and its weight is therefore $dc$. Consequently, the set of possible weights is the image of multiplication by $d$ in $\zz_n$, namely $d\zz_n=\gcd(n,d)\zz_n$.
\end{proof}

\subsection{Quantum solutions to LCS over arbitrary finite fields $\Fq$}\label{sec: finite fields}

There is another natural generalisation of LCS over $\zz_p$ beyond the case of $p$ prime, in terms of finite fields. For $q=p^f$, let  $A\in\mathrm{Mat}_{M\times N}(\Fq)$ and $b\in\Fq^M$. Classically, a solution is $x\in\Fq^N$ with $Ax=b$ over $\Fq$. Note that since the field $\Fq$ has characteristic $p$, that is, $pa=0$ for every $a\in\Fq$, its additive group is the elementary Abelian group $(\zz_p)^f$. For $f>1$ this group is not cyclic, hence, there is no single unitary whose powers can faithfully encode all field values.

Instead, define the map $\tr_{\Fq/\Fp}:\mathbb{F}_q\ra\mathbb{F}_p$ by $\tr_{\Fq/\Fp}(x):=\sum_{i=0}^{f-1}x^{p^i}$, and fix the additive character
\begin{equation}\label{eq: additive character}
    \psi(a)=\exp\left(\frac{2\pi i}{p}\tr_{\Fq/\Fp}(a)\right)\; .
\end{equation}
Importantly, note that the trace pairing $(s,a)\mapsto\tr_{\mathbb F_q/\mathbb F_p}(sa)$ is nondegenerate, consequently every additive character of $\Fq$ is uniquely of the form $a\mapsto\psi(sa)$ for some $s\in\Fq$.

\begin{definition}\label{def: Fqsol}
    A quantum solution of $Ax=b$ over $\Fq$ on $\cH=\C^d$ is a family of unitary representations $X_i:(\Fq,+)\to\UH$ such that variables occurring in a common row have commuting images and such that, with $\psi$ as in Eq.~(\ref{eq: additive character}), for every row $r\in[M]$ and every $t\in\Fq$,
    \begin{equation}\label{eq: Fq constraints}
        \prod_{i=1}^NX_i(tA_{ri})
        =\psi(tb_r)\one\; .
    \end{equation}
\end{definition}

For $q=p$ the additive group is cyclic and $X_i$ is determined by the single unitary $X_i(1)$, that is, Def.~\ref{def: Fqsol} reduces to the definition of quantum solutions in Eq.~(\ref{eq: qsol}). In turn, for $f>1$ Def.~\ref{def: Fqsol} differs from Eq.~\ref{eq: qsol} which instead has $X^q=\one$ and thus models a cyclic group $\zz_q$ of exponent $q$. By contrast, $(\Fq,+)$ has exponent $p$ and is noncyclic for $f>1$. As a consequence $\psi$ is not injective, and the quantifier in Eq.~(\ref{eq: Fq constraints}) is necessary to ensure that the operator relation faithfully encodes the underlying additive relation, that is, $\psi(ta)=1\ \forall t\in\Fq\Leftrightarrow a=0$, while $\psi(a)=1\not\Leftrightarrow a=0$.

With Def.~\ref{def: Fqsol}, Prop.~\ref{prop: canonical qsol} is readily generalised. For a rank-one projector $P\in\cP_1(\C^d)$ define
\begin{equation}\label{eq:Fqprojector}
    X_P(a)
    =(\one-P)+\psi(a)P\; .
\end{equation}
Then $X_P(a+b)=X_P(a)X_P(b)$. Over contexts $C$ (resolutions of the identity), this gives
\begin{align*}
    \prod_{P\in C}X_P(t)
    =\psi(t)\one\qquad\forall t\in\Fq\; ,
\end{align*}
hence, every LCS over $\Fq$ associated to a ray-context arrangement has a quantum solution.

\begin{lemma}\label{lm: finite field extension}
    Let $q=p^f$. Suppose that an LCS $Ax=b$ over $\Fp$ with $A\in\mathrm{Mat}_{M\times N}(\Fp)$ and $b\in\Fp^M$ has no classical solution but admits a $d$-dimensional quantum solution. Then the same LCS has no classical solution over $\F_q$ yet admits a $d$-dimensional quantum solution in the sense of Def.~\ref{def: Fqsol}.
\end{lemma}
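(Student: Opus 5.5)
Two things need to be shown: that the system stays classically unsolvable over $\Fq$, and that a $d$-dimensional quantum solution over $\Fp$ can be lifted to one in the sense of Def.~\ref{def: Fqsol}.

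\textbf{Classical part.} The system $Ax=b$ has coefficients in the prime subfield $\Fp\subset\Fq$. Solvability of a linear system is decided by comparing ranks, $\rank(A)=\rank(A\mid b)$, and rank does not change under field extension, since it is determined by the vanishing of minors whose entries lie in $\Fp$. So a solution over $\Fq$ would force a solution over $\Fp$.

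A more explicit route, in the spirit of Prop.~\ref{prop: cycle criterion}, goes through the cycle. Unsolvability over $\Fp$ gives $\lambda\in\Fp^M$ with $A^T\lambda=0$ and $b^T\lambda\neq0$. Any $x\in\Fq^N$ with $Ax=b$ would then give $0=x^TA^T\lambda=b^T\lambda\neq0$ in $\Fq$, a contradiction. Equivalently, one can pick an $\Fp$-linear functional $\pi:\Fq\to\Fp$ with $\pi(1)=1$ and apply it coordinatewise.

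\textbf{Quantum part.} Let $Y_1,\ldots,Y_N$ be the given $\Fp$-solution, so $Y_i^p=\one$, operators sharing a row commute, and $\prod_iY_i^{A_{ri}}=\zeta_p^{b_r}\one$. Define
\[
X_i(a):=Y_i^{\tr_{\Fq/\Fp}(a)},
\]
where the exponent is read in $\zz_p$. This is well defined because $Y_i^p=\one$, and it is a homomorphism $(\Fq,+)\to\UH$ because the trace is additive. Images of $X_i$ are powers of $Y_i$, so commutation within rows is inherited.

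For the constraint, fix a row $r$ and $t\in\Fq$. Because $A_{ri}\in\Fp$, the trace is $\Fp$-linear, $\tr(tA_{ri})=A_{ri}\tr(t)$. Hence
\[
\prod_iX_i(tA_{ri})=\Bigl(\prod_iY_i^{A_{ri}}\Bigr)^{\tr(t)}=\zeta_p^{b_r\tr(t)}\one .
\]
Using $b_r\in\Fp$ once more, $\zeta_p^{b_r\tr(t)}=\psi(tb_r)$, which is Eq.~\eqref{eq: Fq constraints}.

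\textbf{Main subtlety.} The only step needing care is the repeated use of $\Fp$-linearity of $\tr_{\Fq/\Fp}$, which requires the coefficients $A_{ri}$ and $b_r$ to lie in $\Fp$; this is guaranteed by the hypothesis that the LCS is defined over $\Fp$. The dimension $d$ is unchanged throughout, since no new Hilbert space is introduced.
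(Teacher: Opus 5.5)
Your proof is correct and follows the paper's argument. The classical part is the same incidence-cycle argument (your rank and projection variants are also valid), and your lift $X_i(a)=Y_i^{\tr_{\Fq/\Fp}(a)}$ is exactly the paper's $X_i(a)=\sum_{c\in\Fp}\psi(ac)P_{i,c}$ for the spectral decomposition $Y_i=\sum_c\zeta_p^cP_{i,c}$; the only cosmetic difference is that you verify the row relation by pulling $\tr_{\Fq/\Fp}(t)$ out of a product of commuting operators, whereas the paper checks it eigenspace by eigenspace in a common eigenbasis.
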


\begin{proof}
    By Prop.~\ref{prop: cycle criterion}, classical inconsistency is detected by a chain $\lambda\in\ker(A^T)\subset\Fp^M\cong\cC_1$ such that
    \begin{align*}
        A^T\lambda=0\; ,\qquad
        b^T\lambda\neq0\; .
    \end{align*}
    These relations remain valid in $\Fq$ under the natural inclusion $\Fp\subset\Fq$, hence, $Ax=b$ remains classically unsatisfiable over $\Fq$.
    
    Let $X_1,\ldots,X_N$ be a quantum solution to the LCS over $\Fp$, and let $X_i=\sum_{a\in\Fp}\zeta^a_p P_{i,a}$ with $\zeta_p=e^{\frac{2\pi i}{p}}$ be its spectral decomposition. For $s\in\Fq$, define
    \begin{align*}
        X_i(s)
        :=\sum_{a\in\Fp}\psi(sa)P_{i,a}.
    \end{align*}
    Since $\psi$ is additive, $X_i(s+s')=X_i(s)X_i(s')$, hence, $X_i$ defines a unitary representation of $(\Fq,+)$. Moreover, commuting $X_i$'s have commuting spectral projections, hence, variables occurring in a common row have commuting representation images.
    
    Now, fix a row $r\in[M]$, choose a common rank-one spectral resolution $\{P_k\}_{k=1}^d$ for the operators occurring in that row, and let $a_{ik}\in\Fp$ be the exponent of $X_i$ on the $k$-th common eigenspace. The original operator equation implies $\sum_iA_{ri}a_{ik}=b_r$ in $\Fp$ for every $k$. Hence, for every $t\in\Fq$,
    \begin{equation*}
        \prod_iX_i(tA_{ri})
        =\prod_i\sum_{k=1}^d\psi(tA_{ri}a_{ik})P_k
        =\sum_{k=1}^d\psi\left(t\sum_iA_{ri}a_{ik}\right)P_k
        =\psi(tb_r)\sum_{k=1}^dP_k
        =\psi(tb_r)\one\; .\qedhere
    \end{equation*}
\end{proof}

Lm.~\ref{lm: finite field extension} applies to quantum solutions of any type, in particular, it applies to incidence data of ray-context arrangements as well as tensorised quantum solutions in the binary case.

\begin{theorem}\label{thm: qsols ff}
    For every finite field $\Fq$ with $q=p^f$ for $p$ prime, there exists a LCS with no classical solution, yet with a $d$-dimensional quantum solution if and only if $p\mid d$ and $d\geq 3$.
\end{theorem}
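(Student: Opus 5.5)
The proof will run in two directions, modelled on Thm.~\ref{thm: qsols general}, with the finite-field input supplied by Lm.~\ref{lm: finite field extension}.

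\emph{Necessity.} Let $\{X_i\}$ be a $d$-dimensional quantum solution over $\Fq$ in the sense of Def.~\ref{def: Fqsol}. Each $X_i$ is a unitary representation of the finite abelian group $(\Fq,+)$. Hence it decomposes as $X_i(a)=\sum_{s\in\Fq}\psi(sa)Q_{i,s}$, with spectral projectors $Q_{i,s}$ labelled by characters, using that every character has the form $a\mapsto\psi(sa)$.

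\begin{itemize}
\item \textbf{Determinant argument.} Set $\det X_i(a)=\psi(\sigma_i a)$, where $\sigma_i=\sum_s s\,\rank(Q_{i,s})\in\Fq$. Taking determinants in Eq.~(\ref{eq: Fq constraints}) gives $\psi\bigl(t\sum_iA_{ri}\sigma_i\bigr)=\psi(t\,d\,b_r)$ for all $t$. Nondegeneracy of the trace pairing then yields $A\sigma=d\,b$ in $\Fq$. If $p\nmid d$, then $d$ is invertible in $\Fq$ and $d^{-1}\sigma$ is a classical solution. So an unsatisfiable system forces $p\mid d$.
\item \textbf{Excluding $d=2$.} Here I will adapt the argument from Thm.~\ref{thm: qsols general}. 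In $\C^2$, commuting operators that are not both scalar share a common eigenbasis. I will work with a connected component of the compatibility graph of the variables. Each representation $X_i$ that is not scalar on $\C^2$ determines an orthonormal basis $\{e,e^\perp\}$ of $\C^2$ (its two spectral lines). Any variable commuting with it is either scalar or diagonal in that same basis. Thus, along each component, all non-scalar variables share one basis $\{e_1,e_2\}$. Reading off the character label of each $X_i$ on $e_1$ gives values $x_i\in\Fq$. Applying Eq.~(\ref{eq: Fq constraints}) to the vector $e_1$ for every $t$, together with nondegeneracy, gives $\sum_iA_{ri}x_i=b_r$.
\item \textbf{Where I expect care.} This step needs a little care with scalar variables, since they carry one common label on both lines, and with rows whose operators are all scalar. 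Both cases are handled by the same restriction to one line, so I expect this to be routine.
\end{itemize}

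\emph{Sufficiency.} Assume $p\mid d$ and $d\ge3$.

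\begin{itemize}
\item \textbf{$p$ odd.} Thm.~\ref{thm: d=p>3 prime} (for $p\ge5$) and Thm.~\ref{thm: qutrit LCS} (for $p=3$) give an unsatisfiable LCS over $\Fp$ with a $p$-dimensional quantum solution. Lm.~\ref{lm: finite field extension} lifts this to $\Fq$. For $d=kp$, I tensor each representation with $\one_k$, i.e.\ take $X_i(a)\otimes\one_k$. This preserves commutation and the relations Eq.~(\ref{eq: Fq constraints}), since scalars stay scalars.
\item \textbf{$p=2$.} Now $d\ge4$ is even. For every even $d\ge4$, the rank-one parity Kochen--Specker arrangements cited in the proof of Thm.~\ref{thm: qsols general} give unsatisfiable canonical LCS over $\F_2$. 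Their canonical quantum solutions live directly in dimension $d$. Lm.~\ref{lm: finite field extension} transfers them to $\F_{2^f}$.
\end{itemize}

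The main obstacle is essentially nil, since the heavy lifting is done by the earlier constructions. The only genuinely new point is the determinant step over $\Fq$, because the ring $\zz_q$ must not be confused with the field $\Fq$. Rather than cyclic exponents, that step has to go through character labels and nondegeneracy of the trace pairing, which is why I compute $\det X_i(a)$ via $\sigma_i\in\Fq$.
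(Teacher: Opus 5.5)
Your proposal is correct and follows the paper's proof: the determinant/character argument giving $A\sigma=db$, the common-eigenline argument for $d=2$, and sufficiency via the $\Fp$ constructions behind Thm.~\ref{thm: qsols general}, lifted by Lm.~\ref{lm: finite field extension} (you simply unpack Thm.~\ref{thm: qsols general} into its $p$ odd and $p=2$ cases). One step in the $d=2$ argument needs tightening: take connected components of the compatibility graph on the non-scalar variables only, because a scalar variable can link non-scalar variables that have different eigenbases; each row's non-scalar variables lie in a single such component, so fixing one line per component and reading off labels there (scalars have the same label on both lines) gives the classical solution.
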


\begin{proof}
    Suppose first that $X_i:(\Fq,+)\to\U(\C^d)$ is a quantum solution, and note that for each $i$, the determinant $\chi_i(a):=\det(X_i(a))$ is an additive character of $\Fq$. Since $\psi$ in Eq.~(\ref{eq: additive character}) is nontrivial, nondegeneracy of the trace pairing implies that there is a unique $s_i\in\Fq$ such that $\chi_i(a)=\psi(s_i a)$ for all $a\in\Fq$. Taking determinants in the row relations gives, for every $t\in\Fq$,
    \begin{align*}
        \psi\left(t\sum_i A_{ri}s_i\right)
        =\psi(t d b_r)\; ,
    \end{align*}
    hence, $As=db$. If $p\nmid d$, then $d$ is invertible in $\Fq$ and $d^{-1}s$ is a classical solution.
    
    The case $d=2$ is excluded by similar reasoning as in Thm.~\ref{thm: qsols general}: restriction of each representation to common eigenspace yields additive characters $a\mapsto\psi(s_ia)$, hence, the row relations read $\psi(t(\sum_iA_{ri}s_i-b_r))=1$ for all $t\in\Fq$, which is equivalent to $\sum_iA_{ri}s_i=b_r$ in $\Fq$.
    
    Conversely, suppose $p\mid d$ and $d\geq 3$. By Thm.~\ref{thm: qsols general}, there exists a classically unsatisfiable LCS over $\Fp=\zz_p$ with a $d$-dimensional quantum solution. Lm.~\ref{lm: finite field extension} lifts the same LCS and quantum solution to $\Fq$, while preserving classical unsatisfiability.
\end{proof}

\section{Conclusion}\label{sec: conclusion}

We completely resolve the existence problem of linear constraint systems (LCS) that have no classical solution, that is, are inconsistent, yet admit finite-dimensional quantum solutions in the form of unitary operators on a finite-dimensional Hilbert space. Such examples were known previously in the binary case, with quantum solutions built from $n$-qubit Pauli operators, and have been reported to exist (yet without published references) in Ref.~\cite{vdB2025,vdB2026}.

We show that, unlike the qubit Pauli-based constructions in the binary case, a similar tensor-product decomposition with higher-dimensional qudit tensor factors of the unitary operators in a quantum solution never corresponds to a classically unsatisfiable LCS - unless one of its tensor factors already furnishes such an example. We then considered LCS associated to ray-context arrangements, that is, sets of rays in and their resolutions of the identity. Such arrangements define quantum solutions to their naturally associated LCS, shifting the problem of a quantum-classical gap for such LCS to finding arrangements whose incidence relations are inconsistent.

We explicitly relate the existence of quantum solutions to such LCS with $\zz_n$-valued frame functions, which allows us to obtain a family of LCS with a quantum-classical gap from the work of Ref.~\cite{HardingJagerSmith2005}. Moreover, we complement their construction to obtain a general and constructive existence result for LCS with quantum but no classical solution in the variables $n,d$. In particular, we give a $43$ ray-$31$ context arrangement in $\C^3$, whose associated LCS is classically unsatisfiable.\\

\textbf{Acknowledgements.} OpenAI GPT-5.6 Sol was used as an interactive aid during exploratory computational searches and proof development. All mathematical statements and exact computations reported here were subsequently checked by the author.

\bibliographystyle{plain}
\bibliography{bibliography}

\appendix

\section{Relation with Peres-Penrose qutrit contextuality proofs}\label{app: d=p=3}

We consider the construction in Thm.~\ref{thm: qutrit LCS} from the perspective of Kochen-Specker contextuality \cite{KochenSpecker,Specker1960}. Since a valuation, that is, a (noncontextual) $0,1$-valued assignment to rank-one projectors such that in every resolution exactly one projector is assigned the value $1$ and all others $0$, is in particular a ${0,1}$-valued solution of the constraint equations $\sum_{P\in C}x_P\equiv 1\pmod 3$, the ray--context arrangement in Thm.~\ref{thm: qutrit LCS} is also KS-uncolourable. (More generally, any ray-context arrangement whose associated LCS is classically unsatisfiable does not admit valuations.)

In fact, the occurrence of the ring $\zz[\sqrt{-2}]$ is closely related to the classical Peres construction \cite{Peres1991}. Peres' $33$-ray proof of the Kochen--Specker theorem uses the real alphabet
\begin{align*}
    \{0,\pm1,\pm\sqrt2\}
\end{align*}
and is organised into $16$ orthogonal triads. Gould and Aravind subsequently showed that the orthogonality graph of the Peres set is isomorphic to that of Penrose's complex $33$ ray construction \cite{Penrose1994}, and that the two belong to a continuous family of unitarily inequivalent realisations of the same orthogonality graph \cite{GouldAravind2010}. More recently, systematic searches over algebraic coordinate alphabets have shown that $\zz[\sqrt{-2}]$ supports a $33$-ray realisation of the same Peres-type orthogonality graph \cite{Kernaghan2026}. The underlying cancellation mechanism is the same norm-$2$ identity, $\alpha\bar\alpha=2$, which plays the role of $(\sqrt2)^2=2$ in the real coordinates of Peres' construction.

Recent work further relates these algebraic realisations with the deformation family of the Peres--Penrose graph \cite{FloresGordillo2026}. For the present comparison it is useful to choose the following
$W=W(B_3)$-symmetric realisation of the
$\zz[\sqrt{-2}]$ Peres orthogonality graph:
\begin{equation*}
    \cP_{33}^{(-2)}
    =W\cdot[1,0,0]\sqcup W\cdot[0,1,-1]\sqcup W\cdot[0,1,\alpha]\sqcup W\cdot[1,1,\alpha]\; .
\end{equation*}
The four orbit sizes are $3,6,12,12$, giving $33$ rays in total. Direct enumeration gives $72$ orthogonal pairs and $16$ complete triads, in agreement with the Peres orthogonality graph.

The intersection of $\cP_{33}^{(-2)}$ with the present arrangement $\cP$ is particularly simple:
\begin{align}\label{eq: Peres seed rays}
    \cP_{15}
    :=\cP_{33}^{(-2)}\cap\cP
    =W\cdot[1,0,0]\sqcup W\cdot[0,1,\alpha]\; .
\end{align}
The two constructions thus have exactly $3+12=15$ rays in common, and these rays support seven common orthogonal bases: namely, the coordinate basis and the six $W$-images of
\begin{equation}\label{eq: Peres seed contexts}
    \bigl\{[1,0,0],[0,1,\alpha],[0,\alpha,1]\bigr\}\; .
\end{equation}
Consequently, the two geometries may be viewed as different symmetric extensions of the same $15$ ray-$7$ context skeleton:
\begin{align}
    \cP_{33}^{(-2)}
    &=\cP_{15}\sqcup W\cdot[0,1,-1]\sqcup W\cdot[1,1,\alpha]\; ,\nonumber\\
    \cP
    &=\cP_{15}\sqcup W\cdot[1,1,1]\sqcup W\cdot[1,1-\alpha,\alpha]\; .\label{eq: Peres' LCS}
\end{align}
The first branch adds $6+12=18$ rays and produces the $33$-ray Peres-type geometry, whereas the second adds $4+24=28$ rays and produces the present $43$ ray-$31$ context arrangement.

There is a second useful comparison. In the usual $33$ ray description of the Peres proof, orthogonal pairs that do not belong to one of the explicitly displayed $33$ ray triads still enter the KS colouring argument. If these pairs are all completed explicitly to three-element contexts, one obtains a $57$ ray-$40$ context representation of the Peres geometry \cite{Pavicic2019}. Over $\zz[\sqrt{-2}]$ this completion can be written especially simply as
\begin{equation}\label{eq: Peres57}
   \cP_{57}^{(-2)}
    =\cP_{33}^{(-2)}\sqcup W\cdot[3,-1,-\alpha]\; ,
\end{equation}
where the additional orbit contains $24$ rays. The resulting configuration has $57$ rays and $40$ triads. The $24$ ray orbit in Eq.~(\ref{eq: Peres57}) should not be confused with the $24$ ray orbit $W[1,1-\alpha,\alpha]$ in Eq.~(\ref{eq: Peres' LCS}), as they are different projective $W(B_3)$-orbits. Hence, the present $43$ ray-$31$ context arrangement is not obtained simply by deleting $14$ rays from the completed Peres $57$ ray configuration. Rather, it is a genuinely different symmetric extension of the common skeleton Eq.~(\ref{eq: Peres seed rays}) and Eq.~(\ref{eq: Peres seed contexts}).

This distinction is also visible over $\zz_3$. If one retains only the $16$ complete triads of $\cP_{33}^{(-2)}$, their $16\times33$ incidence matrix has full row rank over $\zz_3$. The same is true for the $40\times57$ incidence matrix of $\cP_{57}^{(-2)}$, that is, $\mathrm{rank}_{\mathbb{F}_3}(A_{33})=16$ and $\mathrm{rank}_{\mathbb{F}_3} (A_{57})=40$. Consequently, the Peres incidence system does not possess a nontrivial $\zz_3$ row relation among the constraint equations of its associated LCS.\footnote{Consistency of this LCS over $\mathbb{F}_3$ is also implied by the $\{0,\frac{1}{2},1\}$-colouring of its 57 rays in Ref.~\cite{Ramanathan2024}.} More generally, ordinary KS uncolourability therefore does not necessarily produce a classical obstruction. By
contrast, the present $43$ ray-$31$ context arrangement has
\begin{align*}
   \mathrm{rank}_{\mathbb{F}_3}(A_{43})
   =30\; ,\qquad
   \mathrm{rank}_{\mathbb{F}_3}[A_{43}\mid{\bf1}]
    =31\; ,
\end{align*}
and the missing row relation is precisely the analytic four-orbit certificate of Thm.~\ref{thm: qutrit LCS}.\\

\textbf{Arithmetic interpretation.} The comparison above suggests a simple arithmetic explanation for why the present construction appears with $\zz[\sqrt{2}]$ replaced by $\zz[\sqrt{-2}]$. Peres' arrangement uses the norm identity $\alpha\bar\alpha=2$, which is the complex analogue of $(\sqrt2)^2=2$ and is exactly the norm-$2$ cancellation mechanism identified in recent algebraic studies of three-dimensional KS sets \cite{Kernaghan2026}.

The present construction exploits, in addition, the trace-zero identity  $\alpha+\bar\alpha=0$ in $\zz[\sqrt{-2}]$. For example, the entire context orbit $\cC_2$ in Sec.~\ref{sec: d=p=3} is generated by the elementary orthogonality
\begin{align*}
    \langle(0,1,\alpha),(0,\alpha,1)\rangle
    =\alpha+\bar\alpha
    =0\;.
\end{align*}
The trace-zero identity is absent at the ordinary real Peres point $\alpha=\sqrt2$ and is not implied merely by the condition $|\alpha|^{2}=2$. The point $\alpha=\sqrt{-2}$ therefore carries additional orthogonality relations unavailable at a generic point of the Peres--Penrose deformation family.

In this sense, the $43$ ray-$31$ context arrangement may be regarded as an arithmetic enhancement of the Peres-type geometry with coordinates in $\zz[\sqrt{-2}]$ instead of $\zz[\sqrt{2}]$: the familiar norm-$2$ cancellation produces the underlying Peres-like skeleton, while the additional relation $\alpha+\bar\alpha=0$ permits a different $W(B_3)$-symmetric completion. It is precisely this extra arithmetic and incidence structure that turns ordinary Kochen--Specker uncolourability into the stronger obstruction over $\zz_3$.

\section{A family of LCS from conference matrices}\label{app: conference LCS}

We provide another family (in addition to the HJS construction in Sec.~\ref{sec: HJS}) of LCS over $\zz_p$ with $p$ odd prime that are classically unsatisfiable, yet admit quantum solutions in dimension $d=kp$ for $k\geq 2$.

\subsection{Conference matrices and their sign orbits}\label{sec: conference frames}

A real \emph{conference matrix of order $n$} is a matrix $C\in\mathrm{Mat}_n(\R)$ with $C_{ii}=0$, $C_{ij}\in\{\pm1\}$ and
\begin{equation}\label{eq: conference matrix}
    CC^T=(n-1)\one_n\; .
\end{equation}
Its rows form an orthogonal frame, each row has support $n-1$ and squared norm $n-1$. Paley's construction provides a real conference matrix of order $q+1$ for every odd prime power $q$ \cite{Paley}, in particular, for every odd prime $p$ it provides a real conference matrix of order $p+1$.

For completeness, we recall his construction explicitly. The multiplicative group $\Fp^\times$ is cyclic of order $p-1$. Its subgroup of nonzero squares therefore has index two, and the quotient character
\begin{align*}
    \chi:\Fp^\times\longrightarrow\{\pm1\}
\end{align*}
is given by $\chi(x)=1$ on squares and $\chi(x)=-1$ on nonsquares. $\chi$ is a group homomorphism, hence, $\chi(xy)=\chi(x)\chi(y)$ which remains valid after extending $\chi$ by $\chi(0)=0$. Moreover, since there are as many squares as nonsquares, one finds
\begin{equation}\label{eq: zero character sum}
    \sum_{t\in\Fp}\chi(t)
    =0\; .
\end{equation}
For $c\neq0$, consider $S(c)=\sum_{t\in\Fp}\chi(t)\chi(t+c)$. Clearly, the term for $t=0$ vanishes. For $t\neq 0$, let $u=(t+c)/t=1+c/t$, which  defines a bijection from $\Fp^\times$ onto $\Fp\setminus\{1\}$, and we therefore have
\begin{align*}
    \chi(t)\chi(t+c)
    =\chi(t^2u)=\chi(u)\; .
\end{align*}
Hence, using Eq.~(\ref{eq: zero character sum}), $S(c)=\sum_{u\neq1}\chi(u)=-1$, and thus
\begin{equation}\label{eq: Paley character sum}
    \sum_{t\in\Fp}\chi(x-t)\chi(y-t)=
    \begin{cases}
        p-1,&x=y\\
        -1,&x\neq y
    \end{cases}\; ,
\end{equation}
Let $Q$ be the $p\times p$ matrix indexed by $x,y\in\Fp$ with $Q_{xy}=\chi(x-y)$. Eq.~(\ref{eq: Paley character sum}) gives $QQ^T=p\one_p-J_p$, where $J_p=\mathbf{1}\mathbf{1}^T$ is the all-$1$ matrix in $\mathrm{Mat}_p(\R)$. Now, let $s=\chi(-1)$, and add one additional row and column to define the augmented matrix $C_p=\begin{pmatrix}0&\mathbf1^T\\ s\mathbf1&Q\end{pmatrix}$. Then one verifies
\begin{align*}
    C_pC_p^T
    =\begin{pmatrix}
        p&\mathbf1^TQ^T\\
        Q\mathbf1&s^2J_p+QQ^T
    \end{pmatrix}\; ,
\end{align*}
and since $Q\mathbf1=0$ by Eq.~(\ref{eq: zero character sum}) and $s^2=1$ we find $C_pC_p^T=p\one_{p+1}$.

Fix a coordinate block $S$ of size $p+1$. The projective diagonal-sign group is
\begin{align*}
    E_S
    =\{\pm1\}^{S}/\{\pm(1,\ldots,1)\}\cong(\zz_2)^p\; .
\end{align*}
Since $p+1$ is even, the sign product is invariant under multiplication by the global sign. Hence,
\begin{equation}\label{eq:Hpsign}
    H_S
    =\left\{[\varepsilon]\in E_S\mid\prod_{j\in S}\varepsilon_j=1\right\}
\end{equation}
is a well-defined index-two subgroup of order $2^{p-1}$.

\begin{lemma}\label{lm: sign orbit}
    The action of $H_S$ on any conference frame is free.  Its orbit contains every projective support-$p$ signed ray in the block $S$ exactly once.  Consequently $2^{p-1}$ is the minimum number of conference frames needed to cover all such rays.
\end{lemma}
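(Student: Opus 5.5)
The plan is a counting argument. I will read a \emph{conference frame} in the block $S$ as the $p+1$ rays spanned by the rows of a conference matrix of order $p+1$ supported on $S$, for instance Paley's $C_p$. Each row $r_i$ has exactly one zero entry, in position $i$, and entries $\pm1$ elsewhere. So the frame contains exactly one projective support-$p$ signed ray for each choice of zero position $i\in S$.

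First I count all such rays. A projective support-$p$ signed ray in $S$ is fixed by two pieces of data: its zero position, with $p+1$ choices, and a sign vector on the remaining $p$ coordinates modulo a global sign, with $2^{p}/2=2^{p-1}$ choices. There are therefore $(p+1)2^{p-1}$ such rays in total. Next, $H_S$ acts on frames by diagonal sign matrices. Such a matrix $D_\varepsilon$ is orthogonal, so it maps an orthogonal frame to an orthogonal frame. It preserves supports and $\pm1$ entries, so the image is again a conference frame with the same zero positions. The action is well defined projectively, because $-D_\varepsilon$ induces the same map on rays.

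The key step is injectivity: for a fixed zero position $i$, I claim the map $[\varepsilon]\mapsto[D_\varepsilon r_i]$ is injective on $H_S$. Suppose $[D_\varepsilon r_i]=[D_{\varepsilon'}r_i]$. Set $\eta=\varepsilon\varepsilon'$, which lies in $H_S$. Then $\eta_j(r_i)_j=c\,(r_i)_j$ for all $j$ and a fixed $c=\pm1$. Since $(r_i)_j\neq0$ for $j\neq i$, this forces $\eta_j=c$ for all $j\neq i$, while $\eta_i$ is unconstrained. Multiplying by the global sign $c$, I get $[\eta]=[(1,\ldots,1,c\eta_i)]$. The defining condition $\prod_j\eta_j=1$, which is invariant under the global sign because $|S|=p+1$ is even, then gives $c\eta_i=1$. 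Hence $[\eta]$ is trivial. This is the only place where the parity restriction defining $H_S$ is used: the full group $E_S$ would fail here, since it contains the single-coordinate flip at the zero position.

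The three conclusions then follow.
\begin{enumerate}
    \item \emph{Freeness.} If $h\in H_S$ fixes a frame setwise, it must fix each ray individually, because distinct rays of the frame have distinct zero positions and $h$ preserves supports. Injectivity then gives $h=1$.
    \item \emph{Exact covering.} For each $i$, the $2^{p-1}$ rays $[h r_i]$ with $h\in H_S$ are pairwise distinct. There are exactly $2^{p-1}$ support-$p$ rays with zero position $i$, so each of them occurs exactly once in the orbit. Taking the union over $i$, every support-$p$ signed ray in $S$ appears exactly once across the $2^{p-1}$ frames of the orbit.
    \item \emph{Minimality.} Each frame contains only $p+1$ rays, so covering all $(p+1)2^{p-1}$ rays needs at least $2^{p-1}$ frames. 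The orbit attains this bound.
\end{enumerate}

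I do not expect a serious obstacle. The only delicate point is the sign bookkeeping in the injectivity step, specifically checking that the sign-product condition is compatible with projective equivalence; this is exactly where the evenness of $p+1$ enters.
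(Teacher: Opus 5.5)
Your proof is correct and is essentially the paper's argument: your injectivity computation is the paper's observation that the signs on the $p$ nonzero coordinates of row $i$ are fixed up to a global sign, after which the product condition fixes the sign at the zero position; the covering lower bound by counting is the same as in the paper. The one small difference is how you get freeness: you derive it from the triviality of the $H_S$-stabiliser of a single row, which uses the parity condition, whereas the paper compares the nonzero coordinates of two different rows, an argument that needs no parity and so shows that even $E_S$ acts freely on frames.
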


\begin{proof}
    Column sign changes preserve orthogonality because a diagonal sign matrix is orthogonal. Each conference row has a unique zero coordinate. If a projective sign transformation stabilises the whole frame, it cannot permute rows with different zero positions; each row must therefore be fixed projectively. Comparing its nonzero coordinates for different rows forces all column signs to be equal, hence the projective transformation is trivial. The $H_S$-orbit therefore has $2^{p-1}$ frames.
    
    Fix a support-$p$ projective signed ray with zero coordinate $i$. The signs on its $p$ nonzero coordinates determine the required column signs up to an overall sign. The remaining sign at coordinate $i$ does not affect the ray and can be chosen uniquely so that the total sign product is $+1$.  Thus exactly one element of $H_S$ maps the $i$th conference row to the prescribed ray.  There are $(p+1)2^{p-1}$ support-$p$ projective signed rays in the block, and every frame contains $p+1$ of them, so no smaller family of frames can cover all rays.
\end{proof}

The larger sign group $E_S$ gives twice as many frames, and the full hyperoctahedral closure (see Eq.~(\ref{eq: projective Gp})) is larger still. Lm.~\ref{lm: sign orbit} shows that the reduced group $H_S$ is the natural minimal choice if one wishes to retain every support-$p$ signed ray in each block.

\subsection{Explicit construction}\label{sec: universal construction primes}

For $p$ odd prime, consider the following ray-context arrangement in $\R^d$ with $d=kp$ and $k\geq 2$. As before, we denote by $[v]$ the ray, equivalently rank-$1$ projection corresponding to the vector $v\in\R^d$.\\

\textbf{The arrangement.} Take the two projective signed-support classes
\begin{align}\label{eq:Rpgeneral}
    \cP_1&=\{[e_i]\mid 1\le i\le d\}\; , &
    \cP_p&=\{[v]\mid v\in\{0,\pm1\}^d,\ |\supp(v)|=p\}\; .
\end{align}
These define single orbits under the projective hyperoctahedral group
\begin{align}\label{eq: projective Gp}
    \overline{W(B_d)}
    \cong(\zz_2)^{d-1}\rtimes S_d,
    \qquad |\overline{W(B_d)}|=2^{d-1}d!\; .
\end{align}
The cardinalities of their orbits are $|\cP_1|=d$ and $|\cP_p|=2^{p-1}\binom dp$, and those of their stabiliser groups
\begin{align*}
    |\mathrm{Stab}_{\overline{W(B_d)}}([e_1])|
    =2^{d-1}(d-1)!\; ,\qquad
    |\mathrm{Stab}_{\overline{W(B_d)}}([1^p0^{d-p}])|
    =2^{d-p}p!(d-p)!\; .
\end{align*}
Next, consider two types of contexts. The first is the single context corresponding to the coordinate basis $C_0=\{[e_i]\mid i\in[d]\}$. For the second, choose a block $S\subset[d]$ with $|S|=p+1$, place a fixed conference frame of order $p+1$ in $S$ (see Sec.~\ref{sec: conference frames}), and complete the frame by the $d-p-1$ coordinate rays outside $S$. We denote the union over all blocks $S$ of the local $H_S$ orbits by $\cC_1$. The resulting ray and context partitions are equitable (see comment after Prop.~\ref{prop: orbit cycle}), hence, Prop.~\ref{prop: orbit cycle} applies.

To count its elements, note that the symmetric group $S_d$ is transitive on the $\binom{d}{p+1}$ coordinate blocks. A block has stabiliser $S_{p+1}\times S_{d-p-1}$ and orbit size $d!/((p+1)!(d-p-1)!)=\binom{d}{p+1}$. For each fixed block, $H_S$ has order $2^{p-1}$ and Lm.~\ref{lm: sign orbit} shows that the conference-frame stabiliser in $H_S$ is trivial, so its local orbit has size $2^{p-1}$. Consequently, contexts in $\cC_1$ are indexed by pairs $(S,h)$ with $|S|=p+1$ and $h\in H_S$, giving $2^{p-1}\binom{d}{p+1}$ frames. In summary,
\begin{align}\label{eq: Paley contexts}
    |\cP_1\sqcup \cP_p|
    &=d+2^{p-1}\binom{d}{p}\; ,&
    |\{C_0\}\sqcup\cC_1|
    &=1+2^{p-1}\binom{d}{p+1}\; .
\end{align}

\begin{theorem}\label{thm: conference LCS}
    For every odd prime $p$ and $d=kp$ with $k\geq 2$ and $k\not\equiv1\pmod p$, there exists a LCS over $\zz_p$ with a $d$-dimensional quantum solution, yet no classical solution.
\end{theorem}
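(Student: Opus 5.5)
The plan is to apply the orbit criterion of Prop.~\ref{prop: orbit cycle} to the arrangement $(\cP_1\sqcup\cP_p,\{C_0\}\sqcup\cC_1)$ constructed above. Quantum solvability is automatic by Prop.~\ref{prop: canonical qsol}, since every context is an orthogonal resolution of the identity. Indeed, a conference frame in the block $S$ consists of $p+1$ mutually orthogonal support-$p$ vectors spanning $\R^S$, because $C_pC_p^T=p\one_{p+1}$. Column sign changes preserve this property, and the coordinate rays outside $S$ complete the frame to a basis of $\R^d$. So only classical unsatisfiability over $\zz_p$ needs proof. I would find an $\alpha=(\alpha_0,\alpha_1)\in\zz_p^2$ with $N\alpha\equiv0$ and $|\{C_0\}|\alpha_0+|\cC_1|\alpha_1\not\equiv0\pmod p$.

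\textbf{Step 1: the orbit-incidence matrix $N$.} A coordinate ray $[e_i]$ lies in $C_0$ exactly once. It lies in a context $(S,h)\in\cC_1$ exactly when $i\notin S$, since the conference rows have support $p$ and are never coordinate rays. This gives $N_{1,0}=1$ and $N_{1,1}=2^{p-1}\binom{d-1}{p+1}$. A support-$p$ signed ray with support $T$ is not in $C_0$, so $N_{p,0}=0$. It can only lie in frames whose block $S$ satisfies $S\supset T$, and there are $d-p$ such blocks. For each such block, Lm.~\ref{lm: sign orbit} says the ray appears in exactly one frame of the $H_S$-orbit. Hence $N_{p,1}=d-p$. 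These counts are independent of the chosen representatives, which confirms the equitability claimed before the statement.

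\textbf{Step 2: the cycle and its augmentation.} Since $d=kp$, we have $N_{p,1}=d-p\equiv0\pmod p$, so the $\cP_p$-row of $N\alpha\equiv0$ holds automatically. I would take $\alpha_1=1$ and $\alpha_0=-2^{p-1}\binom{d-1}{p+1}$, which kills the $\cP_1$-row. The augmentation in Eq.~(\ref{eq:orbitcriterion}) is then
\begin{align*}
    \alpha_0+2^{p-1}\binom{d}{p+1}
    =2^{p-1}\left[\binom{d}{p+1}-\binom{d-1}{p+1}\right]
    =2^{p-1}\binom{d-1}{p}\; ,
\end{align*}
using Pascal's rule.

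\textbf{Step 3 (main point): the residue of $\binom{d-1}{p}$ modulo $p$.} The base-$p$ expansion of $d-1=kp-1$ is $(k-1)\cdot p+(p-1)$. By Lucas' theorem, $\binom{d-1}{p}\equiv\binom{k-1}{1}\binom{p-1}{0}=k-1\pmod p$. Since $p$ is odd, $2^{p-1}\equiv1$, so the augmentation is $\equiv k-1$. This is nonzero precisely when $k\not\equiv1\pmod p$, which is exactly the hypothesis. Prop.~\ref{prop: orbit cycle} then gives classical unsatisfiability.

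The only delicate point is the count $N_{p,1}$, which relies on the freeness and exact-once covering in Lm.~\ref{lm: sign orbit}. I would double-check that the signed rays in a frame are counted projectively, so that no ray is double counted within a block. Everything else is a routine orbit count.
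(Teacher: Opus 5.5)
Your proposal is correct and follows the paper's proof essentially verbatim: the same orbit-incidence matrix $N$ with entries $1$, $2^{p-1}\binom{d-1}{p+1}$, $0$, $d-p$, the same cycle $\alpha=(-2^{p-1}\binom{d-1}{p+1},1)$, and the same Pascal--Fermat--Lucas evaluation of the augmentation as $k-1 \pmod p$. Your Lucas step also holds when $k-1\geq p$, since $\binom{ap+b}{cp+e}\equiv\binom{a}{c}\binom{b}{e}\pmod p$ for all $a,c\geq 0$ and $0\leq b,e<p$.
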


\begin{proof}
    Consider the canonical LCS associated with the projectors $\cP=\cP_1\sqcup\cP_p$ and contexts $\cC=\cC_0\sqcup\cC_1$ as in the arrangement defined in the paragraphs preceding the theorem. Since conference matrices exist for all prime powers, such an arrangement always exists.

    A quantum solution to this LCS exists by Prop.~\ref{prop: canonical qsol}. To prove that it is inconsistent, we evaluate the criterion in Prop.~\ref{prop: orbit cycle}. We begin to determine the dual incidence matrix in Eq.~(\ref{eq: N}). Its values in the first column are obvious. For the second, note first that a fixed coordinate ray belongs to a conference context exactly when its coordinate is outside the chosen block. Hence, its degree in $\cC_1$ is $2^{p-1}\binom{d-1}{p+1}$. Moreover, a fixed support-$p$ ray in a conference frame is part of a block of size $p+1$, which leaves $d-p$ possible choices.  Within each block Lm.~\ref{lm: sign orbit} gives exactly one conference frame containing the ray, hence, this is the degree of a support-$p$ ray in $\cC_1$. Consequently, we find
    \begin{align*}
        N_d^{(p)}=
        \begin{pmatrix}
            1&2^{p-1}\binom{d-1}{p+1}\\
            0&d-p
        \end{pmatrix}.
    \end{align*}
    Now, define the context chain (over the two-fold decomposition $\cC=\cC_0\sqcup\cC_1$)
    \begin{align*}
        \alpha=(-a_d,1)^T\; ,\qquad
        a_d=2^{p-1}\binom{d-1}{p+1}\; .
    \end{align*}
    Since $p\mid d$, the second row of $N_d^{(p)}\alpha$ is simply $d-p\equiv 0\pmod p$, while the first row vanishes by cancellation. On the other hand, the augmentation of the corresponding context chain is
    \begin{align*}
        \sum_j|\cC_j|\alpha_j
        =-a_d+2^{p-1}\binom{d}{p+1}
        =2^{p-1}\left[\binom{d}{p+1}-\binom{d-1}{p+1}\right]
        =2^{p-1}\binom{d-1}{p}\; ,
    \end{align*}
    by Pascal's rule. Fermat's theorem gives $2^{p-1}\equiv 1\pmod p$, and with Lucas' theorem\footnote{Recall that Lucas' theorem asserts that for $n,r$ with $n=\sum_jn_jp^j$ and $r=\sum_jr_jp^j$ the respective base-$p$ expansions, it holds $\binom nr\equiv\prod_j\binom{n_j}{r_j}\pmod p$.} we find,
    \begin{equation*}
        \binom{kp-1}{p}\equiv k-1\pmod p\; .
    \end{equation*}
    Consequently, $\sum_j|\cC_j|\alpha_j\equiv k-1\pmod p$ is nonzero if and only if $k\not\equiv1\pmod p$.
\end{proof}

For every odd prime $p$, Thm.~\ref{thm: conference LCS} provides an entire family of LCS with quantum but no classical solution. In particular, a quantum solution exists in dimension $d=2p$, for which
\begin{align*}
    N_{2p}^{(p)}&\equiv
    \begin{pmatrix}1&-1\\0&0\end{pmatrix}\pmod p &
    \alpha&=(1,1)^T &
    \sum_j|\cC_j|\alpha_j\equiv 1\pmod p\; .
\end{align*}
We list the cardinalities of the ray-context arrangements with $d=2p$ for $p=3,5,7$ in Thm.~\ref{thm: conference LCS}.
\begin{center}
\begin{tabular}{c@{\qquad}c@{\qquad}r@{\qquad}r}
\toprule
$p$&$d$&rays&contexts\\
\midrule
$3$&$6$&$86$&$61$\\
$5$&$10$&$4042$&$3361$\\
$7$&$14$&$219662$&$192193$\\
\bottomrule
\end{tabular}
\end{center}
Since a conference frame requires $p+1$ coordinates, the family of LCS over $\Fp$ for every odd prime $p$ in Thm.~\ref{thm: conference LCS} yields $d$-dimensional quantum solutions for every $d=kp$ with $k\geq 2$ and $k\not\equiv 1\mod p$.

\section{Quantum solutions from $n$-qubit Pauli operators}\label{app: Pauli comparison}

Binary LCS with quantum but no classical solution are readily constructed from the $n$-qubit Pauli group. In fact, there exist various (partial) characterisations of operator-context arrangements based $n$-qubit Pauli operators \cite{Mermin,Arkhipov2012,TrandafirLisonekCabello2022,OkayRaussendorf2020,MullerGiorgetti2025,AbramskyCercelescuConstantin}. Their incidence-based analysis of the associated LCS being classically unsatisfiable are all special cases of Prop.~\ref{prop: cycle criterion}.\\

\textbf{Homotopical approach.} The incidence complex used here is closely related to two cohomological formulations of contextuality developed elsewhere. In the homotopical approach of Okay and Raussendorf~\cite{OkayRaussendorf2020}, an arrangement of observables and measurement contexts gives rise to a chain complex in which contexts generate degree two chain groups and observable labels generate degree one chain groups, with boundary given by the signed incidence relation of a context. Up to this uniform degree shift, this is precisely the incidence complex used above: our map
\begin{align*}
    \partial
    =A^T:\cC_1^{\mathrm{inc}}\longrightarrow\cC_0^{\mathrm{inc}}
\end{align*}
allows arbitrary LCS coefficients, while the corresponding complex of Ref.~\cite{OkayRaussendorf2020} is obtained from signed context relations. In either formulation a classical value assignment is a cochain whose coboundary equals the prescribed context phase, and contextuality is witnessed by the nontrivial pairing of that cohomology class with an incidence cycle. Consequently,
\begin{align*}
    A^T\lambda=0\; ,\qquad
    b^T\lambda\neq0\; ,
\end{align*}
is the degree-shifted incidence form of the same homological obstruction.\\

\textbf{Cellular cohomology.}  For Pauli realisations there is additional algebraic structure. Let $\cX\subset V$ be any subset of the symplectic Pauli labels and contexts $C\in\cC$ given by subsets of isotropic subspaces $C\subset I\in\mathrm{Iso}(V)$ of the symplectic space $(V,\omega)$ with $V\cong\mathbb{F}^{2n}_2$ and $\omega$ the canonical symplectic (alternating, bilinear, non-degenerate) form given by $\omega(v,w)=\sum_{i=1}^nv_iw_{n+i}-v_{n+i}w_i$.

Note first that the incidence relations of Pauli operators which determine the associated LCS are entirely given by their phase-free symplectic structure $(\cX,\cC)$ with $\cC^\mathrm{inc}_0=\mathbb{F}_2[\cX]$ and $\cC^\mathrm{inc}_1=\mathbb{F}_2[\cC]$.\footnote{The phase-free compatibility relations between $n$-qubit Pauli operators are equivalently encoded by the symplectic form, or by the corresponding anticommutation graph \cite{MullerGiorgetti2025}.} The $n$-qubit Pauli group is a projective representation of $V$, equivalently a central extension
\begin{align*}
    1\longrightarrow\zz_4\longrightarrow\cP_n\longrightarrow V\cong\mathbb{F}^{2n}_2\longrightarrow 1\; .
\end{align*}
The question of which incidence relations admit a realisation in terms of Pauli operators thus additionally involves the choice of Hermitian Pauli representatives $v\mapsto T_v$, that is, a choice of section of this extension. This additional phase freedom is encoded in the LCS via the constraint vector $b\in\mathbb F^{|\cC|}_2$ in $\prod_{v\in C}T_v=(-1)^{b_C}\one$ for all contexts $C\in\cC$. More precisely, since re-phasing does not change the underlying incidence relations, consistency of the associated LCS depends only on the cohomology class $[b]\in\mathrm{coker}(A)=\mathbb{F}^M_2/\im(A)$. In fact, the latter is true for any LCS.

What is special to Pauli-based quantum solutions is that the incidence class $[b]$ is induced by the Pauli multiplication $2$-cocycle $[\beta]$. More precisely, \cite{OkayRobertsBartlettRaussendorf} define a cellular chain complex whose $1$-cells $\cC^\cP_1=\mathbb{F}_2[V]$ are projective Pauli labels and whose $2$-cells encode products of commuting pairs with $\partial_\cP:\cC^\cP_2\ra\cC^\cP_1$ defined by
\begin{align*}
    \partial_\cP[v|w]
    =[w]-[v+w]+[v]\; ,
\end{align*}
where $v\mapsto T_v$ is some choice of Hermitian section with $T_{v+w}=(-1)^{\beta(v,w)}T_vT_w$. Associativity makes $\beta$ a $2$-cocycle, and rephasing the Pauli section changes it by a coboundary.

A context $C\in\cC$ with $C=\{v_1,\cdots,v_l\}$ may be triangulated into multiplication cells. Let $s_j=\sum_{i=1}^jv_i$, where $s_l=0$ encodes the row relation corresponding to the context $C$, and define the two-chain $F_C=\sum_{j=2}^l[s_{j-1}\mid v_j]$ (for any choice of ordering). Then
\begin{align}\label{eq: compatibility}
    \partial_\cP F_C
    =\sum_{j=2}^l([v_j]-[s_j]+[s_{j-1}])
    =\sum_{v\in C}[v]\; ,
\end{align}
The relation between the framework of Ref.~\cite{OkayRobertsBartlettRaussendorf} and our incidence complex with $\partial_\mathrm{inc}=A^T:\mathbb{F}_2[\cC_1]\ra\mathbb{F}_2[\cC_0]$ is thus given as follows: the inclusion $\phi_0:\cC^\mathrm{inc}_0=\mathbb{F}_2[\cX]\ra\cC^\cP_1=\mathbb{F}_2[V]$ together with the map $\phi_1:\cC^\mathrm{inc}_1\ra\cC^\cP_2$ given by $\mathbb{F}_2$-linear extension of the assignment $[C]\mapsto F_C$, define a map $\phi=(\phi_0,\phi_1)$ between chain complexes. Indeed, by Eq.~(\ref{eq: compatibility}), $\phi$ satisfies the (only nontrivial) chain-map identity
\begin{align}\label{eq: chain compatibility}
    \partial_\cP\phi_1
    =\phi_0\partial_\mathrm{inc}\; .
\end{align}
Moreover, evaluating the constraints yields
\begin{align*}
    (-1)^{b_C}\one
    =T_{v_1}\cdots T_{v_l}
    =(-1)^{\beta(s_1,v_2)}T_{s_2}T_{v_3}\cdots T_{v_l}
    =(-1)^{\sum_{j=2}^l\beta(s_{j-1},v_j)}T_{s_l}
    =(-1)^{\sum_{j=2}^l\beta(s_{j-1},v_j)}\one\; ,
\end{align*}
and thus $b_C=\sum_{j=2}^l\beta(s_{j-1},v_j)=\langle\beta,F_C\rangle$. On the other hand, the dual $\phi^*_1:(\cC^\cP_2)^*\ra(\cC^\mathrm{inc}_1)^*$ sends $\beta$ to $\phi^*_1(\beta)([C])=\beta(\phi_1([C]))=\langle\beta,F_C\rangle=b_C$. Hence, the cellular chain $F_\lambda=\phi_1[\lambda]=\sum_C\lambda_CF_C$ of an incidence cycle $\lambda$ satisfies
\begin{align}\label{eq: cohomological pairing}
    b^T\lambda
    =\langle\beta,F_\lambda\rangle
    =\langle[\beta],[F_\lambda]\rangle\; ,
\end{align}
where the last step uses that $\beta$ is a $2$-cocycle and $F_\lambda$ is closed by Eq.~(\ref{eq: chain compatibility}) and $\partial_\mathrm{inc}\lambda=0$, by which the pairing depends only on the corresponding homology
and cohomology classes.

In particular, $[b]\in H^1[\cC^\mathrm{inc}]$ is the pull-back of the cohomology class $[\beta]\in H^2[\cC^\cP]$, and the Pauli cohomological witness may therefore be viewed as a refinement of the general LCS incidence obstruction: the multiplication complex resolves each context into elementary commuting products, while coarse-graining those internal multiplication relations recovers the incidence complex.

\begin{proposition}\label{prop: b <-> beta}
    For a LCS with quantum solution in $\cP_n$, the context phase vector $b$ is the pullback of the Pauli multiplication cocycle $\beta$ to the LCS context complex, that is, 
    $[b]=\phi^*[\beta]$. Moreover, an incidence cycle $\lambda\in\ker A^T$ corresponds to a closed $2$-chain $F_\lambda$ under the pairing in Eq.~(\ref{eq: cohomological pairing}).
\end{proposition}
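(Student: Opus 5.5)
The argument is essentially a repackaging of the computation carried out just before the statement. I would organise it in three steps: set up the Pauli data, show that $b=\phi_1^*\beta$ entrywise, and then read off the two assertions.

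\textbf{Step 1: Pauli data.} Fix the quantum solution $X_v=\pm T_v$ in $\cP_n$ for $v\in\cX\subset V$. Absorb the signs into the Hermitian section $v\mapsto T_v$, extended to all of $V$. Define $\beta$ by $T_{v+w}=(-1)^{\beta(v,w)}T_vT_w$ on commuting pairs. Associativity of operator multiplication makes $\beta$ a $2$-cocycle for $\partial_\cP$. Changing the section by $T_v\mapsto(-1)^{\gamma(v)}T_v$ shifts $\beta$ by the coboundary $\gamma\circ\partial_\cP$. On $\cX$ the same change shifts $b$ by $A\gamma|_\cX$, which is consistent with the class $[b]\in\coker(A)$.

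\textbf{Step 2: $b=\phi_1^*\beta$ entrywise.} For a context $C=\{v_1,\dots,v_l\}$ with a chosen ordering, multiply out $T_{v_1}\cdots T_{v_l}$ using the partial sums $s_j$. Each step is legitimate because all elements of $C$ lie in one isotropic subspace, so $s_{j-1}$ and $v_j$ commute. Since $s_l=0$ and $T_0=\one$, this gives $b_C=\sum_{j\ge2}\beta(s_{j-1},v_j)=\langle\beta,F_C\rangle=(\phi_1^*\beta)([C])$.

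One bookkeeping point must be checked: $F_C$ depends on the ordering, but different orderings change $F_C$ by a $\partial_\cP$-boundary of $3$-cells. Pairing with the cocycle $\beta$ is therefore unaffected. In any case $b_C$ is defined independently of the ordering, so the identity holds for every ordering.

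\textbf{Step 3: the two assertions.} Using the chain-map identity $\partial_\cP\phi_1=\phi_0\partial_{\mathrm{inc}}$ from Eq.~(\ref{eq: chain compatibility}), $\phi$ induces maps on cohomology. Since $\phi_1^*\beta=b$ holds at cochain level, it follows that $[b]=\phi^*[\beta]$. Under a change of section both sides shift by the images of $\gamma$, which confirms the statement is well defined on classes.

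For an incidence cycle $\lambda$, the chain-map identity gives $\partial_\cP F_\lambda=\phi_0\partial_{\mathrm{inc}}\lambda=\phi_0A^T\lambda=0$, so $F_\lambda$ is closed. Linearity then yields $b^T\lambda=\sum_C\lambda_C\langle\beta,F_C\rangle=\langle\beta,F_\lambda\rangle$. This value depends only on $[\beta]$ and $[F_\lambda]$, because a cocycle pairs trivially with boundaries and a coboundary pairs trivially with cycles. This is Eq.~(\ref{eq: cohomological pairing}).

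The main obstacle is to state precisely in which degrees the complexes live and what $\phi^*$ means, given the degree shift (incidence degrees $1,0$ against cellular degrees $2,1$). The rest is the computation already recorded above.
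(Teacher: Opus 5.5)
Your proposal is correct and follows the paper's argument essentially step for step: the operator computation $b_C=\langle\beta,F_C\rangle=(\phi_1^*\beta)([C])$, the chain-map identity giving $\partial_\cP F_\lambda=\phi_0A^T\lambda=0$, and the cocycle--cycle pairing that yields Eq.~(\ref{eq: cohomological pairing}). You should drop your first justification of ordering-independence, because it is false, though nothing depends on it: for $C=\{a,b,a+b\}$ the orderings $(a,b,a+b)$ and $(b,a,a+b)$ give chains differing by $[a|b]+[b|a]$, which pairs to $1$ with any bilinear form $\beta'$ on $V$ having $\beta'(a,b)\neq\beta'(b,a)$, and such forms are $2$-cocycles, so reorderings need not change $F_C$ by a boundary; your fallback (ordering-independence of the commuting product $\prod_{v\in C}T_v$, i.e.\ symmetry of the Pauli cocycle on commuting pairs) already gives $b_C=\langle\beta,F_C\rangle$ for every ordering.
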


\textbf{Group cohomology.} Cohomology enters at a second level, when a symmetry group acts on the Pauli observables and on the phase cocycle \cite{OkayRobertsBartlettRaussendorf,Raussendorf2016,OkayRaussendorf2020}. Suppose a group $G$ acts on the projective Pauli configuration and preserves the context incidence structure. For simplicity, we assume this group to correspond to a subgroup of the projective Clifford group, which is an extension of the group of symplectic transformations on the symplectic space $(V,\omega)$,
\begin{align*}
    1\longrightarrow V
    \longrightarrow \overline{\mathrm{Cliff}}_n
    \longrightarrow \Sp(2n,2)
    \longrightarrow 1\; .
\end{align*}
Its action on Pauli operators is defined up to sign only, which can be encoded in the phase function $\Phi:G\times\cV\ra\mathbb F_2$ which for every $g\in G$ with $\Phi_g(v):=\Phi(g,v)$ is given by 
\begin{align*}
    U_gT_vU_g^\dagger
    =(-1)^{\Phi_g(v)}T_{gv}\; .
\end{align*}
The functions $\Phi_g$ measure the difference between a projective symmetry and a strict symmetry of the chosen Pauli representatives.

\begin{proposition}\label{prop: phase function}
    With the notation above:
    
    \begin{enumerate}
        \item
        The family $\Phi=\{\Phi_g\}_{g\in G}$ is a group
        $1$-cocycle for the permutation action of $G$ on
        $\mathbb{F}_2^\cX$.
        
        \item For every $g\in G$,
        \begin{align}\label{eq: action on constraint vector}
            b(gC)-b(C)
            =(A\Phi_g)(C)\qquad(C\in\mathcal C)\; .
        \end{align}
        Consequently, $g b-b\in\im(A)$, hence, the contextuality class $[b]\in\operatorname{coker}A$ is $G$-invariant.
        
        \item If the sign vector itself is $G$-invariant, then $\Phi_g\in\ker A$ for $g\in G$, and thus
        \begin{align*}
            [\Phi]\in H^1(G,\ker A)\; .
        \end{align*}
        This class vanishes precisely when the observables can be rephased, without changing the context signs, so that the symmetry acts strictly: $U_gT'_vU_g^\dagger=T'_{gv}$.
    \end{enumerate}
\end{proposition}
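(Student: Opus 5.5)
The plan is to work directly with the cocycle identity for a product of two group elements and then read off the three claims. Throughout, the action of $G$ on $\mathbb F_2^\cX$ is the permutation action $(g\cdot\phi)(v)=\phi(g^{-1}v)$, and the symbol $gC$ denotes the image context. All phases are signs, so $\Phi_g$ takes values in $\mathbb F_2$.

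For (1), I will conjugate $T_v$ by $U_{gh}$ in two ways. Since $U_gU_h$ equals $U_{gh}$ up to a scalar, and scalars drop out under conjugation, we get
\begin{align*}
    (-1)^{\Phi_{gh}(v)}T_{ghv}
    =U_gU_hT_vU_h^\dagger U_g^\dagger
    =(-1)^{\Phi_h(v)}U_gT_{hv}U_g^\dagger
    =(-1)^{\Phi_h(v)+\Phi_g(hv)}T_{ghv}\; .
\end{align*}
This gives $\Phi_{gh}(v)=\Phi_h(v)+\Phi_g(hv)$. The cocycle convention has to be matched with the direction of the action. It is cleanest to let $G$ act on functions by precomposition, i.e.\ to use the right action $(\phi\cdot g)(v)=\phi(gv)$, or equivalently to reindex via $g^{-1}$. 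With that choice the identity above reads $\Phi_{gh}=\Phi_h+h^*\Phi_g$, which is the $1$-cocycle condition in the corresponding convention. I expect the only subtlety in this step to be fixing this convention consistently.

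For (2), take a context $C=\{v_1,\dots,v_l\}$ with $\prod_i T_{v_i}=(-1)^{b(C)}\one$. Conjugating by $U_g$ and inserting the phase function gives
\begin{align*}
    (-1)^{b(C)}\one=\prod_i(-1)^{\Phi_g(v_i)}T_{gv_i}=(-1)^{\sum_i\Phi_g(v_i)+b(gC)}\one\; ,
\end{align*}
where the last step uses that $gC$ is again a context, by incidence preservation. Here the product over $gC$ is taken in the transported order, which is harmless because operators within a context commute. Rearranging gives $b(gC)-b(C)=\sum_{v\in C}\Phi_g(v)=(A\Phi_g)(C)$. Hence $g b-b\in\im(A)$, with the same reindexing convention as in (1), and therefore the class $[b]\in\coker A$ is $G$-invariant.

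For (3), if $b$ is $G$-invariant then (2) gives $A\Phi_g=0$ for every $g$. So $\Phi$ is a $1$-cocycle with values in the $G$-submodule $\ker A$ and defines a class in $H^1(G,\ker A)$. For the ``precisely when'' statement, consider rephasings $T'_v=(-1)^{k(v)}T_v$. These leave all context signs unchanged exactly when $Ak=0$, i.e.\ $k\in\ker A$. Under such a rephasing the phase function becomes $\Phi'_g(v)=\Phi_g(v)+k(gv)+k(v)$. This is $\Phi$ shifted by the coboundary of $k$, with signs irrelevant over $\mathbb F_2$. Therefore a strict action, $\Phi'\equiv0$, is achievable by a sign-preserving rephasing if and only if $\Phi$ is a coboundary of some $k\in\ker A$, i.e.\ if and only if $[\Phi]=0$. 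This is the same mechanism as in the proof of Prop.~\ref{prop: symmetric solutions}.
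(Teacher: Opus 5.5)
Your proposal is correct and follows essentially the same route as the paper: for (1) you conjugate by $U_gU_h\propto U_{gh}$ in the pullback convention $(g^*s)(v)=s(gv)$ (the paper's $\Phi_{hg}=\Phi_g+g^*\Phi_h$ is your identity with $g,h$ renamed), for (2) you conjugate the context product, and for (3) you observe that rephasing by $s$ shifts $\Phi$ by the coboundary $s+g^*s$, with $s\in\ker A$ being exactly the sign-preserving rephasings. The only cosmetic difference is that you first announce the left action $\phi\mapsto\phi(g^{-1}\cdot)$ and then switch to precomposition, whereas the paper uses the pullback convention throughout.
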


\begin{proof}
    Compatibility with multiplication in $G$ gives, for $g,h\in G$,
    \begin{align*}
        U_hU_gT_vU_g^\dagger U_h^\dagger
        =(-1)^{\Phi_g(v)}U_hT_{gv}U_h^\dagger
        =(-1)^{\Phi_g(v)+\Phi_h(gv)}T_{hgv}\; .
    \end{align*}
    With the pullback convention $(g^*s)(v)=s(gv)$, this yields the $1$-cocycle identity $\Phi_{hg}=\Phi_g+g^*\Phi_h$.
    
    Now, since the operators in a context commute, conjugation by $U_g$ gives
    \begin{align*}
        (-1)^{b(C)}\one
        =U_g\left(\prod_{i\in C}T_v\right)U_g^\dagger
        =(-1)^{\sum_{v\in C}\Phi_g(v)}\prod_{v\in C}T_{gv}
        =(-1)^{\sum_{v\in C}\Phi_g(v)+b(gC)}\one\; ,
    \end{align*}
    proving Eq.~(\ref{eq: action on constraint vector}). In particular, since $gb-b$ is an incidence coboundary, $[b]\in\operatorname{coker}A$ is invariant. If $b$ is itself $G$-invariant, Eq.~(\ref{eq: action on constraint vector}) gives $A\Phi_g=0$, and the cocycle takes values in $\ker A$.
    
    Finally, under rephasing of the Hermitian Pauli representatives by $T'_v=(-1)^{s(v)}T_v$, we have
    \begin{align*}
        \Phi_g'(v)
        =\Phi_g(v)+s(v)+s(gv)\; ,
    \end{align*}
    hence, changing the Hermitian section changes $\Phi$ by a group $1$-coboundary. If $s\in\ker A$, this rephasing leaves every context sign unchanged. Consequently, a rephasing preserving $b$ makes all symmetry phases vanish exactly when $0=[\Phi]\in H^1(G,\ker A)$.
\end{proof}

There are therefore two distinct levels of symmetry in the case of quantum solutions built from $n$-qubit Pauli operators. The projective Clifford action is a symmetry of the finite symplectic compatibility geometry, whereas strict symmetry of the chosen Hermitian observables additionally requires trivialising the phase cocycle. No analogous extra datum occurs for the canonical projector realization $X_v=\zeta^{P_v}_2=\one-2P_v$, because projectors are phase-invariant.\\

\textbf{Projector reduction.} Unlike Pauli solutions, the projector constructions do not require a nontrivial cocycle, their obstruction is already present at the incidence level. In fact, since projectors are phase-invariant, no (phase) information is lost when reducing the arrangement to its incidence relations. While we leave open whether a nontrivial multiplication-phase cocycle analogous to the Pauli case can play such a role beyond the binary case, we point out that - beyond being equivalent descriptions of the same underlying contextual obstruction (see Prop.~\ref{prop: b <-> beta} and Prop.~\ref{prop: phase function}) - Pauli-based quantum solutions of binary LCS can be integrated into the language of projector-context arrangements.

For $p=2$, the canonical projector construction does not require the projectors to have rank one.  Let $C=\{P_1,\ldots,P_k\}$ be any orthogonal resolution of the identity on $\C^d$, that is, $P_iP_j=\delta_{ij}P_i$ and $\sum_{i=1}^k P_i=\one$. As in Def.~\ref{def: canonical LCS}, we associate to a projector-context arrangement the binary LCS
\begin{equation}\label{eq: projector-context LCS}
   \sum_{P\in C}x_P\equiv 1\pmod 2\qquad\forall C\in\cC\; .
\end{equation}

\begin{lemma}\label{lm: projective reduction}
    For every finite family of orthogonal resolutions of the identity, the associated LCS in Eq.~\eqref{eq: projector-context LCS} has the quantum solution $X_P=\one-2P$.
\end{lemma}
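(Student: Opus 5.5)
The plan is to verify directly the three conditions in Eq.~(\ref{eq: qsol}) with $n=2$, exactly as in the proof of Prop.~\ref{prop: canonical qsol}, but now for projectors of arbitrary rank. First, for any orthogonal projector $P$ we have $P^2=P=P^\dagger$. Hence $X_P=\one-2P$ is self-adjoint and satisfies
\begin{align*}
    X_P^2=\one-4P+4P^2=\one\; .
\end{align*}
So $X_P$ is a unitary involution, which gives the $2$-torsion condition $X_P^2=\one$.

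Second, fix a context $C=\{P_1,\ldots,P_k\}$ with $P_iP_j=\delta_{ij}P_i$. Any two members $P_i,P_j$ commute, since both $P_iP_j$ and $P_jP_i$ equal $\delta_{ij}P_i$. Therefore $X_{P_i}=\one-2P_i$ and $X_{P_j}=\one-2P_j$ commute as polynomials in commuting operators. This settles the commutation condition for any two variables that occur in a common row.

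The only step requiring a little care is the row relation. If the same projector $P$ occurs in several contexts, it corresponds to a single variable $x_P$, and each context is a set, so $P$ appears at most once per row. The row relation is then $\prod_{P\in C}(\one-2P)=-\one$. I will expand the product using orthogonality: every cross term $P_{i_1}\cdots P_{i_m}$ with distinct indices and $m\geq 2$ vanishes. This leaves
\begin{align*}
    \prod_{i=1}^k(\one-2P_i)=\one-2\sum_{i=1}^kP_i=\one-2\cdot\one=-\one=\zeta_2^{1}\one\; ,
\end{align*}
which is the required constraint with $b_C=1$. Equivalently, one can argue spectrally: on the range of each $P_i$ exactly one factor acts as $-1$ and all others act as $+1$, and these ranges span $\C^d$.

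Since all three conditions hold for every context in the finite family, the operators $X_P=\one-2P$ form a $d$-dimensional quantum solution of the LCS in Eq.~\eqref{eq: projector-context LCS}. No step poses a real obstacle; the only point worth stating explicitly is that, unlike the rank-one case, ranks play no role here, because only orthogonality and completeness of each resolution are used.
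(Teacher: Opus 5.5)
Your proposal is correct and follows essentially the same route as the paper's proof. Like the paper, you check $X_P^2=\one$, note that projectors within a resolution commute, and expand $\prod_{P\in C}(\one-2P)$, where orthogonality kills all mixed terms and leaves $\one-2\sum_{P\in C}P=-\one$.
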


\begin{proof}
    Clearly $X_P^2=\one$, and projectors belonging to the same resolution commute. Since distinct projectors in a resolution are orthogonal, all mixed terms in the product vanish, hence,
    \begin{equation*}
       \prod_{P\in C}(\one-2P)
        =\one-2\sum_{P\in C}P
        =-\one\; .\qedhere
    \end{equation*}
\end{proof}

The familiar Pauli parity proofs thus admit projector refinements that fit the projector-context framework of Sec.~\ref{sec: incidence reduction}, for which classical unsatisfiability is decided by the existence of an incidence cycle via Prop.~\ref{prop: cycle criterion}, which further relates to the symmetry of the projector-context arrangement via Prop.~\ref{prop: orbit cycle}. We close by highlighting these aspects in two familiar examples.\\

\textbf{The Mermin--Peres square.} The Mermin--Peres square is usually presented as nine two-qubit Pauli observables arranged into six commuting triples, with an odd number of negative operator products \cite{Mermin,Peres}. Spectral refinement places the same phenomenon inside Peres's $24$ ray geometry. In particular, this geometry contains the well-known $18$ ray-$9$ context parity proof of Cabello, Estebaranz and Garc\'ia-Alcaine \cite{CabelloEstebaranzGarciaAlcaine,WaegellAravind2013}.

Let $\cP_\square=\{P_1,\ldots,P_{18}\}$ denote these rank-one projectors and $\cC_\square=\{C_1,\ldots,C_9\}$ the nine orthonormal tetrads. Every $P\in\cP_\square$ occurs in two members of $\cC_\square$. Hence, the canonical binary system $\sum_{P\in C}x_P=1\pmod 2$ for all $C\in\cC_\square$ has the quantum solution $X_P=\one-2P$ on $\C^4$.

The obstruction to a classical solution can be viewed in terms of symmetry. Regard the nine bases (not Pauli operators) as the vertices of a $3\times3$ grid and the eighteen projectors as its edges, where an edge joins the two bases in which the corresponding projector occurs. The incidence graph is the Cartesian product $K_3\square K_3$ of the complete graph on three vertices $K_3$, whose automorphism group is
\begin{align*}
   G_\square
   \cong (S_3\times S_3)\rtimes C_2\; ,\qquad |G_\square|=72\; .
\end{align*}
$G_\square$ is transitive both on contexts and projectors, hence, has a single projector and context orbit, and the complete orbit-incidence matrix is simply $N_\square=(2)=(0)$ over $\mathbb F_2$. Taking $\alpha=(1)$ gives
\begin{align*}
    N_\square\alpha\equiv 0\pmod 2\; ,\qquad
    |\,\cC_\square\,|\alpha
    =9\equiv 1\pmod 2\; .
\end{align*}
Classical unsatisfiability therefore follows from Prop.~\ref{prop: orbit cycle}; it explains the familiar parity argument without referring to Pauli multiplication: the invariant context chain $\lambda_\square=\sum_{C\in\cC_\square}[C]$ has
\begin{align*}
   A^T\lambda_\square\equiv 0\pmod 2\; ,\qquad
   \varepsilon(\lambda_\square)\equiv 1\pmod 2\; .
\end{align*}
Here, the Pauli signs of the original Mermin-Peres square have been traded for the uniform relation $b=\mathbf{1}_\cC$ attached to every orthogonal projector resolution as in Eq.~(\ref{eq: projector-context LCS}).

Comparing with Sec.~\ref{sec: HJS}, we also note for $n=2$, the HJS construction consists of $18$ rays and $9$ contexts, with every ray contained in exactly two contexts. Labelling the contexts by $B_{ij}$, the $u$-type rays join pairs of contexts within a common row and the $v$-type rays pairs within a common column, that is, the resulting incidence graph is $K_3\square K_3$, the same abstract $18$ ray-$9$-context incidence graph underlying the parity proof in Ref.~\cite{CabelloEstebaranzGarciaAlcaine} above.

Finally, the group $G_\square$ also relates to the geometry of two-qubit Paulis: up to phases, the nontrivial Pauli observables form the symplectic polar space $W(3,2)$ with automorphism group
\begin{align*}
   \Sp(4,2)\cong S_6\; ,\qquad |\Sp(4,2)|=720\; .
\end{align*}
There are ten Mermin squares in $W(3,2)$ \cite{SanigaPlanatPracnaHavlicek2007,HolweckBoutraySaniga2022}, whose stabiliser groups have order $\frac{720}{10}=72$, recovering $G_\square$. Thus the symmetry group of the projector incidence proof is the stabiliser of the corresponding Mermin configuration inside the ambient two-qubit symplectic geometry.\\

\textbf{Mermin's pentagram.} A similar analysis applies to Mermin's three-qubit pentagram. The pentagram consists of ten Pauli observables arranged into five commuting quadruples \cite{Mermin}. Abstractly, the five contexts may be identified with the vertices of $K_5$ and the ten observables with its edges: every observable belongs to the two contexts corresponding to its endpoints. The abstract automorphism group is therefore $G_\star\cong S_5$, it is the stabiliser of a Mermin pentagram in the ambient group $\Sp(6,2)$ for which the following exceptional relation holds \cite{LevayPlanatSaniga2013},
\begin{equation}\label{eq:Sp6E7}
    \Sp(6,2)\cong W(E_7)/\{\pm1\}\; .
\end{equation}
The Mermin pentagrams in $W(5,2)$ form a single $\Sp(6,2)$-orbit with stabiliser $S_5$, hence, there are
\begin{align*}
   \frac{|\Sp(6,2)|}{|S_5|}
    =\frac{1451520}{120}
    =12096
\end{align*}
Mermin pentagrams in total \cite{LevayPlanatSaniga2013,LevaySzabo2017,HolweckBoutraySaniga2022}.

For the projector formulation we use the standard refinement of the Kernaghan--Peres geometry associated with the pentagram \cite{KernaghanPeres1995}. Its five Pauli contexts generate forty joint eigenrays and a larger family of orthogonal bases \cite{WaegellAravind2013}. A particularly economical projector parity refinement consists of thirty rank-two projectors $\cP_\star=\{P_1,\ldots,P_{30}\}$, arranged into fifteen contexts $\cC_\star=\{C_1,\ldots,C_{15}\}$, where every context consists of four mutually orthogonal rank-two projectors resolving the identity on $\C^8$, and every projector occurs in exactly two contexts \cite{Toh2013}:
\begin{align*}
   \sum_{P\in C}x_P\equiv 1\pmod 2\qquad \forall C\in\cC_\star\; .
\end{align*}
Again, the associated LCS has the quantum solution $X_P=\one-2P$.

The parity contradiction again follows from Prop.~\ref{prop: orbit cycle}. Let $G$ be any subgroup of the automorphism group of this projector configuration and let $\cP_\star=\bigsqcup_i \cP_i$ and $\cC_\star=\bigsqcup_j O_j$ be the orbit decompositions. For the corresponding dual incidence matrix $N$, each row satisfies $\sum_jN_{ij}=2\equiv 0\pmod 2$ since every projector has total context degree two. Hence, with $\alpha=(1,\ldots,1)^T$, we find
\begin{align*}
    N\alpha&\equiv 0\pmod 2 &
    \sum_j|O_j|\alpha_j
    &=|\cC_\star|=15\equiv 1\pmod 2\; .
\end{align*}

Thus in both examples the projector formulation retains the all-context parity obstruction independently of how much of the symmetry of the parent Pauli configuration survives the spectral refinement. For the Mermin--Peres square the $18$-ray, $9$-context refinement retains a highly transitive grid symmetry. For Mermin's pentagram, the ten-observable Pauli configuration is $S_5$-symmetric inside $W(5,2)$, whereas a chosen $30$ projector-$15$ context refinement need not inherit the full $S_5$ action. This does not affect the obstruction: every projector has context degree two while the number of contexts is odd, hence, the all-context chain remains a nontrivial incidence cycle.

\end{document}